\documentclass[11pt]{article}
\input{_style.sty}

\title{
PureSuperQMA(exp) = BellPureSymQMA(poly) = QMA \\ 
via Dimension-Free Bosonic Argmax}
\date{\today}

\author{William Gay\thanks{{\tt University of Illinois, Urbana-Champaign}. {\tt whgay2@illinois.edu}. }  
\and Fernando Granha Jeronimo\thanks{{\tt University of Illinois, Urbana-Champaign}. {\tt granha@illinois.edu}. }
\and Lenny Liu\thanks{{\tt University of Illinois, Urbana-Champaign}. {\tt hengyu2@illinois.edu}. }
\and Itai Leigh\thanks{{\tt Tel-Aviv University}. {\tt itai.leigh@mail.huji.ac.il}. }
\and Pei Wu\thanks{{\tt Penn State University}. {\tt pei.wu@psu.edu}. }
\and Haochen Xu\thanks{{\tt Penn State University}. {\tt hpx5065@psu.edu}. }
}

\begin{document}

\maketitle
\begin{abstract}
Pure-state consistency problems naturally lead to quantum proof systems in which a single pure witness must satisfy many acceptance constraints. The corresponding class $\mathsf{PureSuperQMA}$ was previously known to lie between $\mathsf{QMA}$ and $\mathsf{QMA}(2)$, and Kamminga and Rudolph (ITCS'26) conjectured that both containments are strict. In this paper, we prove the following surprising complexity collapses 
\[
    \mathsf{QMA} = \mathsf{PureSuperQMA} = \mathsf{PureSuperQMA}(\mathrm{exp}) = \mathsf{BellPureSymQMA}(\poly)
\]
Here $\mathsf{PureSuperQMA}(\mathrm{exp})$ allows exponentially many checks which are uniformly indexed and efficiently generated, while requiring an inverse-polynomial violation margin and an inverse-polynomial fraction of violated checks for the NO cases.
$\mathsf{BellPureSymQMA}(\poly)$ is a related model that requires the prover to give the verifier polynomially many copies of a pure state, which the verifier measures separately with logarithmic output length for each local measurement, before processing the outcomes jointly.

The main technical ingredient is a dimension-free stability bound for symmetric tensor states. Our simulations use polynomially many witness registers and combine a
random-pair SWAP test with a permutation-invariant lift of the original
verification procedure. The key step is to show that, on the symmetric subspace, the extremal verification value is close to that of some tensor-power witness with dimension-independent error. We prove this by choosing the tensor power of maximum overlap with an extremal symmetric eigenvector, so the resulting optimality eliminates the first-order error, while the stability bound controls all higher-order errors. Applying this argument to the two verification models yields both simulations, as a consequence, exact $k$-local pure-state consistency is $\mathsf{QMA}$-complete for every fixed $k\ge2$, and so are the corresponding exact bosonic and fermionic pure $N$-representability problems.

\end{abstract}
\newpage

\tableofcontents
\newpage

\section{Introduction}

Much of the experimentally accessible information about a many-body quantum
system is local. A molecule may contain many interacting particles, yet an
experiment typically probes only a few of them at a time. Likewise, the terms
of a physical Hamiltonian usually act on small subsets of the particles. The energy of such a local Hamiltonian $H=\sum_i H_i$ is determined by the reduced
density matrices on the supports of the $H_i$'s. This suggests optimizing over
these much smaller matrices instead of the exponentially large global state.
The density matrices cannot be varied independently, however, leading to the following local-to-global problem: 

{\centering Given density matrices $\rho_1,\ldots,\rho_m$ on small, possibly
overlapping sets of qubits, do they arise as the reduced states of one global
state?\par} 

This is the \emph{Consistency of Local Density Matrices} problem
(\textsf{CLDM}), also often called the quantum marginal problem. For systems of
identical particles, the corresponding compatibility question is known as
$N$-representability. These questions
have played a central role in quantum chemistry since the early work on
fermionic density matrices
\cite{Coleman1963,Klyachko2006}.

Liu placed mixed-state \textsf{CLDM} in $\mathsf{QMA}$, the quantum analogue of
$\mathsf{NP}$, and proved $\mathsf{QMA}$-hardness under polynomial-time Turing
reductions \cite{Liu2006}. Broadbent and Grilo later proved Karp hardness for
$k$-local consistency for every $k\ge5$, establishing completeness under the
standard many-one notion \cite{BroadbentGrilo2022}. Kamminga and Rudolph
subsequently sharpened the locality to every $k\ge2$
\cite{KammingaRudolph2026}. Liu, Christandl, and Verstraete obtained the
corresponding $\mathsf{QMA}$-completeness result for fermionic
$N$-representability, with hardness under Turing reductions
\cite{LiuChristandlVerstraete2007}. These are, alongside the Local Hamiltonian problem (\textsf{LH}), among the most physically natural
complete problems for $\mathsf{QMA}$. Their hardness says that
passing from a global state to its local data saves space, but not necessarily
computation.

In a way, \textsf{CLDM} is very closely related to \textsf{LH}, the problem of deciding if the lowest eigenvalue of a given local Hamiltonian is below or above some thresholds \cite{KitaevShenVyalyi2002}: \textsf{LH} optimizes a linear functional over the set of compatible local marginals, while \textsf{CLDM} asks whether a given collection belongs to this set. The two problems are therefore optimization and feasibility views of the same local data.


There are two versions of the local-to-global consistency problem. In the
standard version, the global state may be mixed. In the pure-state version, one
asks whether there is a pure state $\ket{\psi}$ whose reduced states are close
to the prescribed local marginals. Pure-state marginal problems arise naturally
in entanglement theory and quantum coding theory \cite{YuEtAl2021}. The
pure-state version is therefore both a natural analogue of \textsf{CLDM} and a
well-studied compatibility problem in its own right.

\subsection{Purity and quantum proof systems}

To decide \textsf{CLDM} in \textsf{QMA}, the verifier cuts a long witness into several blocks, treating them as several copies of a single witness. Choosing a local density matrix at random, the registers are used to estimate the corresponding reduced density matrix. In reality if the given long witness does not comprise copies of a small witness, the result is an estimation of the average of the reduced density matrices, which is the same as the reduced state of the average of the registers. This is a legitimate candidate for a global state that can be compatible or incompatible with the given local density matrices.

For an ordinary $\mathsf{QMA}$ verifier, requiring the witness to be pure makes
no difference. The acceptance probability of a probabilistic ensemble of pure states is the corresponding expectation of the acceptance probabilities of the individual pure states. A maximizer from the ensemble is both a pure state and has acceptance probability at least as large as the ensemble's. Applying it to the protocol above means we may assume the long witness is pure, but the average of its registers' reduced states might not be. Consequently the pure version of \textsf{CLDM}, \textsf{PureCLDM}, was not known, and was suspected not to be in \textsf{QMA}. 

Aharonov and Regev introduced a characterization of \textsf{QMA} that relates to the \textsf{CLDM} problem. They defined super-verifiers as families of verification
circuits with prescribed intervals for their acceptance probabilities, interpreting a family as accepting a witness if all its corresponding acceptance probabilities lie within the intervals, and rejecting if a significant number of them are significantly far
\cite{AharonovRegev2003}. Similarly to the protocol above, their simulation treats registers as copies of the witness to estimate its acceptance probability and their analysis replaces the several witness registers by their average one-register marginal, which is also a valid witness. Kamminga and Rudolph followed them and introduced the pure-witness version, \textsf{PureSuperQMA}, noting the same simulation fails for it in a similar way as the above (usual) \textsf{QMA} protocol and the pure-state \textsf{CLDM} problem \cite{KammingaRudolph2026}. The average marginal of a dishonest Merlin's witness might be mixed, so the pure state soundness promise gives no bound on it.

The difference between the pure and mixed \textsf{CLDM} problems appears already for one qubit. For each
$P\in\set{X,Y,Z}$, let $E_P=(I+P)/2$ and require acceptance probability
$1/2$. The maximally mixed state satisfies all three requirements. A pure state
has a Bloch vector $(x,y,z)$ with $x^2+y^2+z^2=1$, so it misses at least one
target by $1/(2\sqrt3)$. 

So what model \emph{can} decide \textsf{PureCLDM}? If the verifier could know in advance it got copies of a pure state in the different registers, it would be ideal. In fact, it is enough to know the witness's registers are unentangled. Under the requirement of a product witness, pure factors are as good as mixed ones: the acceptance probability of $\rho_1\otimes\cdots\otimes\rho_k$ is the convex combination of the acceptance probabilities of $\ket{\psi^1_{i_1}}\otimes\cdots\otimes\ket{\psi^k_{i_k}}$ where $\ket{\psi^j_{i_j}}$ is an eigenvector of $\rho_j$.

Kobayashi, Matsumoto, and
Yamakami introduced $\mathsf{QMA}(k)$, where Arthur receives $k$ unentangled
quantum witnesses and may measure them jointly
\cite{KobayashiMatsumotoYamakami2003}. Several classical proofs can simply be
concatenated, so the class $\mathsf{MA}(k)$ equals to $\mathsf{MA}$.
The same argument fails for quantum proofs, since a dishonest
Merlin who sends all registers at once may entangle them, while the soundness
condition of $\mathsf{QMA}(k)$ only controls product witnesses.  Thus it remains plausible that the containment $\mathsf{QMA}\subseteq\mathsf{QMA}(2)$ might be strict.

The first evidence for the usefulness of unentanglement came from protocols
with short proofs. An ordinary logarithmic-size quantum witness can be absorbed
into the verifier after length-preserving amplification \cite{MarriottWatrous2005}, which gives
$\mathsf{QMA}_{\log}=\mathsf{BQP}$ \cite{MarriottWatrous2005}. In contrast,
Blier and Tapp showed that two logarithmic-size unentangled witnesses suffice
for $\mathsf{NP}$ with an inverse-polynomial gap, using a restricted verifier
\cite{BlierTapp2012}. Aaronson et al. later gave a constant-gap protocol for
$3\text{-}\mathsf{SAT}$ using many logarithmic-size unentangled witnesses
\cite{AaronsonEtAl2009}. Chen and Drucker retained comparable parameters even
when Arthur was restricted to separate nonadaptive measurements of the
witnesses \cite{ChenDrucker2010}. This measurement restriction became the
model $\mathsf{BellQMA}$, which was shown, when the witness sizes are polynomial, to equal $\mathsf{QMA}$ \cite{GharibianSikoraUpadhyay2013,BrandaoHarrow2017}.

Harrow and Montanaro introduced the product test, which uses two copies of a
multipartite pure state to test whether it is close to a product state
\cite{HarrowMontanaro2013}. They used it to prove error reduction for multiple
Merlins and to show that two Merlins can simulate polynomially many Merlins
\[\mathsf{QMA}(2) = \mathsf{QMA}(k)\]
Aaronson et al.\ also introduced the class $\mathsf{SymQMA}(k)$.  Here the
Merlins provide identical copies of a composite witness containing all $k$
proof registers, and Arthur takes the $j$-th subsystem from the $j$-th copy, and these extracted subsystems need not be pure.
 Also, Aaronson et al.\ showed unconditionally
that a $\mathsf{SymQMA}(k)$ protocol can be simulated by a
$\mathsf{QMA}(k)$ protocol with an inverse-polynomial loss in the
completeness soundness gap, Arthur performs random SWAP tests to enforce approximate consistency among the copies before running the symmetric verification procedure \cite{AaronsonEtAl2009}. Their paper obtained the resulting class equality conditionally, since amplification for $\mathsf{QMA}(k)$ was not yet known. Harrow and Montanaro's later amplification theorem removes this loss and makes the equality unconditional \cite{HarrowMontanaro2013}. Consequently,
\[
    \mathsf{SymQMA}(k)=\mathsf{QMA}(k)=\mathsf{QMA}(2)
\]
for polynomially bounded $k\ge2$. Thus, requiring the witnesses to be identical does not by itself reduce the power of the joint-measurement model.



For quite a while there was a hope to show \textsf{PureCLDM} was complete for \textsf{QMA}(2), as it was known to be in $\mathsf{QMA}(2)$ but not in \textsf{QMA}. Kamminga and Rudolph defined the aforementioned \textsf{PureSuperQMA} to show new bounds for \textsf{PureCLDM}, which hints that it is not likely complete for $\mathsf{QMA}(2)$. Recall \textsf{PureSuperQMA} is a class defined by super-verifiers that consider only pure states \cite{KammingaRudolph2026}. In a bit more detail, a YES instance has a pure witness that satisfies every check. On a NO instance, every pure witness violates an inverse-polynomial fraction of the checks by an inverse-polynomial amount. They showed
\begin{equation}
\mathsf{QMA}
\subseteq
\mathsf{PureSuperQMA}
\subseteq
\mathsf{QMA}(2)
\notag
\end{equation}
and proved a $\mathsf{PSPACE}$ upper bound for this class. They also showed that exact pure
local consistency is complete for $\mathsf{PureSuperQMA}$ for every fixed
$k\ge2$, with analogous completeness results for bosonic and fermionic pure
$N$-representability. Motivated by the power of purity and unentangled proof, they conjectured that both containments are strict (Conjecture 1.17 \cite{KammingaRudolph2026}). They also suggested that exponential check might capture all power of $\mathsf{QMA}(2)$, $\mathsf{PureSuperQMA}(\exp) = \mathsf{QMA}(2)$, as one possible formalization from the idea that the power of $\mathsf{QMA}(2)$ derives from purity. 

In $\mathsf{QMA}(2)$, Arthur receives two unentangled witnesses that need not be
equal and he may measure them jointly \cite{KobayashiMatsumotoYamakami2003}.
Whether this model is more powerful than $\mathsf{QMA}$ remains open
\cite{AaronsonEtAl2009}. Its relation to purity comes from the fact that a pure
bipartite state is a product state exactly when its marginals are pure. This
made it plausible that $\mathsf{PureSuperQMA}$ might lie strictly between
$\mathsf{QMA}$ and $\mathsf{QMA}(2)$.

Kamminga and Rudolph also introduced
$\mathsf{BellPureSymQMA}(\poly)$. On a YES instance, an honest Merlin supplies
polynomially many copies of one pure state. Arthur measures the copies
separately and gives the classical outcomes to a quantum referee. Without the
purity restriction,
the corresponding Bell models collapse to $\mathsf{QMA}$
\cite{GharibianSikoraUpadhyay2013,BrandaoHarrow2017}. Those proofs may produce
tensor powers of mixed density operators, while Bell-pure soundness applies
only to tensor powers of pure states. Kamminga and Rudolph proved
\begin{equation}
\mathsf{PureSuperQMA}
\subseteq
\mathsf{BellPureSymQMA}(\poly)
\subseteq
\mathsf{PSPACE}
\notag
\end{equation}
but left its relation to $\mathsf{QMA}$ open
\cite{KammingaRudolph2026}.

\subsection{Our results}

We show that the common-pure-state promise gives no additional verification
power in either model. Our main theorem is
\begin{theorem*}[Informal]
\label{thm:intro-main-collapse}
\begin{equation}
\mathsf{QMA}
=
\mathsf{PureSuperQMA}
=
\mathsf{PureSuperQMA}(\mathrm{exp})
=
\mathsf{BellPureSymQMA}(\poly).
\notag
\end{equation}
\end{theorem*}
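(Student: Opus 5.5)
The plan is to use the known chain $\mathsf{QMA}\subseteq\mathsf{PureSuperQMA}\subseteq\mathsf{PureSuperQMA}(\mathrm{exp})$ and $\mathsf{PureSuperQMA}\subseteq\mathsf{BellPureSymQMA}(\poly)$ from \cite{KammingaRudolph2026}. It therefore suffices to prove two upper bounds: $\mathsf{PureSuperQMA}(\mathrm{exp})\subseteq\mathsf{QMA}$ and $\mathsf{BellPureSymQMA}(\poly)\subseteq\mathsf{QMA}$. Both simulations have the same shape. Arthur asks for $N=\poly(n)$ witness registers. With some constant probability he picks a random pair $i\neq j$ and runs a SWAP test. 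Otherwise he runs a permutation-invariant lift $V_{\mathrm{sym}}$ of the original check. For super-verifiers, this lift samples a random check index $c$ from the uniformly indexed family, estimates its acceptance probability on many random registers, and accepts iff the estimate lies in the prescribed interval. For the Bell model, it applies the local measurements to the registers and runs the referee on the outcomes, averaged over register permutations.

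\textbf{Completeness.} The honest witness $\ket{\psi}^{\otimes N}$ passes every SWAP test. Its lifted acceptance probability is then close to the YES value by Chernoff bounds on the independent estimates.

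\textbf{Soundness.} Since the combined acceptance operator is Hermitian, it suffices to bound its top eigenvalue. The random-pair SWAP test rejects any state with inverse-polynomial probability unless the state has weight $1-\delta$ on $\mathrm{Sym}^N$. A perturbation argument then reduces the problem to bounding $\lambda_{\max}(P_{\mathrm{sym}}V_{\mathrm{sym}}P_{\mathrm{sym}})$ by a quantity close to $\max_{\phi}\bra{\phi}^{\otimes N}V_{\mathrm{sym}}\ket{\phi}^{\otimes N}$.

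The key step, which I expect to be the main obstacle, is a \emph{dimension-free bosonic argmax} lemma. Let $\ket{\Psi}\in\mathrm{Sym}^N$ be an extremal eigenvector. Choose $\ket{\phi}$ maximizing $|\langle\phi^{\otimes N}|\Psi\rangle|$, and expand $\ket{\Psi}$ in the "excitation" decomposition relative to $\ket\phi$: components with $k$ registers orthogonal to $\phi$. Maximality of the overlap is a first-order stationarity condition, so the one-excitation component vanishes. The cross term between $\phi^{\otimes N}$ and the first-order part therefore drops out of $\bra\Psi V_{\mathrm{sym}}\ket\Psi$. We still need a stability bound showing that higher excitation components contribute at most $O(\mathrm{poly}(k)/N)$ to the change in value, uniformly in the local dimension. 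I would obtain this from the fact that $V_{\mathrm{sym}}$ is an average of operators touching only $m\ll N$ registers, so each such operator sees only $O(km/N)$ of the excitation. The resulting bound depends on $m,N$ but not on dimension, which handles both exponentially many checks and arbitrary witness size.

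It then remains to show that a product state $\phi^{\otimes N}$ has low acceptance in the NO case. Here the pure soundness promise applies directly to $\ket\phi$: an inverse-polynomial fraction of checks is violated by an inverse-polynomial margin, so the estimator rejects. In the Bell model, pure Bell soundness applies to $\ket\phi^{\otimes\poly}$. Choosing $N$ a large polynomial makes the error terms smaller than the gap. The result is an inverse-polynomial gap, which standard $\mathsf{QMA}$ amplification \cite{MarriottWatrous2005} boosts, giving both containments.
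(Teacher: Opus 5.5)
\textbf{The key stability step fails as proposed.} Your architecture matches the paper's: the trivial containments, $N$ registers, a random-pair SWAP test plus a permutation-invariant lift, an argmax $\phi$ whose one-excitation component vanishes, and pure soundness applied to $\phi$. The problem is the step you flag as the main obstacle.

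You expand the quadratic form $\langle\Psi|V_{\mathrm{sym}}|\Psi\rangle$ in excitation sectors around $\phi$. You then hope that sectors with $k$ excitations change the value by only $O(km/N)$. But the overlap $\alpha=\langle\Psi|\phi^{\otimes N}\rangle$ has no dimension-free lower bound and can be exponentially small. So essentially all of $|\Psi\rangle$ may live in sectors with $k=\Theta(N)$, where your heuristic is vacuous; first-order stationarity only removes the $k=1$ sector.

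A simple example: take $(|0\rangle^{\otimes N}+|1\rangle^{\otimes N})/\sqrt2$ with $N\ge3$. Its argmax is $|0\rangle$, half its weight sits in the $k=N$ sector, and every $m$-local lift has quadratic form equal to the average of the two tensor-power values, not the value at $\phi$.

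\textbf{The missing idea} is to never expand the quadratic form. Instead, contract the eigenvalue equation against the argmax tensor power:
\[
\alpha\lambda=\langle\Psi|V_{\mathrm{sym}}|\phi^{\otimes N}\rangle .
\]
Because the check touches only $m$ (or $t$) registers, $V_{\mathrm{sym}}|\phi^{\otimes N}\rangle$ has at most $m$ transverse factors. So only overlaps $\langle\Psi|v_1\otimes\cdots\otimes v_k\otimes\phi^{\otimes(N-k)}\rangle$ with $k\le m$ enter. The quantitative argmax lemma bounds each of them by $\alpha\,e^k(ek/N)^{k/2}\prod_r\|v_r\|$. This uses maximality along the whole complex line $\phi+zv$: extract the $k$th Fourier coefficient on the circle of radius $\sqrt{k/(N-k)}$, then polarize. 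First-order stationarity alone is not enough. Every term carries the factor $\alpha$, so dividing by $\alpha$ compares $\lambda$ with a tensor-power value however small $\alpha$ is.

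\textbf{Branch weights.} Your weights do not balance. If the lifted branch runs with constant probability $p$, an honest Bell witness is rejected with probability up to $p/3$. On $(\bigvee^N\mathcal{H})^\perp$, however, the SWAP branch only guarantees rejection $(1-p)/(N-1)$, so your soundness bound falls below the completeness error. Taking $N$ larger makes this worse, not better. Both branches are permutation-invariant, so the rejection operator is exactly block diagonal and no perturbation argument is needed. The paper gives the lifted branch weight $\Theta(1/N)$: namely $1/N$ for Bell and $\vartheta=1/(4\delta N)$ for the count test. This keeps honest rejection a constant factor below the $1/(N-1)$ transposition gap. Alternatively you would have to pre-amplify so that the honest error is $o(1/N)$, which your plan does not do.

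\textbf{Error terms.} The stability error is not a function of $m$ and $N$ alone. In the Bell case the transverse vectors $(I-|u\rangle\langle u|)E_{j,a}|u\rangle$ only satisfy $\sum_a\|(I-|u\rangle\langle u|)E_{j,a}|u\rangle\|\le\sqrt R$. This gives error $O(m^2R/N)$, and it is exactly where the $O(\log n)$ output length is used. In the count case you need the transverse bound $\|(I-|u\rangle\langle u|)E|u\rangle\|\le 1/2$ together with the Bernstein derivative bound $|Q^{(j)}|\le 2^{j-1}(t)_j$ to get error $O(t^2/N)$.
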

Here $\mathsf{PureSuperQMA}(\mathrm{exp})$ permits $2^{\poly(n)}$ uniformly
indexed verification checks, and its soundness condition requires every pure
state to violate an inverse-polynomial fraction of these checks by an
inverse-polynomial amount. The class does not cover regimes in which either the
violation margin or the fraction of violated checks is exponentially small. In
the Bell setting, there are polynomially many local measurements, each with
$O(\log n)$ output bits, or equivalently, a polynomial-size outcome alphabet.
Formal definitions are given in \Cref{sec:quantum-classes}.

The nontrivial directions of \Cref{thm:intro-main-collapse} are
\begin{equation}
\mathsf{PureSuperQMA}(\mathrm{exp})
\subseteq
\mathsf{QMA}
\qquad\text{and}\qquad
\mathsf{BellPureSymQMA}(\poly)
\subseteq
\mathsf{QMA}.
\notag
\end{equation}
We prove both statements by explicit uniform transformations of the
corresponding verifiers. In each case, the compiled verifier uses only a
polynomial number of witness qubits and achieves an inverse-polynomial
completeness--soundness gap, which can then be amplified to the standard
constant gap using QMA amplification \cite{MarriottWatrous2005}. The reverse
containments are immediate. They use either a single check or a single copy.
For the pure-super compiler, let $\delta$ be the fraction of checks promised to be violated by probability at least $\eps$ in the NO case. We use
$N=O\left(\delta^{-1}\epsilon^{-4}\log^2(2/\delta)\right)$ witness blocks and
obtain an unamplified gap of $\Omega(1/N)$, when there are
$M\le2^{\poly(n)}$ uniformly indexed checks and
$\epsilon,\delta\ge1/\poly(n)$. For the Bell-pure compiler, we use
$N=\Theta(m^2R)$ blocks and again obtain a gap of $\Omega(1/N)$, where
$m=\poly(n)$ is the number of local measurements and
$R=2^{\ell_{\mathrm{out}}}=\poly(n)$ is the size of the outcome alphabet. If
each witness register contains $q$ qubits, the compiled witness contains $qN$
qubits in total.

Combining the completeness theorems of Kamminga and Rudolph with our collapse
shows that $k\text{-}\mathsf{PureCLDM}_1$ is $\mathsf{QMA}$-complete for every
fixed $k\ge2$ \cite{KammingaRudolph2026}. The exact bosonic and fermionic pure
$N$-representability promise problems are $\mathsf{QMA}$-complete as well. The
mixed and pure problems have different feasible sets and very different
geometry, but they belong to the same quantum proof class. The theorem also
shows that a succinct exponential family of constraints adds no power at
inverse-polynomial margin and density. It therefore goes beyond the
polynomial-check case needed for the complete problems.

The equality $\mathsf{QMA}=\mathsf{PureSuperQMA}$ disproves the conjectured
strict containment. A general $\mathsf{QMA}(2)$ verifier receives two
unentangled states that need not be equal and can perform an arbitrary joint
measurement. Neither feature is present in the models considered here. These
common-state models therefore do not exceed $\mathsf{QMA}$. Whether the more
general $\mathsf{QMA}(2)$ model exceeds $\mathsf{QMA}$ remains open.

\subsection{Proof overview}

Both simulations ask Merlin for $N$ registers, an honest witness is
$\ket{u}^{\otimes N}$, and Arthur combines a lifted version of the source test
with a SWAP test on a uniformly random pair of registers
\cite{BuhrmanEtAl2001}. If $F_{a,b}$ swaps registers $a$ and $b$, then the
rejection effect of the SWAP branch is
\begin{equation}
\rmH_{\mathrm{asym}}
=
\frac1{\binom N2}
\sum_{1\le a<b\le N}
\frac{I-F_{a,b}}2,
\qquad
\rmH_{\mathrm{asym}}
\succeq
\frac1{N-1}
\paren*{I-\Pi_{\mathrm{sym}}^{(N)}}.
\notag
\end{equation}
(See the second relation as the random-transposition gap in
\Cref{thm:random-transposition-gap}.) Thus, the SWAP branch has the rejection effect vanishing on
$\bigvee^N(\mcH)$ and having gap of order $1/N$ on its orthogonal complement.
The lifted effects are permutation invariant, so the combined rejection
operator is block diagonal with respect to these two subspaces. It remains to
analyze the lifted test on a symmetric witness, which may still be highly
entangled.

A finite quantum de Finetti theorem would usually be the next step
\cite{ChristandlEtAl2007}. In trace norm, however, the error depends on
$\dim\mcH$, which is exponential in the size of one witness block. Results for
restricted measurements improve this dependence \cite{BrandaoHarrow2017}, but
the resulting i.i.d. components may have a mixed one-register state, so it's
not enough for a soundness promise stated only for pure tensor powers. We
therefore compare extremal test values directly, without the approximation of the
symmetric witness.

The comparison begins with a variational fact about symmetric tensors. Let
$0\ne\ket{\Psi^*}\in\bigvee^N(\mcH)$, and choose $\ket{u}$ to maximize
$\abs{\braket{\Psi^*}{u^{\otimes N}}}$. Changing $\ket u$'s phase, write
$\alpha=\braket{\Psi^*}{u^{\otimes N}}>0$.
Banach's theorem identifies this maximum with the injective norm of the
symmetric tensor \cite{Banach1938,HubenerEtAl2009}, the same quantity appears
in the geometric measure of entanglement \cite{WeiGoldbart2003}. Argmax
rounding was developed in recent work on sum-of-squares relaxations and quantum
de Finetti theorems \cite{JWXSoS,JWXdeFinetti}.

\begin{lemma*}[Informal]
For every $\ket{v}$ orthogonal to $\ket{u}$,
\begin{equation}
\braket{\Psi^*}{v\otimes u^{\otimes(N-1)}}
=0.
\notag
\end{equation}
For every $2\le k<N$ and every
$\ket{v_1},\ldots,\ket{v_k}$ orthogonal to $\ket{u}$,
\begin{equation}
\abs*{
\braket{\Psi^*}{
v_1\otimes\cdots\otimes v_k\otimes u^{\otimes(N-k)}
}}
\le
\alpha
\exp\paren*{\Oh(k)}
\paren*{\frac{k}{N}}^{k/2}
\prod_{j=1}^k\norm{v_j}_2.
\notag
\end{equation}
The constant hidden in $\Oh(k)$ is universal and independent of $\dim\mcH$.
\end{lemma*}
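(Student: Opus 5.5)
The plan is to encode all the overlaps in the statement in a single one-variable polynomial. The extremality of $\ket{u}$ then becomes a growth bound on that polynomial, and Cauchy's coefficient estimates give the higher-order bounds. For vectors $x_1,\ldots,x_k$ define the $k$-linear form
\[
T_k(x_1,\ldots,x_k)=\braket{\Psi^*}{x_1\otimes\cdots\otimes x_k\otimes u^{\otimes(N-k)}}.
\]
It is linear in each argument, since the kets are linear. Because $\ket{\Psi^*}\in\bigvee^N(\mcH)$, the form is symmetric in its arguments and does not depend on which $N-k$ slots carry $u$. Expanding the tensor power and using this symmetry gives, for any $x$,
\[
\braket{\Psi^*}{(u+x)^{\otimes N}}=\sum_{k=0}^N\binom Nk T_k(x,\ldots,x),\qquad T_0=\alpha .
\]
We take $\norm{u}_2=1$, so that $\alpha$ is the maximum of $\abs{\braket{\Psi^*}{w^{\otimes N}}}$ over unit vectors $w$. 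Homogeneity then gives $\abs{\braket{\Psi^*}{w^{\otimes N}}}\le\alpha\norm{w}_2^N$ for every $w$. For $v\perp u$ with $\norm{v}_2=1$ and $z\in\mathbb C$ we have $\norm{u+zv}_2^2=1+\abs z^2$. Hence the polynomial
\[
p(z)=\sum_{k=0}^N\binom Nk T_k(v,\ldots,v)\,z^k
\]
satisfies $\abs{p(z)}\le\alpha(1+\abs z^2)^{N/2}$ on all of $\mathbb C$. This inequality is the only place where the argmax choice of $\ket{u}$ is used, and it involves no dimension.

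\emph{First order.} Put $z=te^{i\theta}$ with $t>0$ small. Then $p(z)=\alpha+Nte^{i\theta}T_1(v)+O(t^2)$, while the bound reads $\alpha(1+t^2)^{N/2}=\alpha+O(t^2)$. Choose $\theta$ so that $e^{i\theta}T_1(v)=\abs{T_1(v)}$. Then $\alpha+Nt\abs{T_1(v)}\le\alpha+O(t^2)$, and letting $t\to0$ forces $T_1(v)=0$. This gives the first claim.

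\emph{Diagonal higher-order terms.} Cauchy's estimate on the circle $\abs z=r$ gives
\[
\binom Nk\abs{T_k(v,\ldots,v)}\le\alpha\,(1+r^2)^{N/2}r^{-k}.
\]
For $1\le k<N$ take $r^2=k/(N-k)$, which minimizes the right-hand side. This yields
\[
\abs{T_k(v^{\otimes k})}\le\alpha\binom Nk^{-1}\paren*{\frac{N^N}{k^k(N-k)^{N-k}}}^{1/2}.
\]
Stirling's formula gives $\binom Nk\ge c\sqrt{N/(k(N-k))}\,N^N/(k^k(N-k)^{N-k})$ with $c$ a universal constant. The bound therefore becomes
\[
\alpha\,c^{-1}\sqrt{\tfrac{k(N-k)}{N}}\paren*{\tfrac{k^k(N-k)^{N-k}}{N^N}}^{1/2}\le\alpha\,e^{O(k)}\paren*{\tfrac kN}^{k/2}.
\]
Here we use $\sqrt{k(N-k)/N}\le\sqrt k\le e^k$ and $(N-k)^{N-k}\le N^{N-k}$. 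By homogeneity, for any $x\perp u$ we get $\abs{T_k(x,\ldots,x)}\le\alpha e^{O(k)}(k/N)^{k/2}\norm{x}_2^k$.

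\emph{Off-diagonal terms.} Given $v_1,\ldots,v_k\perp u$, normalize them to unit vectors. Apply the polarization identity for symmetric $k$-linear forms,
\[
T_k(v_1,\ldots,v_k)=\frac1{k!\,2^k}\sum_{\epsilon\in\{\pm1\}^k}\epsilon_1\cdots\epsilon_k\,T_k\paren*{x_\epsilon,\ldots,x_\epsilon},\qquad x_\epsilon=\sum_i\epsilon_iv_i .
\]
Each $x_\epsilon$ is orthogonal to $u$ and has $\norm{x_\epsilon}_2\le k$. Hence
\[
\abs{T_k(v_1,\ldots,v_k)}\le\frac{k^k}{k!}\max_\epsilon\frac{\abs{T_k(x_\epsilon^{\otimes k})}}{\norm{x_\epsilon}_2^k}\le e^{k}\cdot\alpha e^{O(k)}(k/N)^{k/2}.
\]
Restoring the norms $\norm{v_j}_2$ gives the stated bound with a universal constant.

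The main obstacle is the diagonal step. The Cauchy radius must be tuned to $r^2\approx k/N$, and the Stirling bookkeeping must lose only $e^{O(k)}$ rather than a factor of $N$, which the crude bound $\binom Nk\ge N^N/((N+1)k^k(N-k)^{N-k})$ would lose. The first-order and polarization steps are routine once the phase is chosen correctly; the form is linear in the kets, so no conjugation issue arises.
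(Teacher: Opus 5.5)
Your proof is correct and takes essentially the same route as the paper. Both arguments use the same maximality bound $\abs{p(z)}\le\alpha(1+\abs z^2)^{N/2}$, the same first-order phase argument, the same Cauchy/Fourier coefficient estimate at radius $r^2=k/(N-k)$, and polarization with $\norm{x}_2\le k$ giving the factor $k^k/k!$; the paper averages over a continuous torus of phases where you sum over $\pm1$ signs, which is equivalent. One simplification: you do not need Stirling, since the elementary bound $\binom Nk\ge(N/k)^k$ already yields $(N/(N-k))^{(N-k)/2}(k/N)^{k/2}\le(ek/N)^{k/2}$, so the step you flag as the main obstacle is actually routine.
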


The estimate follows from a one-variable calculation. For a fixed
$\ket{v}\perp\ket{u}$, set
$\gamma_h(v)=\braket{\Psi^*}{v^{\otimes h}\otimes
u^{\otimes(N-h)}}$. Maximality of $\ket{u}$ gives, for every $z\in\C$,
\begin{equation}
\abs*{
\sum_{h=0}^N
\binom Nh z^h\gamma_h(v)
}
\le
\alpha
\paren*{1+\abs{z}^2\norm{v}_2^2}^{N/2}.
\notag
\end{equation}
The linear coefficient must vanish. Extracting the $k$th Fourier coefficient
on a circle of a suitable radius gives the diagonal higher-order bound, and a
polarization identity gives the stated multilinear estimate. (See the
exact bound in \Cref{thm:bosonic-argmax}.)

The factor $\alpha$ is essential. It may be exponentially small, and there is
no useful dimension-independent lower bound on it. In both applications we take $\ket{\Psi^*}$ to be
an eigenvector for the extremal eigenvalue of the lifted effect and contract
its eigenvalue equation against $\ket{u}^{\otimes N}$. The term with no
transverse factor (no $\ket v$ orthogonal to $\ket u$) is $\alpha$ times the value on a tensor power. The term with
one transverse factor vanishes, and each higher-order term carries the same
factor $\alpha$. Dividing by $\alpha$ therefore compares the extremal
symmetric value with a tensor-power value even when the maximizing overlap is
very small. This eigenvector contraction is the common analytic step in both
simulations.

For the pure-super simulation, Arthur samples a check $i$ and applies its
binary verification circuit to $t$ selected registers. The check rejects when
the empirical acceptance frequency lies outside the target interval enlarged
by $\epsilon/2$. If
$p_i(u)=\matrixel{u}{E_i}{u}$ for the accepting effect $E_i$ and $a_{i,\ell}\in\{0,1\}$ indicates whether observing $\ell$ accept outcomes causes the count test to reject, then the rejection probability on a tensor power is the Bernstein polynomial
\begin{equation}
Q_{\ol a_i}(p_i(u))
=
\sum_{\ell=0}^t
\binom t\ell
a_{i,\ell}
p_i(u)^{\ell}
\paren*{1-p_i(u)}^{t-\ell}.
\notag
\end{equation}
The polynomial is the probability of observing a rejecting count among $t$
independent Bernoulli trials, expressed in the standard Bernstein basis
\cite{Farouki2012}. Hoeffding's inequality bounds the sampling error
\cite{Hoeffding1963}. On a NO instance, the soundness promise says that every
pure state violates at least a $\delta$ fraction of the checks, and each
violated check rejects with probability close to one.
Let $\rmR_{i,t}$ be the rejecting count effect for check $i$. The averaged
effect $\rmR_t$ and its value on a tensor power are
\begin{equation}
\rmR_t
=
\frac1M\sum_{i=1}^M\rmR_{i,t},
\qquad
f_t(u)
=
\matrixel{u^{\otimes t}}{\rmR_t}{u^{\otimes t}}
=
\frac1M\sum_{i=1}^M Q_{\ol a_i}(p_i(u)).
\notag
\end{equation}

The Bernstein form also gives the transverse expansion needed for symmetric
witnesses. Write
$\ket{u_i^\perp}=\paren*{I-\ketbra{u}{u}}E_i\ket{u}$, and let
$\rmR_t^{[N]}$ be the unordered lift obtained by averaging $\rmR_t$ over the
$t$-element subsets of the $N$ registers. If $\ket{\Psi^*}$ is a least
eigenvector of $\rmR_t^{[N]}$ on $\bigvee^N(\mcH)$ with eigenvalue $\lambda$,
then
\begin{equation}
\alpha\lambda
=
\frac1M
\sum_{i=1}^M
\sum_{j=0}^t
\frac{Q_{\ol a_i}^{(j)}(p_i(u))}{j!}
\braket{\Psi^*}{
(u_i^\perp)^{\otimes j}\otimes u^{\otimes(N-j)}
}.
\notag
\end{equation}
The finite-difference identity in \Cref{lem:bernstein-derivatives} controls the
derivatives uniformly over the rejecting count predicate. Combining this
bound with the argmax estimate gives the following comparison.
\begin{equation}
0
\le
\min_{\ket{u}\in\bbS(\mcH)} f_t(u)
-
\min_{\ket{\Psi}\in\bbS(\bigvee^N(\mcH))}
\matrixel{\Psi}{\rmR_t^{[N]}}{\Psi}
\le
\Oh\paren*{\frac{t^2}{N}}.
\notag
\end{equation}

The choices in \Cref{eq:puresuper-parameters} make
$\eta_t=2e^{-t\epsilon^2/2}=\Oh(\delta)$ and
$N=\Theta(t^2/\delta)$.
On a YES instance, the tensor-power rejection is at most $\eta_t$, on a NO
instance, it is at least $\delta(1-\eta_t)$, and the stability loss leaves a
rejection probability of order $\delta$ throughout the symmetric subspace.
With $\vartheta=\Theta(1/(\delta N))$, the rejection effect is
\begin{equation}
\rmH
=
\paren*{1-\vartheta}\rmH_{\mathrm{asym}}
+
\vartheta\rmR_t^{[N]}.
\notag
\end{equation}
The count branch gives rejection $\Omega(1/N)$ on the symmetric subspace, and
the SWAP branch gives the same order on its orthogonal complement. The honest
rejection probability is $\Oh(1/N)$. Sampling one polynomial-length index
implements the average over the checks, so the argument does not enumerate an
exponentially large family.

The Bell-pure simulation uses the same contraction with a different lifted
effect. The local measurements have outcome alphabet
$\mcY=\set{0,1}^{\ell_{\mathrm{out}}(n)}$. For
$\ol a\in\mcY^m$, let $g_x(\ol a)$ be the probability that the referee accepts
the outcome tuple $\ol a$. The local POVMs and the referee then define
\begin{equation}
\rmB_x
=
\sum_{\ol a\in\mcY^m}
g_x(\ol a)
\bigotimes_{j=1}^m E_{j,a_j}.
\notag
\end{equation}
Arthur implements the ordered lift $\rmB_x^{\ang{N}}$ by assigning the $m$
labeled measurements to a uniformly random ordered choice of distinct
registers. On $\ket{u}^{\otimes N}$, its acceptance probability is exactly
$\matrixel{u^{\otimes m}}{\rmB_x}{u^{\otimes m}}$, which is the acceptance
probability of the original Bell verifier.
For every local outcome, write
$p_{u,j,a}=\matrixel{u}{E_{j,a}}{u}$ and decompose
$E_{j,a}\ket{u}=p_{u,j,a}\ket{u}+\ket{u_{j,a}^\perp}$.
POVM normalization gives the dimension-free transverse budget
$\sum_a\norm{u_{j,a}^\perp}_2\le\sqrt R$, where $R=\abs{\mcY}$ is the
size of the local outcome alphabet. Expanding the product effect, applying the
multilinear argmax estimate, and contracting an eigenvector for its largest
eigenvalue on $\bigvee^N(\mcH)$ gives the following comparison whenever $N$
is at least a sufficiently large constant times $m^2R$.
\begin{equation}
0
\le
\max_{\ket{\Psi}\in\bbS(\bigvee^N(\mcH))}
\matrixel{\Psi}{\rmB_x^{\ang{N}}}{\Psi}
-
\max_{\ket{u}\in\bbS(\mcH)}
\matrixel{u^{\otimes m}}{\rmB_x}{u^{\otimes m}}
\le
\Oh\paren*{\frac{m^2R}{N}}.
\notag
\end{equation}
The formal maximum-acceptance bound is
\Cref{thm:bosonic-stab-product}.

The short-output assumption makes $R$ polynomial. Taking
$N=\Theta(m^2R)$ with a sufficiently large constant keeps the stability loss
below the constant gap of the original Bell verifier. The Bell branch is used
with probability $1/N$, while the random-pair SWAP branch is used otherwise.
On a NO instance, the first branch gives rejection $\Omega(1/N)$ on the
symmetric subspace and the second gives the same bound on its orthogonal
complement. An honest tensor power has acceptance $1-\Oh(1/N)$ on a YES
instance. The ideal Bell compiler is proved in
\Cref{thm:bellpure-ideal-compiler}.

\subsection{Organization}

\Cref{sec:quantum-classes} gives the formal definitions and states the two
compiler theorems. \Cref{sec:bosonic-argmax} proves the bosonic argmax estimate.
\Cref{sec:count-stability,sec:puresuper-collapse} treat repeated count tests
and the pure-super compiler. \Cref{sec:product-stability,sec:bellpure-collapse}
treat product POVMs and the Bell-pure compiler. The appendix collects the
analytic and finite-sampling facts used in the proofs.


\section{Notation, Complexity Classes, and Main Results}
\label{sec:quantum-classes}
All Hilbert spaces are finite-dimensional complex spaces. The Hilbert space associated with a $q$-qubit register is $(\C^2)^{\otimes q}$, equipped with the standard computational product basis. We write
\begin{equation}
[N]:=\set{1,\ldots,N}
\qquad\text{and}\qquad
\bbS(\mcH):=\set{\ket{u}\in\mcH:\norm{u}_2=1}.
\notag
\end{equation}

Given a Hilbert space $\mcH$, an integer $N\in\N_{>0}$, and a permutation $\sigma\in\mfS_N$, define the tensor-factor permutation unitary $U_\sigma$ on $\mcH^{\otimes N}$ by
\begin{equation}
U_\sigma\ket{u_1}\otimes\cdots\otimes \ket{u_N}
:=
\ket{u_{\sigma^{-1}(1)}}\otimes\cdots\otimes \ket{u_{\sigma^{-1}(N)}}.
\notag
\end{equation} 
The symmetric subspace is
\begin{equation}
\bigvee\nolimits^N\!(\mcH)
:=
\set{
\ket{\Psi}\in\mcH^{\otimes N}:
U_\sigma\ket{\Psi}=\ket{\Psi}
\text{ for every }\sigma\in\mfS_N
}.
\notag
\end{equation}
Its orthogonal projector is
\begin{equation}
\Pi_{\mathrm{sym}}^{(N)}
:=
\frac1{N!}\sum_{\sigma\in\mfS_N}U_\sigma.
\notag
\end{equation}
An operator $A$ on $\mcH^{\otimes N}$ is permutation-invariant if
$U_\sigma^\dagger A U_\sigma=A$ for every $\sigma\in\mfS_N$. Every such operator commutes with $\Pi_{\mathrm{sym}}^{(N)}$ and is block diagonal with respect to
\begin{equation}
\mcH^{\otimes N}
=
\bigvee\nolimits^N\!(\mcH)
\oplus
\left(\bigvee\nolimits^N\!(\mcH)\right)^\perp.
\notag
\end{equation}

\subsection{Quantum Merlin--Arthur Complexity Classes}

We use the standard circuit formulation of quantum Merlin--Arthur proof systems and its witness-preserving amplification theory
\cite{KitaevShenVyalyi2002,MarriottWatrous2005}.

All verifier families are generated over a fixed finite universal gate set $\mcG$ containing the Hadamard gate and an exact universal set for reversible classical computation.To evaluate a circuit description coherently, a compiling verifier may also use the finite gate set obtained by adjoining one controlled version of each gate in $\mcG$. This extension does not change any of the classes below, a verification circuit acts on a witness register $\mcW$ and an ancilla register $\mcA$ initialized to $\ket{0}_\mcA$, and ends with a binary measurement having accepting effect $\Pi_{\mathrm{acc}}$. If $U$ is the unitary implemented before the final measurement and
\begin{equation}
J:\mcW\longrightarrow\mcW\otimes\mcA,
\qquad
J\ket{\psi}:=\ket{\psi}\ket{0}_\mcA,
\notag
\end{equation}
then the accepting effect induced on $\mcW$ is
\begin{equation}
E(U)
:=
J^\dagger U^\dagger\Pi_{\mathrm{acc}}UJ.
\notag
\end{equation}
Thus, $0\preceq E(U)\preceq I_\mcW$, and the acceptance probability on $\ket{\psi}\in\bbS(\mcW)$ is
\begin{equation}
\matrixel{\psi}{E(U)}{\psi}.
\notag
\end{equation}

\begin{definition}[$\mathsf{QMA}$]\label{def:qma}
A promise problem $L=(L_{\mathrm{yes}},L_{\mathrm{no}})$ is in $\mathsf{QMA}$ if the following holds: There are a polynomial-time computable function $q:\N\to\N_{>0}$ bounded by a polynomial and a polynomial-time uniform family of polynomial-size verification circuits
$V=(V_x)_{x\in\two^*}$. The circuit $V_x$ acts on the witness register
$\mcW_x:=(\C^2)^{\otimes q(|x|)}$ and polynomially many fresh-zero ancilla qubits. Writing $E_x:=E(V_x)$, every input $x\in\two^*$ satisfies
\begin{align}
x\in L_{\mathrm{yes}}
&\Longrightarrow
\paren[\Big]{\exists\ket{\psi}\in\bbS(\mcW_x)}
\brak[\Big]{\matrixel{\psi}{E_x}{\psi}\ge\frac23},
\notag \\
x\in L_{\mathrm{no}}
&\Longrightarrow
\paren[\Big]{\forall\ket{\psi}\in\bbS(\mcW_x)}
\brak[\Big]{\matrixel{\psi}{E_x}{\psi}\le\frac13}.
\notag
\end{align}
Quantifying over unit vectors is equivalent to quantifying over density operators because the acceptance probability is linear and its maximum is the largest eigenvalue of $E_x$. \hfill $\lozenge$
\end{definition}

We will also use the standard equivalent definition in which the completeness and soundness thresholds have an inverse-polynomial gap. Witness-preserving amplification of \cite{MarriottWatrous2005} converts such a verifier to the constant thresholds ($\tfrac23$ and $\tfrac13$ resp.) in \Cref{def:qma}.

The super-verifier formalism was introduced by Aharonov and Regev in
their definition of $\mathsf{QMA}^{+}$ \cite{AharonovRegev2003}.
The pure-state class used below, $\mathsf{PureSuperQMA}$, was introduced by Kamminga and Rudolph \cite{KammingaRudolph2026}.

\begin{definition}[Pure super-verifier]\label{def:puresuper-verifier}
Let $M:\N\to\N_{>0}$ be such that the binary representation of $M(n)$ can be computed from $1^n$ in time polynomial in $n$. A uniform $M$-check pure super-verifier
$V$ consists of a polynomially bounded, polynomial-time computable witness-length function $q:\N\to\N_{>0}$ and a family of triples
\begin{equation}
\set{
(V_{x,i},r_{x,i},s_{x,i}):
x\in\two^*,\ i\in[M(|x|)]
}
\notag
\end{equation}
with the following properties: The circuit $V_{x,i}$ acts on $q(|x|)$ witness qubits and polynomially many fresh-zero ancilla qubits, and has polynomial size. The numbers
$r_{x,i},s_{x,i}\in[0,1]$ are rationals having polynomial bit length. A deterministic generator outputs
$(V_{x,i},r_{x,i},s_{x,i})$ from $(x,i)$ in time polynomial in
$|x|+\log M(|x|)$. Writing $E_{x,i}:=E(V_{x,i})$, we use
\begin{equation}
\operatorname{acc}_V(x,i,\psi)
:=
\matrixel{\psi}{E_{x,i}}{\psi}
\notag
\end{equation}
for the acceptance probability on
$\ket{\psi}\in\bbS((\C^2)^{\otimes q(|x|)})$. \hfill $\lozenge$
\end{definition}

\begin{definition}[$\mathsf{PureSuperQMA}$]\label{def:puresuperqma}
A promise problem $L=(L_{\mathrm{yes}},L_{\mathrm{no}})$ is in $\mathsf{PureSuperQMA}(\mathrm{exp})$ if there are 
positive integer-valued
polynomials $b,c,d$, a function
$M:\N\to\N_{>0}$ satisfying $M(n)\le2^{d(n)}$, and a uniform $M$-check pure super-verifier $V$ such that, for every input
$x\in\two^*$ of length $n$,
\begin{align}
x\in L_{\mathrm{yes}}
&\Longrightarrow
\paren[\Big]{\exists\ket{\psi}\in\bbS((\C^2)^{\otimes q(n)})}
\paren[\Big]{\forall i\in[M(n)]}
\brak[\Big]{
\abs{\operatorname{acc}_V(x,i,\psi)-r_{x,i}}
\le s_{x,i}
},
\notag \\
x\in L_{\mathrm{no}}
&\Longrightarrow
\paren[\Big]{\forall\ket{\psi}\in\bbS((\C^2)^{\otimes q(n)})}
\brak[\Big]{
\Pr_{i\sim\operatorname{Unif}([M(n)])}
\brak*{
\abs{\operatorname{acc}_V(x,i,\psi)-r_{x,i}}
>
s_{x,i}+\frac1{b(n)}
}
\ge\frac1{c(n)}
}.
\notag
\end{align}
The class $\mathsf{PureSuperQMA}(\poly)$ is defined by replacing
$M(n)\le2^{d(n)}$ with $M(n)\le d(n)$. We also write
$\mathsf{PureSuperQMA}:=\mathsf{PureSuperQMA}(\poly)$. 
\hfill $\lozenge$
\end{definition}

Next we define a version of $\mathsf{QMA}$ where the witness is guaranteed to consist of several copies of a single state $\ket{\psi}^{\otimes m}$ and the verifier is restricted to measure each copy separately and decide to accept or reject by classically postprocessing the measurements' outcomes.

We use the Bell-pure symmetric model introduced by Kamminga and Rudolph \cite{KammingaRudolph2026}.

\begin{definition}[$\mathsf{BellPureSymQMA}$]\label{def:bellpuresymqma}
A uniform Bell-pure symmetric verifier consists of polynomially bounded, polynomial-time computable functions $m,q,\ell_{\mathrm{out}}:\N\to\N_{>0}$, a constant $C>0$, quantum circuits
\begin{equation}
\set{
U_{x,j}:
x\in\two^*,\ j\in[m(|x|)]
},
\notag
\end{equation}
which will be called local circuits and quantum circuits $(\mcR_x)_{x\in\two^*}$ which will be called referee circuits, with the following properties. Each $U_{x,j}$ acts on a $q(|x|)$-qubit witness register and polynomially many fresh-zero ancilla qubits (denote their number by $p_j(\abs{x})$). It has a designated output register of
$\ell_{\mathrm{out}}(|x|)$ qubits, where
\begin{equation}
\ell_{\mathrm{out}}(n)\le C\log_2(n).
\notag
\end{equation}
 $\mcR_x$\footnote{Allowing for different number of output qubits for each local circuit does not change the definition, as shorter output registers may be padded with deterministic zeros.} acts on $m(\abs{x})\cdot\ell_{\mathrm{out}(\abs{x})}$ qubits (the output qubits of all the $m(\abs{x})$ local circuits) and polynomially many fresh-zero ancilla qubits and has a designated one output qubit. 
The local circuits are generated from $(x,j)$, and the referee from $x$, in polynomial time.

The verifier acts as follows: It first acts with the $m:=m(\abs{x})$ circuits on the given witness copies $\bigotimes_{j\in[m]}(U_{x,j}\ket{\psi}\otimes\ket{0}^{\otimes p_j(\abs{x})})$. Then the output qubits are of all the local circuits are measured in the computational basis. Lastly the referee $\mcR_x$ receives the ordered tuple of $m(|x|)$ measurement outcomes, acts on them with polynomially many fresh-zero ancilla qubits, and its output qubit is measured in the computational basis, giving the verifier's answer. 

For $x\in\two^*$, put $n:=|x|$. For $\ket{\psi}\in\bbS((\C^2)^{\otimes q(n)})$, let
$\beta_x(\psi)$ be the acceptance probability obtained by applying
$U_{x,j}$ to the $j$th factor of $\ket{\psi}^{\otimes m(n)}$, measuring every designated output, and running $\mcR_x$ on the ordered outcome tuple. A promise problem
$L=(L_{\mathrm{yes}},L_{\mathrm{no}})$ is in
$\mathsf{BellPureSymQMA}(\poly)$ if it has such a verifier and every input $x$ satisfies
\begin{align}
x\in L_{\mathrm{yes}}
&\Longrightarrow
\paren[\Big]{\exists\ket{\psi}\in\bbS((\C^2)^{\otimes q(|x|)})}
\brak[\Big]{\beta_x(\psi)\ge\frac23},
\notag \\
x\in L_{\mathrm{no}}
&\Longrightarrow
\paren[\Big]{\forall\ket{\psi}\in\bbS((\C^2)^{\otimes q(|x|)})}
\brak[\Big]{\beta_x(\psi)\le\frac13}.
\notag
\end{align}
\hfill $\lozenge$
\end{definition}

\begin{remark}
    The existential quantifier in the $x\in L_{\mathrm{yes}}$ case above is intentional, the corresponding definition in the full version of Kamminga and Rudolph's work \cite[Definition~8.12]{KammingaRudolph2026} instead prints a universal quantifier over $\ket{\psi}$, but both its description of Merlin's proof and its asserted containment

    $\mathsf{PureSuperQMA}\subseteq\mathsf{BellPureSymQMA}(\poly)$ require the existential quantifier. We therefore regard the printed universal quantifier as a typographical error. 
\end{remark}

\subsection{Main Results}

\begin{theorem}\label{thm:puresuper-compiler}
Let $b,c,d,M,V$ witness that
$L\in\mathsf{PureSuperQMA}(\mathrm{exp})$ as in
\Cref{def:puresuperqma}. On inputs of length $n$, set
$\epsilon=1/b(n)$, $\delta=1/c(n)$, and $q=q(n)$. There is a uniform
one-witness verifier using
\begin{equation}
\ell_\delta
:=
\min\set{\ell\in\N_{\ge0}:2^\ell\delta\ge32},
\qquad
t:=\frac{2\ell_\delta}{\epsilon^2},
\qquad
N:=\frac{256t^2}{\delta}.
\notag
\end{equation}
There is a uniform one-witness verifier using $qN$ witness qubits whose completeness and soundness satisfy
\begin{equation}
c_{\mathrm{PS}}
\ge
1-\frac{17}{1024N},
\qquad
s_{\mathrm{PS}}
\le
1-\frac{223}{1024N}.
\notag
\end{equation}
In particular,
\begin{equation}
c_{\mathrm{PS}}-s_{\mathrm{PS}}
\ge
\frac{103}{512N}.
\notag
\end{equation}
\end{theorem}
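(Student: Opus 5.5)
We build the compiled verifier exactly as in the proof overview and then verify completeness and soundness by splitting the witness space into the symmetric subspace and its complement. Arthur receives $N$ registers of $q$ qubits each. With probability $1-\vartheta$ he runs a SWAP test on a uniformly random pair $a<b$. With probability $\vartheta$ he instead does the following. He samples a uniform index $i\in[M]$ using $\lceil\log_2 M\rceil\le d(n)$ coin flips, rejection-sampling if $M$ is not a power of two. He runs the generator to obtain $(V_{x,i},r_{x,i},s_{x,i})$, picks a uniformly random $t$-subset of registers, and applies $V_{x,i}$ to each of them. He then rejects iff the empirical acceptance frequency lies outside $[r_{x,i}-s_{x,i}-\epsilon/2,\ r_{x,i}+s_{x,i}+\epsilon/2]$. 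Uniformity is immediate from the generator running in time $\poly(n+\log M)$. The rejection effect is $\rmH=(1-\vartheta)\rmH_{\mathrm{asym}}+\vartheta\rmR_t^{[N]}$, and both summands are permutation invariant. I will take $\vartheta=c_0/(\delta N)$ for a concrete constant $c_0$, tuned at the end to produce the stated $17/1024$ and $223/1024$.

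\emph{Completeness.} Take the honest witness $\ket{u}^{\otimes N}$ with $\ket{u}$ satisfying every check. The SWAP branch never rejects. For each $i$, Hoeffding's inequality bounds the count-test rejection by $\eta_t=2e^{-t\epsilon^2/2}$. With $t=2\ell_\delta/\epsilon^2$ this gives $\eta_t=2e^{-\ell_\delta}\le 2\cdot 2^{-\ell_\delta}\le\delta/16$, using $2^{\ell_\delta}\delta\ge32$. The total rejection is therefore at most $\vartheta\delta/16=c_0/(16N)$, which fixes the completeness constant once $c_0$ is fixed.

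\emph{Soundness.} Because $\rmH$ is block diagonal, the minimum rejection over all witnesses is the smaller of its minima on $\bigvee^N(\mcH)$ and on the complement. On the complement, \Cref{thm:random-transposition-gap} gives rejection at least $(1-\vartheta)/(N-1)$, and $\rmR_t^{[N]}\succeq0$.

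On the symmetric subspace we use the count-stability comparison from the proof overview: $\min_\Psi\langle\Psi|\rmR_t^{[N]}|\Psi\rangle\ge\min_u f_t(u)-C't^2/N$, with an explicit constant $C'$ to be taken from the count-stability section. On a NO instance, every pure $\ket u$ violates at least a $\delta$ fraction of checks by more than $\epsilon$. For each such check, Hoeffding makes the count test reject with probability at least $1-\eta_t$. Hence $f_t(u)\ge\delta(1-\eta_t)\ge\delta(1-\delta/16)$. With $N=256t^2/\delta$ the stability loss is $C'\delta/256$, so the symmetric-subspace rejection is at least $\vartheta\,\delta(1-\delta/16-C'/256)$, which is $\Omega(1/N)$.

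Choosing $c_0$ so that both branches give at least $223/(1024N)$ yields the stated soundness bound. The gap bound $103/(512N)$ then follows by subtraction.

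\emph{Main obstacle.} The hard step is the dimension-free count-stability bound with an explicit constant. It requires taking a least eigenvector $\ket{\Psi^*}$ of $\rmR_t^{[N]}$ and letting $\ket u$ be its argmax tensor power. One then contracts the eigen-equation against $\ket{u}^{\otimes N}$ and expands in the Bernstein form via transverse vectors $u_i^\perp$ satisfying $\|u_i^\perp\|\le1/2$. The first-order term vanishes by \Cref{thm:bosonic-argmax}. The $j$-th order term is bounded by $\alpha e^{O(j)}(j/N)^{j/2}$ times the derivative bound from \Cref{lem:bernstein-derivatives}, which scales like $t^j$. Summing over $j\ge2$ and dividing by $\alpha$ gives $O(t^2/N)$, provided $t^2/N$ is small, which it is here since $t^2/N=\delta/256$. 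The constants must be tracked carefully; the rest is bookkeeping.
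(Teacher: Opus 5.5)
Your verifier and analysis are the paper's: the same two-branch construction, the Hoeffding bound $\eta_t\le\delta/16$, block-diagonality of the rejection effect, \Cref{thm:random-transposition-gap} on the complement, and \Cref{thm:bosonic-stab-count} (with $C'=e^3/2$) on the symmetric subspace. The missing step is the implementation error of the random choices, and it is exactly what produces the constants in the statement. Over the fixed gate set (Hadamard plus reversible classical gates), Arthur can sample only dyadic distributions exactly. In general, $\vartheta=c_0/(\delta N)$, $\binom N2$, $\binom Nt$ and $M$ are not dyadic. Your rejection sampling for $i$ must be truncated to keep the circuit polynomial size, and you do not say how the pair or the $t$-subset is sampled. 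Each of these introduces an additive error in the acceptance probability on every witness, and that error has to be budgeted. ``Uniformity is immediate'' does not cover it.

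The paper budgets it with \Cref{lem:finite-sampling} at total error $\xi=1/(1024N)$, using $\vartheta=1/(4\delta N)$, i.e.\ $c_0=1/4$. With this choice, the ideal verifier rejects the honest witness with probability at most $1/(64N)=16/(1024N)$. On a NO instance it rejects every witness with probability at least $7/(32N)=224/(1024N)$, using $\delta(1-\eta_t)-e^3t^2/(2N)\ge 7\delta/8$ on the symmetric sector and $(1-\vartheta)/(N-1)>1/(2N)$ on the complement. Shifting by $\xi$ gives $17/1024$ and $223/1024$. Your plan attributes these constants entirely to tuning $c_0$. If you choose $c_0$ so that the ideal completeness is already $17/(1024N)$ (e.g.\ $c_0=17/64$), no room is left for the sampling error, and the stated completeness fails for the implemented circuit. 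The fix is to take $c_0=1/4$ (or anything in the admissible window with at least $1/(1024N)$ slack on both sides) and add the finite-sampling step. That step consists of dyadic approximation of $\vartheta$, uniform sampling by modular reduction or truncated rejection with polynomially many extra bits, and exact reversible comparison of the count after clearing denominators. With that added, your proof goes through.
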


\begin{theorem}\label{thm:bellpure-compiler}
Let $V$ be a uniform Bell-pure symmetric verifier for
$L\in\mathsf{BellPureSymQMA}(\poly)$. On inputs of length $n$, set
$m=m(n)$, $q=q(n)$, and $R=2^{\ell_{\mathrm{out}}(n)}$. There is a uniform
one-witness verifier using $q\,\Oh(m^2R)$ witness qubits. Its completeness is
at least $1-\Oh((m^2R)^{-1})$, its soundness is at most
$1-\Omega((m^2R)^{-1})$, and its raw completeness--soundness gap is
$\Omega((m^2R)^{-1})$. The implicit constants are absolute and independent of
the witness-space dimension $2^q$.
\end{theorem}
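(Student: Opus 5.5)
We will prove \Cref{thm:bellpure-compiler} with the two-branch architecture from the proof overview. Set $N:=Km^2R$ for a large absolute constant $K$. Merlin sends $N$ registers of $q$ qubits each, and the honest witness is $\ket{u}^{\otimes N}$ for a Bell-pure optimal $\ket u$. With probability $1-1/N$, Arthur picks a uniformly random pair $a<b$ and runs a SWAP test. With probability $1/N$, he picks a uniformly random ordered tuple of $m$ distinct registers, applies $U_{x,j}$ to the $j$th chosen register, measures the outputs, and runs $\mcR_x$.

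Uniformity is immediate. Sampling an ordered injection $[m]\to[N]$ and applying controlled-select versions of the local circuits costs polynomial time. The accepting effect of the Bell branch is the ordered lift $\rmB_x^{\ang N}$, so the total rejection effect is
\[
\rmH=(1-1/N)\rmH_{\mathrm{asym}}+\tfrac1N\bigl(I-\rmB_x^{\ang N}\bigr).
\]

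\textbf{Completeness.} On $\ket u^{\otimes N}$ the SWAP branch never rejects. The Bell branch accepts with probability exactly $\matrixel{u^{\otimes m}}{\rmB_x}{u^{\otimes m}}=\beta_x(u)\ge 2/3$. So acceptance is at least $1-\frac{1}{3N}=1-\Oh((m^2R)^{-1})$.

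\textbf{Soundness.} Both $\rmH_{\mathrm{asym}}$ and $\rmB_x^{\ang N}$ are permutation invariant, so $\rmH$ is block diagonal with respect to $\bigvee^N(\mcH)\oplus\bigvee^N(\mcH)^\perp$. It therefore suffices to lower-bound the minimum eigenvalue of $\rmH$ on each block separately.

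On the complement, \Cref{thm:random-transposition-gap} gives $\rmH_{\mathrm{asym}}\succeq\frac1{N-1}I$. Together with $I-\rmB_x^{\ang N}\succeq 0$, the rejection is at least $(1-1/N)/(N-1)=1/N$.

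On the symmetric subspace, $\rmH_{\mathrm{asym}}$ vanishes and the rejection is $\frac1N\bigl(1-\matrixel{\Psi}{\rmB_x^{\ang N}}{\Psi}\bigr)$. Here we invoke the product-POVM stability bound \Cref{thm:bosonic-stab-product}:
\[
\max_{\Psi}\matrixel{\Psi}{\rmB_x^{\ang N}}{\Psi}\le\max_u\beta_x(u)+C_0m^2R/N\le \tfrac13+C_0/K.
\]
Choosing $K\ge 6C_0$ leaves symmetric acceptance at most $1/2$, hence rejection at least $\frac1{2N}$. Combining both blocks, soundness is at most $1-\frac1{2N}$. The gap is at least $\frac1{2N}-\frac1{3N}=\frac1{6N}=\Omega((m^2R)^{-1})$, and the witness uses $qN=q\,\Oh(m^2R)$ qubits.

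\textbf{Main obstacle.} The key step is the stability bound on the symmetric block, and it must hold with constants independent of $2^q$. If \Cref{thm:bosonic-stab-product} is to be proved here, the route is as follows. Take $\ket{\Psi^*}$ to be a top eigenvector of $\rmB_x^{\ang N}$ on $\bigvee^N(\mcH)$, and let $\ket u$ be the argmax of its overlap with tensor powers, with $\alpha=\braket{\Psi^*}{u^{\otimes N}}>0$. Contract the eigenvalue equation against $\ket u^{\otimes N}$.

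Next, decompose each factor as $E_{j,a}\ket u=p_{u,j,a}\ket u+\ket{u^\perp_{j,a}}$ and expand the product over $j\in[m]$. The term with no transverse factors gives $\alpha\beta_x(u)$. Terms with exactly one transverse factor vanish by the first part of the bosonic argmax lemma (\Cref{thm:bosonic-argmax}).

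A term with a set $S$ of $k\ge 2$ transverse factors is bounded by the multilinear estimate, giving $\alpha e^{\Oh(k)}(k/N)^{k/2}\prod_{j\in S}\|u^\perp_{j,a_j}\|$. Since $0\le g_x\le1$, we sum over outcomes using the budget $\sum_a\|u^\perp_{j,a}\|\le\sqrt R$, with the untouched factors bounded by $\sum_a p_{u,j,a}=1$. This gives a total error of
\[
\alpha\sum_{k\ge2}\binom mk e^{\Oh(k)}(k/N)^{k/2}R^{k/2}\le\alpha\sum_{k\ge2}\bigl(C m\sqrt{R/N}\bigr)^k.
\]
This geometric series is $\Oh(\alpha m^2R/N)$ once $N\ge C' m^2R$. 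Dividing by $\alpha$ gives the bound.

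The delicate points are two. First, the ordered lift averages over injections, so the $N$-fold contraction must be reconciled with the $m$-fold expansion; this uses symmetry of $\Psi^*$. Second, the factor $\binom mk$ must be absorbed by $k^{k/2}$ against $N^{k/2}$, which works via $\binom mk\le (em/k)^k$.
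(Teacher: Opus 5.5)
Your proposal is correct and follows the paper's proof essentially step for step. It uses the same verifier: a random-pair SWAP branch with probability $1-1/N$ and an ordered-lift Bell branch with probability $1/N$, where $N=\Theta(m^2R)$. It uses the same block-diagonal split, with \Cref{thm:random-transposition-gap} on $\bigvee^N(\mcH)^\perp$ and \Cref{thm:bosonic-stab-product} on $\bigvee^N(\mcH)$, the latter proved by the same argmax eigenvector contraction you sketch.

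The only point you pass over is your claim that ``uniformity is immediate.'' Over the fixed gate set this is not literal, because the branch probability $1/N$ and uniform choices among the $\binom N2$ pairs and $(N)_m$ injections need not be dyadic. The paper handles this by taking $N$ a power of two and invoking \Cref{lem:finite-sampling} with error $1/(1024N)$. Your $\frac1{6N}$ gap absorbs that error without any other change.
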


\begin{theorem}\label{thm:main-collapses}
\begin{equation}
\mathsf{PureSuperQMA}(\mathrm{exp})
=
\mathsf{QMA}
\qquad\text{and}\qquad
\mathsf{BellPureSymQMA}(\poly)
=
\mathsf{QMA}.
\notag
\end{equation}
Equivalently,
\begin{equation}
\mathsf{QMA}
=
\mathsf{PureSuperQMA}
=
\mathsf{PureSuperQMA}(\mathrm{exp})
=
\mathsf{BellPureSymQMA}(\poly).
\notag
\end{equation}
\end{theorem}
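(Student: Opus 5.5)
The plan is to derive the collapses from the two compiler theorems, \Cref{thm:puresuper-compiler} and \Cref{thm:bellpure-compiler}, together with witness-preserving amplification \cite{MarriottWatrous2005}. Most of the remaining work is checking the trivial reverse inclusions and the uniformity and parameter bookkeeping.

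\emph{Upper bounds.} Let $L\in\mathsf{PureSuperQMA}(\mathrm{exp})$. By \Cref{thm:puresuper-compiler} there is a uniform one-witness verifier with $qN$ witness qubits, completeness at least $1-\tfrac{17}{1024N}$, and soundness at most $1-\tfrac{223}{1024N}$. I will check the parameters. Since $\delta=1/c(n)\ge1/\poly(n)$, we have $\ell_\delta=O(\log n)$. Since $\epsilon=1/b(n)$, we get $t=O(b^2\log n)$ and $N=256t^2/\delta=\poly(n)$. (If $t$ or $N$ are not integers, they are rounded up, as is implicit in the theorem.) So the witness length is polynomial and the gap $\tfrac{103}{512N}$ is inverse-polynomial. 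These thresholds are rationals computable from $1^n$, so the gapped-threshold form of $\mathsf{QMA}$ applies, and amplification \cite{MarriottWatrous2005} gives thresholds $2/3,1/3$. Hence $L\in\mathsf{QMA}$.

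The Bell case is the same. Here $R=2^{\ell_{\mathrm{out}}}\le n^C$ and $m=\poly(n)$, so \Cref{thm:bellpure-compiler} gives a verifier with $q\,O(m^2R)=\poly(n)$ witness qubits and gap $\Omega((m^2R)^{-1})=1/\poly(n)$. Amplification again places $L$ in $\mathsf{QMA}$. One subtlety is that amplification needs explicit thresholds $a>b$ with $a-b\ge1/\poly$. I will take the explicit constants from the proofs of the compiler theorems, or, if they are only asymptotic, fix thresholds $1-c_1/(m^2R)$ and $1-c_2/(m^2R)$ with the absolute constants supplied there.

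\emph{Lower bounds.} First, $\mathsf{QMA}\subseteq\mathsf{PureSuperQMA}(\poly)$. Given a $\mathsf{QMA}$ verifier $V_x$, take $M=1$, $V_{x,1}=V_x$, $r=5/6$, $s=1/6$. In the YES case some pure $\psi$ has acceptance in $[2/3,1]$, so it satisfies the check. In the NO case every pure $\psi$ has acceptance at most $1/3$, which is off from $5/6$ by at least $1/2>s+1/b$ with $b=4$. So the single check is violated, giving violated fraction $1\ge1/c$ with $c=1$. Second, $\mathsf{PureSuperQMA}(\poly)\subseteq\mathsf{PureSuperQMA}(\mathrm{exp})$ holds by definition, because $d(n)\le2^{d(n)}$.

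Third, $\mathsf{QMA}\subseteq\mathsf{BellPureSymQMA}(\poly)$ takes a little more care, since a single copy must be measured with an $O(\log n)$-bit output. I will use $m=1$ and $U_{x,1}=V_x$, with designated output register the single accept qubit, so $\ell_{\mathrm{out}}=1\le C\log_2 n$ for $n\ge2$; small $n$ is handled by hardcoding. The referee copies the outcome bit. Then $\beta_x(\psi)=\matrixel{\psi}{E_x}{\psi}$, and the conditions match \Cref{def:qma} exactly, because both quantify over pure states.

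Chaining these inclusions gives the equalities of \Cref{thm:main-collapses}. The main obstacle lies entirely inside the compiler theorems, namely the dimension-free stability, and that is assumed here. At this level, the only delicate point is making sure the compiled verifiers are polynomial-time uniform and have efficiently computable thresholds, so that amplification legitimately applies. I expect that to follow from the explicit uniform constructions described in those theorems.
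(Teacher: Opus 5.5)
Your proposal is correct and follows the paper's route. The upper bounds come from \Cref{thm:puresuper-compiler} and \Cref{thm:bellpure-compiler}, whose proofs give explicit computable thresholds (e.g.\ $1-\tfrac{17}{1024N}$ versus $1-\tfrac{223}{1024N}$) so that amplification applies directly; the lower bounds are the same one-check and one-copy embeddings, together with $d(n)\le 2^{d(n)}$.

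The differences are cosmetic. The paper uses the window $r=1$, $s=1/3$ instead of your $r=5/6$, $s=1/6$. It also handles small $n$ by effectively reading the output bound as $\ell_{\mathrm{out}}(n)\le C\log_2(n+2)$, rather than by hardcoding. Hardcoding alone cannot rescue $n=1$ under the literal bound $C\log_2 1=0$, since $\ell_{\mathrm{out}}$ must be positive, so some such reinterpretation is needed there.
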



\section{Dimension-Free Bosonic Argmax Stability}
\label{sec:bosonic-argmax}
Throughout, fix a Hilbert space $\mcH$, an integer $N \in \N_{\ge 2}$, and a non-zero vector $\ket{\Psi} \in \bigvee^N\!(\mcH)$. 
The argmax-rounding viewpoint used here was developed for sum-of-squares
relaxations and sharp quantum de Finetti bounds
\cite{JWXSoS,JWXdeFinetti}. More precisely,
\Cref{thm:bosonic-argmax} extends the complex-sphere argmax estimate of
\cite{JWXdeFinetti}: that estimate gives the first-order vanishing and a
sharp second-order bound, whereas the applications below require control
at every transverse order.

\begin{theorem}\label{thm:bosonic-argmax}
Let $\ket{u} \in \bbS(\mcH)$ maximize
\begin{equation}
\abs[\big]{\braket{\Psi}{u^{\otimes N}}}
\notag
\end{equation}
over $\bbS(\mcH)$, and choose its phase so that
\begin{equation}
\alpha := \braket{\Psi}{u^{\otimes N}} > 0.
\notag
\end{equation}
For every $\ket{v} \in \ker(\bra{u})$,
\begin{equation}\label{eq:bosonic-argmax-first-order}
\braket{\Psi}{v \otimes u^{\otimes (N-1)}} = 0.
\end{equation}
For every $k \in \set{2, \ldots, N-1}$ and every $\ket{v} \in \ker(\bra{u})$,
\begin{equation}\label{eq:bosonic-argmax-diagonal}
\abs[\big]{\braket{\Psi}{v^{\otimes k} \otimes u^{\otimes (N-k)}}}
\le
\alpha \paren*{\frac{ek}{N}}^{k/2}\norm{v}_2^k.
\end{equation}
Moreover, for every $k \in \set{2, \ldots, N-1}$ and every
$\ket{v_1}, \ldots, \ket{v_k} \in \ker(\bra{u})$,
\begin{align}
&\abs[\big]{\braket{\Psi}{v_1 \otimes \cdots \otimes v_k \otimes u^{\otimes (N-k)}}} \le
\alpha \frac{k^k}{k!}\paren*{\frac{ek}{N}}^{k/2}
\prod_{r=1}^k \norm{v_r}_2
\le
\alpha e^k\paren*{\frac{ek}{N}}^{k/2}
\prod_{r=1}^k \norm{v_r}_2.
\label{eq:bosonic-argmax-multilinear}
\end{align}
\end{theorem}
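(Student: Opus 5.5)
The plan is to reduce everything to a one-variable polynomial inequality obtained from the maximality of $\ket{u}$, as sketched in the proof overview. Fix $\ket{v}\in\ker(\bra{u})$; by homogeneity we may take $\norm{v}_2=1$. For $z\in\C$ consider the (unnormalized) vector $\ket{u}+z\ket{v}$, whose norm is $(1+\abs{z}^2)^{1/2}$ because $v\perp u$. Since $\ket{\Psi}$ is symmetric, expanding $(u+zv)^{\otimes N}$ and using permutation invariance collapses every term with $h$ factors of $v$ to $\gamma_h:=\braket{\Psi}{v^{\otimes h}\otimes u^{\otimes(N-h)}}$, weighted by $\binom Nh$. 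Maximality of $\alpha$ over unit vectors, applied to the normalized vector $(u+zv)/\sqrt{1+\abs z^2}$, then gives
\begin{equation}
\abs[\Big]{\sum_{h=0}^N\binom Nh z^h\gamma_h}\le\alpha\paren*{1+\abs z^2}^{N/2}\qquad\text{for all }z\in\C,
\notag
\end{equation}
with $\gamma_0=\alpha$.

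\textbf{First order.} Take $z=\tau e^{i\theta}$ with $\tau\to0$. The left-hand side is $\abs{\alpha+N\tau e^{i\theta}\gamma_1+O(\tau^2)}$ and the right-hand side is $\alpha+O(\tau^2)$. Choose $\theta$ so that $e^{i\theta}\gamma_1$ is a positive real. Then $\alpha+N\tau\abs{\gamma_1}\le\alpha+O(\tau^2)$ for all small $\tau>0$, which forces $\gamma_1=0$. This proves \eqref{eq:bosonic-argmax-first-order} for unit $v$, and hence for all $v\in\ker(\bra u)$ by scaling.

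\textbf{Diagonal bound.} The left-hand side of the inequality is the modulus of a polynomial $P(z)$. By Cauchy's coefficient formula on the circle $\abs z=r$,
\begin{equation}
\binom Nk\abs{\gamma_k}\le r^{-k}\max_{\abs z=r}\abs{P(z)}\le \alpha\, r^{-k}(1+r^2)^{N/2}\le\alpha\, r^{-k}e^{Nr^2/2}.
\notag
\end{equation}
Optimizing with $r^2=k/N$ gives $\binom Nk\abs{\gamma_k}\le\alpha (N/k)^{k/2}e^{k/2}$. Combining this with $\binom Nk\ge (N/k)^k$ yields
\begin{equation}
\abs{\gamma_k}\le\alpha\,(k/N)^{k}(N/k)^{k/2}e^{k/2}=\alpha\,(ek/N)^{k/2},
\notag
\end{equation}
which is \eqref{eq:bosonic-argmax-diagonal}. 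This step looks routine. The only check is that the crude bound $1+r^2\le e^{r^2}$ loses nothing needed, and it does not, since the target constant is exactly $e^{k/2}$.

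\textbf{Multilinear bound.} This is the step I expect to need the most care. The map $(v_1,\dots,v_k)\mapsto\braket{\Psi}{v_1\otimes\cdots\otimes v_k\otimes u^{\otimes(N-k)}}$ is symmetric and conjugate-multilinear in the $v_r$, since $\Psi$ is symmetric; the conjugation is harmless. I would use the polarization identity
\begin{equation}
\braket{\Psi}{v_1\otimes\cdots\otimes v_k\otimes u^{\otimes (N-k)}}=\frac{1}{k!\,2^k}\sum_{\epsilon\in\set{\pm1}^k}\epsilon_1\cdots\epsilon_k\,\gamma_k\paren*{\textstyle\sum_r\epsilon_r v_r},
\notag
\end{equation}
where $\gamma_k(w):=\braket{\Psi}{w^{\otimes k}\otimes u^{\otimes(N-k)}}$. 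Each $w=\sum_r\epsilon_rv_r$ again lies in $\ker(\bra u)$, so the diagonal bound applies to it.

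To reach the stated constant $k^k/k!$ with $\prod_r\norm{v_r}_2$, rather than with $(\sum_r\norm{v_r}_2)^k$, first rescale each $v_r$ to have norm $\lambda_r$. Multilinearity makes the left side scale by $\prod_r\lambda_r$, so we may assume every $\norm{v_r}_2=1$. Then $\norm{w}_2\le k$ for each sign pattern, and the average over the $2^k$ sign patterns gives the bound
\begin{equation}
\frac{1}{k!}\,\alpha\,(ek/N)^{k/2}\,k^k,
\notag
\end{equation}
which is \eqref{eq:bosonic-argmax-multilinear}. The final inequality there follows from $k^k/k!\le e^k$.

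The delicate points are to confirm that the polarization formula holds exactly for conjugate-linear symmetric forms, and that replacing $\norm{w}_2^k$ by $k^k$ is legitimate. A sharper average bound via the parallelogram law, $\mathbb E_\epsilon\norm{w}^2=k$, would improve the constant, but it is not required here.
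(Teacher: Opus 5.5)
Your proof is correct and follows the paper's argument. It uses the same maximality inequality for $\ket{u}+z\ket{v}$ and the same first-order limit. It then applies Cauchy's coefficient formula on $\abs{z}=r$: the paper takes the exact optimizer $r=\sqrt{k/(N-k)}$ and bounds $(N/(N-k))^{(N-k)/2}\le e^{k/2}$, while your $r=\sqrt{k/N}$ with $1+r^2\le e^{r^2}$ reaches the same $(ek/N)^{k/2}$. Finally it polarizes with $\norm{w}_2\le k$; your $\pm1$ average is the discrete counterpart of the paper's torus average and gives the same constant $k^k/k!$.

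Three minor points. The paper also checks that the maximum is positive, using compactness plus polarization since $\ket{\Psi}\neq0$; you take this from the hypothesis. The form is linear, not conjugate-linear, in the kets $\ket{v_r}$, which is harmless with real signs. Your closing aside fails as stated: for $k>2$ the identity $\EE_\epsilon\norm{w}_2^2=k$ only gives a lower bound on $\EE_\epsilon\norm{w}_2^k$.
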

\begin{proof}
The map
\begin{equation}
(\ket{w_1},\ldots,\ket{w_N})
\longmapsto
\braket{\Psi}{w_1 \otimes \cdots \otimes w_N}
\notag
\end{equation}
is a non-zero symmetric multilinear form. By \Cref{lem:torus-polarization}, its diagonal restriction is not identically zero. The maximum in the statement therefore exists by compactness and is positive, and the phase of $\ket{u}$ may be chosen as stated.
Fix $\ket{v} \in \ker(\bra{u})$. For each $h \in \set{0,\ldots,N}$, write
\begin{equation}
\gamma_h(v) := \braket{\Psi}{v^{\otimes h} \otimes u^{\otimes (N-h)}}.
\notag
\end{equation}
For every $z \in \C$, maximality of $\ket{u}$ gives
\begin{align}
\abs*{\bra{\Psi}\paren*{\ket{u}+z\ket{v}}^{\otimes N}}
&=
\abs*{\sum_{h=0}^N \binom{N}{h}z^h\gamma_h(v)}
\notag \le
\alpha\paren*{1+\abs{z}^2\norm{v}_2^2}^{N/2}.
\notag
\end{align}
Suppose for contradiction that $\gamma_1(v)\ne 0$, and for $r>0$ set
\begin{equation}
z_r := r\exp\paren[\big]{-i\,\arg(\gamma_1(v))}.
\notag
\end{equation}
After squaring the preceding inequality, subtracting $\alpha^2$, dividing by $r$, and taking $r \to 0^+$, we obtain
\begin{align}
0 < 2N\alpha\abs{\gamma_1(v)}
&=
\lim_{r\to 0^+}
\frac{
\abs*{\sum_{h=0}^N\binom{N}{h}z_r^h\gamma_h(v)}^2-\alpha^2
}{r}
\notag \\
&\le
\lim_{r\to 0^+}
\alpha^2\frac{
\paren*{1+r^2\norm{v}_2^2}^{N}-1
}{r}
=0,
\notag
\end{align}
which is impossible. Thus $\gamma_1(v)=0$, proving \Cref{eq:bosonic-argmax-first-order}.
Now, suppose that $\ket{v}$ is a unit vector, and fix $k \in \set{2,\ldots,N-1}$. For every $r>0$ and $\theta \in [0,2\pi]$, Fourier inversion and the same maximality inequality give
\begin{align}
\binom{N}{k}r^k\abs{\gamma_k(v)}
&=
\abs*{
\frac{1}{2\pi}
\int_0^{2\pi}
e^{-ik\theta}
\bra{\Psi}\paren*{\ket{u}+re^{i\theta}\ket{v}}^{\otimes N}
\,\mathrm{d}\theta
} \le
\alpha\paren*{1+r^2}^{N/2}.
\notag
\end{align}
Taking $r=\sqrt{k/(N-k)}$ and using
\begin{equation}
\binom{N}{k}
=\prod_{h=0}^{k-1}\frac{N-h}{k-h}
\ge \paren*{\frac{N}{k}}^k,
\notag
\end{equation}
we obtain
\begin{align}
\frac{\paren*{1+r^2}^{N/2}}{\binom{N}{k}r^k}
&\le
\frac{
\paren*{\frac{N}{N-k}}^{N/2}
}{
\paren*{\frac{N}{k}}^k
\paren*{\frac{k}{N-k}}^{k/2}
} =
\paren*{\frac{N}{N-k}}^{(N-k)/2}
\paren*{\frac{k}{N}}^{k/2}
\le
\paren*{\frac{ek}{N}}^{k/2}.
\notag
\end{align}
This proves \Cref{eq:bosonic-argmax-diagonal} for unit vectors, and homogeneity proves it for every $\ket{v} \in \ker(\bra{u})$.
Finally, fix $k \in \set{2,\ldots,N-1}$ and unit vectors
$\ket{v_1},\ldots,\ket{v_k} \in \ker(\bra{u})$. The map
\begin{equation}
(\ket{w_1},\ldots,\ket{w_k})
\longmapsto
\braket{\Psi}{w_1 \otimes \cdots \otimes w_k \otimes u^{\otimes (N-k)}}
\notag
\end{equation}
is a symmetric multilinear form on $\ker(\bra{u})^k$. For
$\ol{\theta}=(\theta_1,\ldots,\theta_k)\in[0,2\pi]^k$, define
\begin{equation}
\ket{w_{\ol{\theta}}}
:=
\sum_{r=1}^k e^{i\theta_r}\ket{v_r}.
\notag
\end{equation}
By \Cref{lem:torus-polarization} and \Cref{eq:bosonic-argmax-diagonal},
\begin{align}
\abs[\big]{\braket{\Psi}{v_1 \otimes \cdots \otimes v_k \otimes u^{\otimes (N-k)}}}
& \le
\frac{1}{k!(2\pi)^k}
\int_{[0,2\pi]^k}
\abs[\big]{\braket{\Psi}{w_{\ol{\theta}}^{\otimes k}\otimes u^{\otimes(N-k)}}}
\,\mathrm{d}\ol{\theta}
\notag \\
& \le
\alpha\frac{k^k}{k!}
\paren*{\frac{ek}{N}}^{k/2}.
\notag
\end{align}
Here, we used $\norm{w_{\ol{\theta}}}_2\le k$. Multilinearity gives the product of norms in \Cref{eq:bosonic-argmax-multilinear}, and \Cref{lem:elementary-estimates} gives $k^k/k!\le e^k$. This completes the proof.
\end{proof}


\section{Dimension-Free Stability of Repeated Count POVMs}\label{sec:count-stability}

Finite quantum de Finetti theorems compare marginals of symmetric states with mixtures of tensor powers \cite{ChristandlEtAl2007}. Restricted-measurement versions can improve their dimension dependence \cite{BrandaoHarrow2017}. For the repeated binary measurements used here, it is enough to compare the extremal value of a lifted count test with its value on tensor powers. \Cref{thm:bosonic-stab-count} obtains such a comparison with no dependence on $\dim\mcH$. Its proof combines the eigenvector-contraction argument of \cite{JWXdeFinetti} with the transverse expansion of a count test and standard estimates for Bernstein polynomials.

Throughout, fix a finite-dimensional Hilbert space $\mcH$ and integers
$t,N\in\N_{>0}$ such that $t\le N$.

\subsection{Count-Test Effects}

Fix an effect $E$ on $\mcH$. For every $\ell\in\set{0,\ldots,t}$, define the exact-count effect
\begin{equation}
\rmM_\ell(E)
:=
\sum_{S\in\binom{[t]}{\ell}}
\bigotimes_{h=1}^t\widetilde E_S^{(h)},
\qquad
\widetilde E_S^{(h)}
:=
\begin{cases}
E & \text{if }h\in S,\\
I_\mcH-E & \text{if }h\notin S.
\end{cases}
\notag
\end{equation}
Each $\rmM_\ell(E)$ is positive semidefinite, and
\begin{align}
\sum_{\ell=0}^t\rmM_\ell(E)
&=
\sum_{S\subseteq[t]}
\bigotimes_{h=1}^t\widetilde E_S^{(h)}
\notag \\
&=
\bigotimes_{h=1}^t\paren[\big]{E+(I_\mcH-E)}
=I_{\mcH^{\otimes t}}.
\notag
\end{align}
Thus, $\set{\rmM_\ell(E):\ell\in\set{0,\ldots,t}}$ is a POVM.
For every $\ol a=(a_0,\ldots,a_t)\in[0,1]^{t+1}$, define the count-test effect
\begin{equation}
\rmR_{\ol a}(E)
:=
\sum_{\ell=0}^t a_\ell\rmM_\ell(E)
\notag
\end{equation}
and its Bernstein polynomial $Q_{\ol a}:[0,1]\to\R$ by
\begin{equation}
Q_{\ol a}(p)
:=
\sum_{\ell=0}^t
\binom{t}{\ell}a_\ell p^\ell(1-p)^{t-\ell}.
\notag
\end{equation}

This is the standard Bernstein basis representation
\cite{Farouki2012}. Since the exact-count effects form a POVM and
$a_\ell\in[0,1]$, $\rmR_{\ol a}(E)$ is an effect, each
$\rmM_\ell(E)$ is permutation invariant, since tensor-factor permutations merely permute the subsets in $\binom{[t]}{\ell}$; hence so is $\rmR_{\ol a}(E)$.

\begin{claim}\label{clm:count-test-product-value}
For every $\ket{u}\in\bbS(\mcH)$,
\begin{equation}
\matrixel{u^{\otimes t}}{\rmR_{\ol a}(E)}{u^{\otimes t}}
=
Q_{\ol a}\paren[\big]{\matrixel{u}{E}{u}}.
\notag
\end{equation}
\end{claim}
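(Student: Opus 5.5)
The plan is to expand $\rmR_{\ol a}(E)$ by linearity into the exact-count effects and evaluate each product term on $\ket{u}^{\otimes t}$ directly. By definition,
\begin{equation}
\matrixel{u^{\otimes t}}{\rmR_{\ol a}(E)}{u^{\otimes t}}
=
\sum_{\ell=0}^t a_\ell\matrixel{u^{\otimes t}}{\rmM_\ell(E)}{u^{\otimes t}},
\notag
\end{equation}
so it suffices to show that $\matrixel{u^{\otimes t}}{\rmM_\ell(E)}{u^{\otimes t}}=\binom{t}{\ell}p^\ell(1-p)^{t-\ell}$ for each $\ell$, where $p:=\matrixel{u}{E}{u}$.

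For a fixed subset $S\in\binom{[t]}{\ell}$, the operator $\bigotimes_{h=1}^t\widetilde E_S^{(h)}$ is an elementary tensor. Its expectation on the product vector $\ket{u}^{\otimes t}$ therefore factorizes as $\prod_{h=1}^t\matrixel{u}{\widetilde E_S^{(h)}}{u}$. Each factor equals $p$ when $h\in S$. When $h\notin S$, it equals $\matrixel{u}{I_\mcH-E}{u}=\norm{u}_2^2-p=1-p$, using that $\ket{u}$ is a unit vector. Hence every such term contributes exactly $p^\ell(1-p)^{t-\ell}$.

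Summing over the $\binom{t}{\ell}$ subsets of size $\ell$ gives the exact-count value. Then summing against the weights $a_\ell$ reproduces $Q_{\ol a}(p)$ term by term, by the definition of the Bernstein polynomial.

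No step here is a genuine obstacle. The only point to state carefully is the factorization of the expectation of a tensor-product operator on a product vector, $\matrixel{u^{\otimes t}}{A_1\otimes\cdots\otimes A_t}{u^{\otimes t}}=\prod_h\matrixel{u}{A_h}{u}$, which follows from the definition of the inner product on $\mcH^{\otimes t}$. The normalization $\norm{u}_2=1$ is used exactly once, to obtain the factor $1-p$.
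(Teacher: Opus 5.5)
Your proof is correct and matches the paper's argument: expand $\rmR_{\ol a}(E)$ over the exact-count effects, factor each elementary tensor's expectation on $\ket{u}^{\otimes t}$ into $p^\ell(1-p)^{t-\ell}$, and count the $\binom{t}{\ell}$ subsets of each size. The one step you spell out that the paper leaves implicit is that $\matrixel{u}{I_\mcH-E}{u}=1-p$ uses $\norm{u}_2=1$.
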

\begin{proof}
Writing $p_u:=\matrixel{u}{E}{u}$, we compute
\begin{align}
\matrixel{u^{\otimes t}}{\rmR_{\ol a}(E)}{u^{\otimes t}}
&=
\sum_{\ell=0}^t
\sum_{S\in\binom{[t]}{\ell}}
a_\ell
\prod_{h=1}^t
\matrixel{u}{\widetilde E_S^{(h)}}{u}
\notag \\
&=
\sum_{\ell=0}^t
\binom{t}{\ell}a_\ell
p_u^\ell(1-p_u)^{t-\ell}
=Q_{\ol a}(p_u),
\notag
\end{align}
which proves the claim.
\end{proof}

\subsection{Transverse Expansion of Count-Test Effects}

Fix $\ket{u}\in\bbS(\mcH)$ and write
\begin{equation}
p_u:=\matrixel{u}{E}{u},
\qquad
\ket{u^\perp}:=
\paren[\big]{I_\mcH-\ketbra{u}{u}}E\ket{u}.
\notag
\end{equation}
Then
\begin{equation}
E\ket{u}=p_u\ket{u}+\ket{u^\perp},
\qquad
\braket{u}{u^\perp}=0.
\notag
\end{equation}
Moreover, $E^2\preceq E$ gives
\begin{align}
\norm{u^\perp}_2^2
&=
\matrixel{u}{E^2}{u}
-\matrixel{u}{E}{u}^2
\notag \\
&\le
p_u-p_u^2
\le\frac14.
\notag
\end{align}
In particular, $\norm{u^\perp}_2\le\frac12$.

\begin{claim}\label{clm:count-test-action}
For every $\ol a\in[0,1]^{t+1}$,
\begin{equation}
\rmR_{\ol a}(E)\ket{u^{\otimes t}}
=
\sum_{j=0}^t
\frac{Q_{\ol a}^{(j)}(p_u)}{(t)_j}
\sum_{S\in\binom{[t]}{j}}
\bigotimes_{h=1}^t
\ket{\widetilde u_S^{(h)}},
\qquad \qquad
\ket{\widetilde u_S^{(h)}}
:=
\begin{cases}
\ket{u^\perp} & \text{if }h\in S,\\
\ket{u} & \text{if }h\notin S.
\end{cases}
\notag
\end{equation}
where $(t)_j$ is the falling factorial and $Q_{\ol{a}}^{(j)}$ is the $j$-th derivative of $Q_{\ol{a}}$. 
\end{claim}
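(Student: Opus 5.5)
The identity is purely algebraic, so we expand both sides in the basis of products of $\ket{u}$ and $\ket{u^\perp}$ and compare coefficients. Since $E\ket u=p_u\ket u+\ket{u^\perp}$ and $(I-E)\ket u=(1-p_u)\ket u-\ket{u^\perp}$, each factor of $\rmM_\ell(E)\ket{u^{\otimes t}}$ is a two-term vector. Multiplying out gives
\begin{equation}
\rmR_{\ol a}(E)\ket{u^{\otimes t}}
=
\sum_{S'\subseteq[t]}
c_{\abs{S'}}
\bigotimes_{h=1}^t\ket{\widetilde u_{S'}^{(h)}}.
\notag
\end{equation}
By permutation symmetry the coefficient $c_{\abs{S'}}$ depends only on $j=\abs{S'}$. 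It therefore suffices to show
\begin{equation}
c_j=\frac{Q_{\ol a}^{(j)}(p_u)}{(t)_j}.
\notag
\end{equation}

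Fix $S'$ with $\abs{S'}=j$. A pair $(\ell,S)$ with $\abs S=\ell$ contributes to this coefficient as follows. At each position $h\in S'$ we pick up $+1$ if $h\in S$ and $-1$ otherwise. At each position $h\notin S'$ we pick up $p_u$ if $h\in S$ and $1-p_u$ otherwise. Hence
\begin{equation}
c_j=\sum_{\ell}a_\ell\sum_{i}(-1)^{j-i}\binom{j}{i}\binom{t-j}{\ell-i}p_u^{\ell-i}(1-p_u)^{t-j-\ell+i},
\notag
\end{equation}
where $i=\abs{S\cap S'}$.

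The cleanest way to organize this is through a generating polynomial. Let $x_h$ be formal variables and define
\begin{equation}
F(x_1,\dots,x_t):=\sum_\ell a_\ell\sum_{\abs S=\ell}\prod_{h\in S}x_h\prod_{h\notin S}(1-x_h).
\notag
\end{equation}
Each factor $x_h$ or $1-x_h$ is affine, and substituting $x_h=p_u+y_h$ turns it into $p_u+y_h$ or $(1-p_u)-y_h$. This substitution exactly mirrors the decomposition of $E\ket u$ and $(I-E)\ket u$. So $c_j$ is the coefficient of $\prod_{h\in S'}y_h$ in $F(p_u+y)$, which equals the mixed partial derivative $\partial_{S'}F$ evaluated at $(p_u,\dots,p_u)$. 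Because $F$ is multiaffine, this identification involves no higher powers.

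Now $F$ is symmetric in its variables, and its restriction to the diagonal is $F(p,\dots,p)=Q_{\ol a}(p)$. Differentiating $j$ times along the diagonal gives
\begin{equation}
Q_{\ol a}^{(j)}(p)=\sum_{(h_1,\dots,h_j)}\partial_{h_1}\cdots\partial_{h_j}F.
\notag
\end{equation}
Since $F$ is multiaffine, every term with a repeated index vanishes. The surviving terms are the ordered tuples of distinct indices, and there are $(t)_j$ of them. By symmetry each equals $\partial_{S'}F(p,\dots,p)$. Therefore $Q_{\ol a}^{(j)}(p_u)=(t)_j\,c_j$, which is the claim.

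The main obstacle is the bookkeeping that identifies $c_j$ with a mixed partial derivative: the sign $-1$ from $-\ket{u^\perp}$ must match the derivative of $1-x_h$. The multiaffine polynomial $F$ handles this cleanly, and avoids manipulating the binomial sums for $c_j$ directly.
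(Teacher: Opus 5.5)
Your proof is correct, and it takes a different route from the paper's.

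The paper fixes $T$ with $|T|=j$ and computes the coefficient directly as the binomial sum $\sum_{\ell}\binom{t-j}{\ell}p_u^\ell(1-p_u)^{t-j-\ell}\sum_k(-1)^{j-k}\binom jk a_{\ell+k}$. It recognizes the inner sum as the forward difference $(\Delta^j\ol a)_\ell$. It then invokes the Bernstein derivative identity of \Cref{lem:bernstein-derivatives}, which is proved by iterating $b_{t,\ell}'=t(b_{t-1,\ell-1}-b_{t-1,\ell})$.

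You instead view the count test as the symmetric multiaffine polarization $F$ of $Q_{\ol a}$. You read off each coefficient as an exact Taylor coefficient $\partial_{S'}F(p_u,\ldots,p_u)$. The factor $(t)_j$ comes from the chain rule along the diagonal, with repeated-index terms killed by multiaffinity. This avoids the binomial bookkeeping, finite differences, and the Bernstein lemma altogether. It is also arguably more conceptual: it is the same tensor-to-multiaffine-polynomial mechanism that underlies the product-POVM expansion in \Cref{clm:product-povm-action}.

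What the paper's route buys is the explicit forward-difference representation. The paper reuses it to get $|Q_{\ol a}^{(j)}(p)|\le 2^{j-1}(t)_j$, which is needed in \Cref{thm:bosonic-stab-count}. Your framework recovers that bound just as easily. For multiaffine $F$, the mixed partial $\partial_{S'}F$ is an alternating sum of $2^j$ evaluations of $F$, with the coordinates in $S'$ set to $0$ or $1$ and the rest set to $p_u$. Each such evaluation is an average of the $a_\ell$ under independent coin flips, so it lies in $[0,1]$ when $\ol a\in[0,1]^{t+1}$. Hence $|c_j|\le 2^{j-1}$.
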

\begin{proof}
For every $S\subseteq[t]$,
\begin{equation}
\widetilde E_S^{(h)}\ket{u}
=
\begin{cases}
p_u\ket{u}+\ket{u^\perp} & \text{if }h\in S,\\
(1-p_u)\ket{u}-\ket{u^\perp} & \text{if }h\notin S.
\end{cases}
\notag
\end{equation}
Fix $T\subseteq[t]$ and let $j:=|T|$. In the expansion of
$\rmR_{\ol a}(E)\ket{u^{\otimes t}}$, the coefficient of the tensor having $\ket{u^\perp}$ in the positions in $T$ and $\ket{u}$ elsewhere is
\begin{align}
\sum_{S\subseteq[t]}
a_{|S|}
p_u^{|S\setminus T|}
(1-p_u)^{t-|S\cup T|}
(-1)^{|T\setminus S|}
& =
\sum_{\ell=0}^{t-j}
\binom{t-j}{\ell}
p_u^\ell(1-p_u)^{t-j-\ell}
\sum_{k=0}^j
(-1)^{j-k}\binom{j}{k}a_{\ell+k}
\notag \\
&  =
\sum_{\ell=0}^{t-j}
(\Delta^j\ol a)_\ell
\binom{t-j}{\ell}
p_u^\ell(1-p_u)^{t-j-\ell}
\notag \\ & =
\frac{Q_{\ol a}^{(j)}(p_u)}{(t)_j}.
\notag
\end{align}
The final equality is the Bernstein derivative identity in \Cref{lem:bernstein-derivatives}. The coefficient depends only on $|T|=j$, which proves the claim.
\end{proof}

\subsection{Bosonic Stability of Count-Test Effects}

Let $(\Omega,\mu)$ be a finite probability space. For every $\omega\in\Omega$, let $E_\omega$ be an effect on $\mcH$ and let
$\ol a_\omega=(a_{\omega,0},\ldots,a_{\omega,t})\in[0,1]^{t+1}$. Define
\begin{equation}
\rmR_\mu
:=
\EE_{\omega\sim\mu}
\brak[\big]{\rmR_{\ol a_\omega}(E_\omega)}.
\notag
\end{equation}
For each $T\in\binom{[N]}{t}$, let $(\rmR_\mu)_T$ denote the operator on $\mcH^{\otimes N}$ given by $\rmR_\mu$ applied to the tensor factors in $T$ in increasing order and the identity applied to all remaining factors. Define the unordered lift
\begin{equation}
\rmR_\mu^{[N]}
:=
\frac{1}{\binom Nt}
\sum_{T\in\binom{[N]}{t}}
(\rmR_\mu)_T.
\notag
\end{equation}
This is a permutation-invariant effect.

\begin{claim}\label{clm:count-test-lift-value}
For every $\ket{u}\in\bbS(\mcH)$,
\begin{align}
\matrixel{u^{\otimes N}}{\rmR_\mu^{[N]}}{u^{\otimes N}}
&=
\matrixel{u^{\otimes t}}{\rmR_\mu}{u^{\otimes t}} =
\EE_{\omega\sim\mu}
\brak[\big]{
Q_{\ol a_\omega}\paren[\big]{\matrixel{u}{E_\omega}{u}}
}.
\notag
\end{align}
For every $\ket{\Phi}\in\bigvee^N\!(\mcH)$,
\begin{equation}
\matrixel{\Phi}{\rmR_\mu^{[N]}}{u^{\otimes N}}
=
\bra{\Phi}
\paren[\big]{\rmR_\mu\otimes I_\mcH^{\otimes(N-t)}}
\ket{u^{\otimes N}}.
\notag
\end{equation}
\end{claim}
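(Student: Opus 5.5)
Both identities come from how the lift acts on tensor powers and on symmetric vectors; no estimates are needed.

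\emph{First identity.} For each $T\in\binom{[N]}{t}$, write $\ket{u^{\otimes N}}$ as $\ket{u^{\otimes t}}$ on the factors in $T$ tensored with $\ket{u^{\otimes(N-t)}}$ on the rest. The ordering of $T$ does not matter, because every factor equals $\ket{u}$. The identity acts on the complementary factors and $\braket{u}{u}=1$, so
\begin{equation}
\matrixel{u^{\otimes N}}{(\rmR_\mu)_T}{u^{\otimes N}}=\matrixel{u^{\otimes t}}{\rmR_\mu}{u^{\otimes t}}
\notag
\end{equation}
for every $T$. Averaging over $T$ gives the first equality. For the second, use linearity of the expectation defining $\rmR_\mu$ and apply \Cref{clm:count-test-product-value} to each $\rmR_{\ol a_\omega}(E_\omega)$.

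\emph{Second identity.} Let $T\in\binom{[N]}{t}$ and choose $\sigma\in\mfS_N$ mapping $[t]$ increasingly onto $T$. Then
\begin{equation}
(\rmR_\mu)_T=U_\sigma\paren[\big]{\rmR_\mu\otimes I_\mcH^{\otimes(N-t)}}U_\sigma^\dagger .
\notag
\end{equation}
This is just the definition of applying an operator to the factors in $T$ in increasing order, checked on product vectors using the convention $U_\sigma(\otimes_i\ket{u_i})=\otimes_i\ket{u_{\sigma^{-1}(i)}}$. (If one prefers to avoid fixing the convention, take $\sigma$ to be whichever of the two candidate permutations realizes this conjugation; the rest of the argument is unchanged.)

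Now $U_\sigma^\dagger\ket{u^{\otimes N}}=\ket{u^{\otimes N}}$, and $U_\sigma^\dagger\ket{\Phi}=\ket{\Phi}$ because $\ket{\Phi}$ is symmetric, so $\bra{\Phi}U_\sigma=\bra{\Phi}$. Hence
\begin{equation}
\matrixel{\Phi}{(\rmR_\mu)_T}{u^{\otimes N}}
=\bra{\Phi}\paren[\big]{\rmR_\mu\otimes I_\mcH^{\otimes(N-t)}}\ket{u^{\otimes N}}
\notag
\end{equation}
for every $T$. Averaging over the $\binom Nt$ subsets gives the claim.

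The only point needing care is the permutation convention in the conjugation step. The claim does not depend on it, since both $\ket{\Phi}$ and $\ket{u^{\otimes N}}$ are invariant under every $U_\sigma$.
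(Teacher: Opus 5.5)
Your proof is correct and takes the same approach as the paper. Each placement $(\rmR_\mu)_T$ gives the same value on $\ket{u^{\otimes N}}$, which with \Cref{clm:count-test-product-value} yields the first identity, and it gives the same contraction against $\bra{\Phi}$ by permutation invariance, which yields the second. You have spelled out the paper's one-line argument by writing $(\rmR_\mu)_T=U_\sigma(\rmR_\mu\otimes I_\mcH^{\otimes(N-t)})U_\sigma^\dagger$, the same conjugation the paper uses for the ordered lift in \Cref{sec:product-stability}.
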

\begin{proof}
Every placement in the unordered lift gives the same expectation on $\ket{u^{\otimes N}}$ and, because $\ket{\Phi}$ is permutation-invariant, the same contraction against $\bra{\Phi}$. The first assertion then follows from \Cref{clm:count-test-product-value}, and the second follows directly from the definition of the lift.
\end{proof}

\begin{theorem}\label{thm:bosonic-stab-count}
If $N\ge2e^3t^2$, then
\begin{align}
0
&\le
\min_{\ket{u}\in\bbS(\mcH)}
\EE_{\omega\sim\mu}
\brak[\big]{
Q_{\ol a_\omega}\paren[\big]{\matrixel{u}{E_\omega}{u}}
}
-
\min_{\ket{\Psi}\in\bbS(\bigvee^N\!(\mcH))}
\matrixel{\Psi}{\rmR_\mu^{[N]}}{\Psi} \le
\frac{e^3t^2}{2N}.
\notag
\end{align}
\end{theorem}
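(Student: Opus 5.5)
The lower bound is immediate. The upper bound I will get by contracting an extremal symmetric eigenvector against the tensor power of maximal overlap (\Cref{thm:bosonic-argmax}), expanding the count effect transversally (\Cref{clm:count-test-action}), and summing the resulting series.

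\emph{Lower bound.} Every tensor power $\ket{u^{\otimes N}}$ with $\ket{u}\in\bbS(\mcH)$ lies in $\bbS(\bigvee^N\!(\mcH))$. By \Cref{clm:count-test-lift-value}, its value under $\rmR_\mu^{[N]}$ equals $\EE_\omega[Q_{\ol a_\omega}(\matrixel{u}{E_\omega}{u})]$. Hence the symmetric minimum is at most the tensor-power minimum.

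\emph{Upper bound, setup.} Since $\rmR_\mu^{[N]}$ is permutation invariant, it commutes with $\Pi_{\mathrm{sym}}^{(N)}$. So the symmetric minimum $\lambda$ is attained at an eigenvector $\ket{\Psi^*}\in\bigvee^N\!(\mcH)$ with $\rmR_\mu^{[N]}\ket{\Psi^*}=\lambda\ket{\Psi^*}$. Apply \Cref{thm:bosonic-argmax} to $\ket{\Psi^*}$ to obtain $\ket{u}$ with $\alpha=\braket{\Psi^*}{u^{\otimes N}}>0$.

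\emph{Upper bound, contraction.} Contracting the eigenvalue equation and using the second part of \Cref{clm:count-test-lift-value} gives
\[
\alpha\lambda=\EE_{\omega\sim\mu}\bra{\Psi^*}\paren[\big]{\rmR_{\ol a_\omega}(E_\omega)\otimes I^{\otimes(N-t)}}\ket{u^{\otimes N}}.
\]
Insert \Cref{clm:count-test-action} with $\ket{u_\omega^\perp}=(I-\ketbra{u}{u})E_\omega\ket{u}$. By symmetry of $\ket{\Psi^*}$, all $\binom tj$ subsets $S$ of size $j$ contribute the same quantity
\[
\gamma_j^\omega=\braket{\Psi^*}{(u^\perp_\omega)^{\otimes j}\otimes u^{\otimes(N-j)}}.
\]
Since $\binom tj/(t)_j=1/j!$, this yields
\[
\alpha\lambda=\EE_\omega\sum_{j=0}^t \frac{Q^{(j)}_{\ol a_\omega}(p_{u,\omega})}{j!}\,\gamma_j^\omega .
\]
The three groups of terms are handled as follows.
\begin{itemize}
\item The $j=0$ term is $\alpha\,\EE_\omega Q_{\ol a_\omega}(p_{u,\omega})=\alpha f(u)$.
\item The $j=1$ term vanishes by \Cref{eq:bosonic-argmax-first-order}, since $u^\perp_\omega\perp u$.
\item For $2\le j\le t<N$, use \Cref{eq:bosonic-argmax-diagonal} with $\norm{u_\omega^\perp}_2\le\frac12$. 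This gives $\abs{\gamma_j^\omega}\le\alpha(ej/N)^{j/2}2^{-j}$.
\end{itemize}
Dividing by $\alpha$ leaves $\lambda\ge f(u)-\sum_{j\ge2}(\cdots)\ge\min f-\sum_{j\ge2}(\cdots)$. The factor $\alpha$ is never bounded below; it only cancels.

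\emph{Derivative bound and summation.} This is the main obstacle, since the final constant must come out exactly. From \Cref{lem:bernstein-derivatives}, $Q^{(j)}_{\ol a}(p)=(t)_j\sum_\ell(\Delta^j\ol a)_\ell\binom{t-j}{\ell}p^\ell(1-p)^{t-j-\ell}$. With $a_\ell\in[0,1]$ we have $\abs{(\Delta^j\ol a)_\ell}\le2^{j-1}$, because the positive and negative binomial coefficients each sum to $2^{j-1}$. Hence $\abs{Q^{(j)}}/j!\le\binom tj2^{j-1}\le t^j2^{j-1}/j!$. The $j$-th error term is then at most
\[
\tfrac12\,\frac{t^j}{j!}\paren*{\frac{ej}{N}}^{j/2}\le\tfrac12\paren*{\frac{e^{3/2}t}{\sqrt{Nj}}}^j=\tfrac12\paren*{\frac{x}{j}}^{j/2},
\qquad x:=\frac{e^3t^2}{N},
\]
using $j!\ge(j/e)^j$. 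The hypothesis $N\ge2e^3t^2$ means $x\le\frac12$, and it also guarantees $t\le N-1$, so \Cref{thm:bosonic-argmax} applies for every $j\le t$. It remains to check $\sum_{j\ge2}(x/j)^{j/2}\le x$. The $j=2$ term is $x/2$. The tail $\sum_{j\ge3}(x/j)^{j/2}\le\sum_{j\ge3}(x/3)^{j/2}$ is a geometric series with ratio $\sqrt{x/3}\le\sqrt{1/6}$, and it is at most $x/2$ when $x\le\frac12$. This gives a total error of at most $x/2=e^3t^2/(2N)$. If the bookkeeping of constants in the tail is tight, I would first try bounding the $j\ge3$ terms by $(x/3)^{3/2}/(1-\sqrt{1/6})$ and then using $x\le\frac12$.
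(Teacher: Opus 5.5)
Your proposal is correct and follows the paper's proof essentially step for step. The matching steps are the minimizing symmetric eigenvector, argmax rounding via \Cref{thm:bosonic-argmax}, the transverse expansion from \Cref{clm:count-test-action} with $\binom tj/(t)_j=1/j!$, the vanishing $j=1$ term, and the per-order bound $\frac12\binom tj (ej/N)^{j/2}$ for $j\ge2$. The only difference is the final summation: the paper replaces $j$ by $2$ inside $(e^{3/2}t/\sqrt{jN})^j$ to get one geometric series in $\rho=e^{3/2}t/\sqrt{2N}\le\frac12$, whereas you keep $(x/j)^{j/2}$, isolate $j=2$, and bound the tail separately; both give exactly $e^3t^2/(2N)$.
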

\begin{proof}
For every $\ket{u}\in\bbS(\mcH)$, the vector $\ket{u^{\otimes N}}$ lies in $\bbS(\bigvee^N\!(\mcH))$, and \Cref{clm:count-test-lift-value} gives
\begin{equation}
\matrixel{u^{\otimes N}}{\rmR_\mu^{[N]}}{u^{\otimes N}}
=
\EE_{\omega\sim\mu}
\brak[\big]{
Q_{\ol a_\omega}\paren[\big]{\matrixel{u}{E_\omega}{u}}
}.
\notag
\end{equation}
Thus, the symmetric minimum is at most the tensor-power minimum, which proves the lower bound.
For the upper bound, choose a vector
$\ket{\Psi^*}\in\bbS(\bigvee^N\!(\mcH))$ attaining the symmetric minimum and write
\begin{equation}
\rmR_\mu^{[N]}\ket{\Psi^*}
=
\lambda\ket{\Psi^*}.
\notag
\end{equation}
Let $\ket{u}\in\bbS(\mcH)$ maximize
$\abs{\braket{\Psi^*}{u^{\otimes N}}}$, and choose its phase so that
\begin{equation}
\alpha:=\braket{\Psi^*}{u^{\otimes N}}>0.
\notag
\end{equation}
For every $\omega\in\Omega$, write
\begin{equation}
p_{u,\omega}:=\matrixel{u}{E_\omega}{u},
\qquad
\ket{u_\omega^\perp}
:=
\paren[\big]{I_\mcH-\ketbra{u}{u}}E_\omega\ket{u}.
\notag
\end{equation}
Then $\braket{u}{u_\omega^\perp}=0$ and
$\norm{u_\omega^\perp}_2\le\frac12$. By self-adjointness, \Cref{clm:count-test-lift-value}, and \Cref{clm:count-test-action},
\begin{align}
\alpha\lambda
&=
\matrixel{\Psi^*}{\rmR_\mu^{[N]}}{u^{\otimes N}} =
\EE_{\omega\sim\mu}
\sum_{j=0}^t
\frac{Q_{\ol a_\omega}^{(j)}(p_{u,\omega})}{j!}
\braket{\Psi^*}{(u_\omega^\perp)^{\otimes j}\otimes u^{\otimes(N-j)}}.
\label{eq:count-stability-expansion}
\end{align}
Here, the factor $1/j!$ follows from
$(t)_j=j!\binom tj$ and the permutation invariance of $\ket{\Psi^*}$. The $j=0$ term of \Cref{eq:count-stability-expansion} equals $ \alpha \EE_{\omega\sim\mu} \brak[\big]{Q_{\ol a_\omega}(p_{u,\omega})}$
and the $j=1$ term vanishes by \Cref{eq:bosonic-argmax-first-order}. 

\noindent
For every $j\in\set{2,\ldots,t}$, \Cref{lem:bernstein-derivatives} gives
\begin{equation}
\abs{Q_{\ol a_\omega}^{(j)}(p_{u,\omega})}
\le
2^{j-1}(t)_j.
\notag
\end{equation}
Moreover, $N\ge2e^3t^2$ implies $t<N$, so \Cref{eq:bosonic-argmax-diagonal} and
$\norm{u_\omega^\perp}_2\le\frac12$ give
\begin{align}
\frac1\alpha
\abs*{
\frac{Q_{\ol a_\omega}^{(j)}(p_{u,\omega})}{j!}
\braket{\Psi^*}{(u_\omega^\perp)^{\otimes j}\otimes u^{\otimes(N-j)}}
} \le
\frac{2^{j-1}(t)_j}{j!}
\paren*{\frac12}^j
\paren*{\frac{ej}{N}}^{j/2}
=
\frac12\binom tj
\paren*{\frac{ej}{N}}^{j/2}.
\notag
\end{align}
Taking the lower triangle inequality in \Cref{eq:count-stability-expansion}, we obtain
\begin{equation}
\lambda
\ge
\EE_{\omega\sim\mu}
\brak[\big]{Q_{\ol a_\omega}(p_{u,\omega})}
-
\frac12\sum_{j=2}^t
\binom tj
\paren*{\frac{ej}{N}}^{j/2}.
\notag
\end{equation}
For every $j\in\set{2,\ldots,t}$, \Cref{lem:elementary-estimates} gives
\begin{equation}
\binom tj\paren*{\frac{ej}{N}}^{j/2}
\le
\paren*{\frac{e^{3/2}t}{\sqrt{jN}}}^j
\le
\paren*{\frac{e^{3/2}t}{\sqrt{2N}}}^j.
\notag
\end{equation}
Define
\begin{equation}
\rho:=\frac{e^{3/2}t}{\sqrt{2N}}.
\notag
\end{equation}
The assumption $N\ge2e^3t^2$ gives $\rho\le\frac12$, and hence
\begin{equation}
\frac12\sum_{j=2}^t\rho^j
\le
\frac{\rho^2}{2(1-\rho)}
\le
\rho^2
=
\frac{e^3t^2}{2N}.
\notag
\end{equation}
Consequently,
\begin{align}
\lambda
&\ge
\EE_{\omega\sim\mu}
\brak[\big]{
Q_{\ol a_\omega}\paren[\big]{\matrixel{u}{E_\omega}{u}}
}
-\frac{e^3t^2}{2N} \ge
\min_{\ket{z}\in\bbS(\mcH)}
\EE_{\omega\sim\mu}
\brak[\big]{
Q_{\ol a_\omega}\paren[\big]{\matrixel{z}{E_\omega}{z}}
}
-\frac{e^3t^2}{2N}.
\notag
\end{align}
This proves the upper bound.
\end{proof}


\section{Simulating \texorpdfstring{$\mathsf{PureSuperQMA}(\mathrm{exp})$}{PureSuperQMA(exp)} in \texorpdfstring{$\mathsf{QMA}$}{QMA}}\label{sec:puresuper-collapse}

Let $b,c,d,M,V$ witness that
$L\in\mathsf{PureSuperQMA}(\mathrm{exp})$ as in
\Cref{def:puresuperqma}. Fix a promised input $x\in\two^*$ of length $n$, and write
\begin{equation}
\mcH:=(\C^2)^{\otimes q(n)},
\qquad
M:=M(n),
\qquad
\epsilon:=\frac1{b(n)},
\qquad
\delta:=\frac1{c(n)}.
\notag
\end{equation}
For each $i\in[M]$, let $E_i$ be the accepting effect induced by
$V_{x,i}$ on $\mcH$. Thus,
\begin{equation}
0\preceq E_i\preceq I_\mcH,
\qquad
p_i(u):=\matrixel{u}{E_i}{u}
=
\operatorname{acc}_V(x,i,u)
\notag
\end{equation}
for every $\ket{u}\in\bbS(\mcH)$.

The verifier constructed below has two branches. One branch tests a randomly
chosen acceptance window on several witness registers, the other branch
checks whether a randomly chosen pair of registers is symmetric. The first
branch supplies soundness on the symmetric subspace through
\Cref{thm:bosonic-stab-count}, while the second supplies soundness on its
orthogonal complement through \Cref{thm:random-transposition-gap}.

\subsection{Acceptance Windows as Count-Test Energies}

Define
\begin{equation}\label{eq:puresuper-parameters}
\ell_\delta
:=
\min\set{\ell\in\N_{\ge0}:2^\ell\delta\ge32},
\qquad
t:=\frac{2\ell_\delta}{\epsilon^2},
\qquad
N:=\frac{256t^2}{\delta},
\qquad
\vartheta:=\frac1{4\delta N}.
\end{equation}
Because $\epsilon^{-1}=b(n)$ and $\delta^{-1}=c(n)$ are positive
integers, both $t$ and $N$ are positive integers. For later use, set
\begin{equation}\label{eq:puresuper-concentration-parameter}
\eta_t:=2e^{-t\epsilon^2/2}.
\end{equation}

For every $i\in[M]$ and $\ell\in\set{0,\ldots,t}$, define
\begin{equation}\label{eq:puresuper-count-predicate}
a_{i,\ell}
:=
\Ind\set*{
\abs*{\frac{\ell}{t}-r_{x,i}}
>
s_{x,i}+\frac\epsilon2
},
\qquad
\ol a_i:=(a_{i,0},\ldots,a_{i,t}).
\end{equation}
The definition remains meaningful when
$s_{x,i}+\epsilon/2>1$; in that case all of the indicators simply
vanish. Using the count-test notation from
\Cref{sec:count-stability}, let
\begin{equation}\label{eq:puresuper-averaged-count-effect}
\rmR_{i,t}:=\rmR_{\ol a_i}(E_i),
\qquad
\rmR_t:=\frac1M\sum_{i=1}^M\rmR_{i,t},
\qquad
f_t(u):=\matrixel{u^{\otimes t}}{\rmR_t}{u^{\otimes t}}.
\end{equation}
Each $\rmR_{i,t}$ is a permutation-invariant effect on
$\mcH^{\otimes t}$, so the same is true of $\rmR_t$. Operationally,
$\rmR_t$ is implemented by sampling $i$ uniformly and applying the
corresponding count test; the verifier never constructs or enumerates the
average in \Cref{eq:puresuper-averaged-count-effect}.

\begin{claim}\label{clm:puresuper-tensor-separation}
If $\ket{u}\in\bbS(\mcH)$ satisfies
\begin{equation}
\abs{p_i(u)-r_{x,i}}\le s_{x,i}
\notag
\end{equation}
for every $i\in[M]$, then
\begin{equation}
f_t(u)\le\eta_t.
\notag
\end{equation}
If $x\in L_{\mathrm{no}}$, then every
$\ket{u}\in\bbS(\mcH)$ satisfies
\begin{equation}
f_t(u)\ge\delta(1-\eta_t).
\notag
\end{equation}
\end{claim}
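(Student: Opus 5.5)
By \Cref{clm:count-test-product-value}, the tensor-power value is an average of single-check terms:
\[
f_t(u)=\frac1M\sum_{i=1}^M Q_{\ol a_i}(p_i(u)).
\]
For each $i$, $Q_{\ol a_i}(p)$ is the probability that a $\mathrm{Binomial}(t,p)$ count $\ell$ lands in the rejecting set $\set{\ell:\abs{\ell/t-r_{x,i}}>s_{x,i}+\epsilon/2}$. So both assertions reduce to per-check tail estimates for the empirical mean $\ell/t$ of $t$ i.i.d.\ Bernoulli$(p_i(u))$ trials, and we control those with Hoeffding's inequality:
\[
\Pr\brak*{\abs{\ell/t-p}>\epsilon/2}\le2e^{-2t(\epsilon/2)^2}=2e^{-t\epsilon^2/2}=\eta_t.
\]

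\emph{Completeness.} Fix $i$ and suppose $\abs{p_i(u)-r_{x,i}}\le s_{x,i}$. If the count rejects, then $\abs{\ell/t-r_{x,i}}>s_{x,i}+\epsilon/2$. The triangle inequality then gives $\abs{\ell/t-p_i(u)}>\epsilon/2$. Hence $Q_{\ol a_i}(p_i(u))\le\eta_t$ for every $i$, and averaging over $i$ gives $f_t(u)\le\eta_t$.

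\emph{Soundness.} Let $x\in L_{\mathrm{no}}$ and fix $\ket u\in\bbS(\mcH)$. The promise of \Cref{def:puresuperqma}, with $\epsilon=1/b(n)$ and $\delta=1/c(n)$, says the set of checks
\[
B=\set{i:\abs{p_i(u)-r_{x,i}}>s_{x,i}+\epsilon}
\]
satisfies $\abs{B}\ge\delta M$. Fix $i\in B$. Whenever $\abs{\ell/t-p_i(u)}\le\epsilon/2$, the reverse triangle inequality gives
\[
\abs{\ell/t-r_{x,i}}>s_{x,i}+\epsilon-\epsilon/2=s_{x,i}+\epsilon/2,
\]
so the count rejects. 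Hence $Q_{\ol a_i}(p_i(u))\ge1-\eta_t$ for $i\in B$. The remaining terms satisfy $Q_{\ol a_i}\ge0$, because $Q_{\ol a_i}$ is a probability; this follows since the Bernstein basis is nonnegative and $a_{i,\ell}\in\{0,1\}$. Averaging gives
\[
f_t(u)\ge\frac{\abs{B}}{M}(1-\eta_t)\ge\delta(1-\eta_t).
\]

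There is no real obstacle. The one point needing care is the strict versus non-strict inequalities in Hoeffding's bound and the triangle inequalities. Hoeffding bounds $\Pr[\abs{\ell/t-p}\ge\epsilon/2]$, which dominates the strict event, so the strict-threshold definition of $a_{i,\ell}$ is consistent in both directions. The degenerate case $s_{x,i}+\epsilon/2>1$ needs no separate treatment: all $a_{i,\ell}$ vanish there, and such an $i$ cannot lie in $B$ anyway, since $\abs{p-r}\le1$.
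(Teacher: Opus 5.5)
Your proposal is correct and follows essentially the same route as the paper. Both write each check's tensor-power rejection probability as a binomial tail via \Cref{clm:count-test-product-value}, bound the deviation event with Hoeffding at $a=\epsilon/2$, and then use the triangle inequality for the first assertion and the reverse triangle inequality on the $\delta$-dense set of violated checks for the second.
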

\begin{proof}
Fix $i\in[M]$ and $\ket{u}\in\bbS(\mcH)$. Apply the accepting
measurement of $V_{x,i}$ separately to the $t$ tensor factors of
$\ket{u}^{\otimes t}$, and denote the outcome bits by
$Y_1,\ldots,Y_t$. The tensor-product form of the state implies that the
outcomes are independent Bernoulli random variables with common mean
$p_i(u)$. Write
\begin{equation}
\ol Y:=\frac1t\sum_{h=1}^tY_h.
\notag
\end{equation}
By \Cref{clm:count-test-product-value} and the definition of the
predicate in \Cref{eq:puresuper-count-predicate},
\begin{equation}\label{eq:puresuper-single-check-probability}
\matrixel{u^{\otimes t}}{\rmR_{i,t}}{u^{\otimes t}}
=
\Pr\brak*{
\abs{\ol Y-r_{x,i}}
>
s_{x,i}+\frac\epsilon2
}.
\end{equation}
Moreover, \Cref{lem:hoeffding} gives
\begin{equation}\label{eq:puresuper-hoeffding-bound}
\Pr\brak*{
\abs{\ol Y-p_i(u)}
\ge
\frac\epsilon2
}
\le
2e^{-2t(\epsilon/2)^2}
=
\eta_t.
\end{equation}

Suppose first that the $i$th acceptance window is satisfied, so
$\abs{p_i(u)-r_{x,i}}\le s_{x,i}$. Whenever the count test rejects,
the triangle inequality gives
\begin{align}
\abs{\ol Y-p_i(u)}
&\ge
\abs{\ol Y-r_{x,i}}
-
\abs{p_i(u)-r_{x,i}}
\notag \\
&>
\paren*{s_{x,i}+\frac\epsilon2}-s_{x,i}
=
\frac\epsilon2.
\notag
\end{align}
Thus, the rejection event in
\Cref{eq:puresuper-single-check-probability} is contained in the
deviation event in \Cref{eq:puresuper-hoeffding-bound}, and
\begin{equation}
\matrixel{u^{\otimes t}}{\rmR_{i,t}}{u^{\otimes t}}
\le
\eta_t.
\notag
\end{equation}
If the same vector satisfies every acceptance window, averaging this
inequality over $i$ and using
\Cref{eq:puresuper-averaged-count-effect} proves the first assertion.

Now suppose that $x\in L_{\mathrm{no}}$. For the fixed vector
$\ket{u}$, define the set of violated checks
\begin{equation}
\mcB(u)
:=
\set*{
i\in[M]:
\abs{p_i(u)-r_{x,i}}>s_{x,i}+\epsilon
}.
\notag
\end{equation}
The soundness promise in \Cref{def:puresuperqma} says that
\begin{equation}\label{eq:puresuper-bad-check-density}
\frac{\abs{\mcB(u)}}{M}\ge\delta.
\end{equation}
Fix $i\in\mcB(u)$. If the $i$th count test does not reject, then
$\abs{\ol Y-r_{x,i}}\le s_{x,i}+\epsilon/2$, and the reverse triangle
inequality yields
\begin{align}
\abs{\ol Y-p_i(u)}
&\ge
\abs{p_i(u)-r_{x,i}}
-
\abs{\ol Y-r_{x,i}}
\notag \\
&>
\paren*{s_{x,i}+\epsilon}
-
\paren*{s_{x,i}+\frac\epsilon2}
=
\frac\epsilon2.
\notag
\end{align}
Consequently, \Cref{eq:puresuper-hoeffding-bound} shows that every
violated check rejects with probability at least $1-\eta_t$. The
remaining checks have nonnegative rejection probability, and therefore
\begin{align}
f_t(u)
&=
\frac1M
\sum_{i=1}^M
\matrixel{u^{\otimes t}}{\rmR_{i,t}}{u^{\otimes t}}
\notag \\
&\ge
\frac1M
\sum_{i\in\mcB(u)}(1-\eta_t)
\notag \\
&\ge
\delta(1-\eta_t),
\notag
\end{align}
where the last step uses
\Cref{eq:puresuper-bad-check-density}. This proves the second
assertion.
\end{proof}

Minimality in \Cref{eq:puresuper-parameters} gives
$\ell_\delta=\ceil*{\log_2(32/\delta)}$ and
$t\epsilon^2/2=\ell_\delta\ge\ln(32/\delta)$. Hence
\Cref{eq:puresuper-concentration-parameter} yields
$\eta_t\le\delta/16$. Moreover, $N=256t^2/\delta$ and
$\delta N=256t^2$, so $\vartheta=1/(1024t^2)$. The estimates needed
below are
\begin{equation}\label{eq:puresuper-basic-parameter-bounds}
\begin{aligned}
\eta_t\le\frac{\delta}{16},
\qquad
N\ge256t^2>2e^3t^2\qquad
0<\vartheta<\frac12,
\qquad
\delta(1-\eta_t)-\frac{e^3t^2}{2N}
\ge\frac{7\delta}{8}.
\end{aligned}
\end{equation}
Since $t$ is a positive integer, the bound on $N$ also implies
$N\ge t$ and $N\ge2$.
For the resource estimate, substitute $\epsilon^{-1}=b(n)$ and
$\delta^{-1}=c(n)$ into \Cref{eq:puresuper-parameters}:
\begin{equation}\label{eq:puresuper-asymptotic-resource-parameters}
\begin{aligned}
t
&=2b(n)^2\ceil*{\log_2(32c(n))}
=\Oh\paren*{b(n)^2\log_2(2c(n))},
\\
N
&=256c(n)t^2
=\Oh\paren*{c(n)b(n)^4\log_2^2(2c(n))}.
\end{aligned}
\end{equation}
Replacing $b(n)$ and $c(n)$ by $\epsilon^{-1}$ and $\delta^{-1}$ gives
the equivalent form used in the theorem statement.

\subsection{The Symmetry-and-Count Verifier}

Merlin sends an arbitrary state on $Nq(n)$ qubits, viewed as $N$
registers with Hilbert space $\mcH$. The parameter estimates in
\Cref{eq:puresuper-basic-parameter-bounds} will in particular show that
$N\ge t$ and $N\ge2$, so both random choices below are well-defined.
Arthur performs one of the following tests.

\begin{itemize}
\item With probability $1-\vartheta$, Arthur chooses a uniformly random
unordered pair of registers, performs the SWAP test\cite{BuhrmanEtAl2001}, and rejects on the
antisymmetric outcome.

\item With probability $\vartheta$, Arthur chooses $i\in[M]$ uniformly
and then chooses a uniformly random $t$-element subset of the $N$
registers. He runs $V_{x,i}$ separately on the selected registers with
fresh ancillas, obtains outcomes $Y_1,\ldots,Y_t$, and rejects exactly
when
\begin{equation}
\abs*{\frac1t\sum_{h=1}^tY_h-r_{x,i}}
>
s_{x,i}+\frac\epsilon2.
\notag
\end{equation}
\end{itemize}

For $1\le a<b\le N$, let $F_{a,b}$ swap registers $a$ and $b$. The
rejection effect of the first branch is
\begin{equation}\label{eq:random-transposition-effect}
\rmH_{\mathrm{asym}}
:=
\frac1{\binom N2}
\sum_{1\le a<b\le N}
\frac{I-F_{a,b}}2.
\end{equation}
Indeed, $(I-F_{a,b})/2$ is the orthogonal projector onto the
antisymmetric subspace of registers $a$ and $b$, and averaging over the
chosen pair gives \Cref{eq:random-transposition-effect}.

For $T\in\binom{[N]}t$, let $(\rmR_{i,t})_T$ denote
$\rmR_{i,t}$ placed on the registers indexed by $T$, in increasing
order, and the identity on the remaining registers. The rejection
effect of the second branch is
\begin{align}
\frac1{M\binom Nt}
\sum_{i=1}^M
\sum_{T\in\binom{[N]}t}
(\rmR_{i,t})_T
&=
\frac1{\binom Nt}
\sum_{T\in\binom{[N]}t}
(\rmR_t)_T
\notag \\
&=
\rmR_t^{[N]},
\label{eq:puresuper-count-branch-effect}
\end{align}
where the final expression is the unordered lift defined in
\Cref{sec:count-stability}. This calculation also makes explicit why
sampling a single check implements the average effect.

It follows from
\Cref{eq:random-transposition-effect,eq:puresuper-count-branch-effect}
that the ideal verifier has rejection effect
\begin{equation}\label{eq:puresuper-rejection-effect}
\rmH
:=
(1-\vartheta)\rmH_{\mathrm{asym}}
+
\vartheta\rmR_t^{[N]}.
\end{equation}
Both terms in \Cref{eq:puresuper-rejection-effect} are positive
semidefinite effects. Since \Cref{eq:puresuper-basic-parameter-bounds} gives
$0<\vartheta<1$, their weighted sum $\rmH$ is also an effect. Conjugating
$\rmH_{\mathrm{asym}}$ by a tensor-factor permutation merely permutes
the unordered pairs in
\Cref{eq:random-transposition-effect}. Similarly, conjugating
$\rmR_t^{[N]}$ merely permutes the subsets in
\Cref{eq:puresuper-count-branch-effect}. Thus, both effects commute
with every $U_\sigma$ and, in particular, with
$\Pi_{\mathrm{sym}}^{(N)}$. Therefore,
\begin{equation}\label{eq:puresuper-sector-decomposition}
\rmH
=
\Pi_{\mathrm{sym}}^{(N)}
\rmH
\Pi_{\mathrm{sym}}^{(N)}
+
\paren[\big]{I-\Pi_{\mathrm{sym}}^{(N)}}
\rmH
\paren[\big]{I-\Pi_{\mathrm{sym}}^{(N)}},
\end{equation}
so $\rmH$ is block diagonal with respect to
\begin{equation}
\mcH^{\otimes N}
=
\bigvee^N\!(\mcH)
\oplus
\bigvee^N\!(\mcH)^\perp.
\notag
\end{equation}
Finally, every transposition acts as the identity on
$\bigvee^N\!(\mcH)$, and hence
\begin{equation}\label{eq:puresuper-symmetry-kernel}
\left.\rmH_{\mathrm{asym}}\right|_{\bigvee^N(\mcH)}=0.
\end{equation}

\subsection{Completeness and Soundness}

\begin{theorem}\label{thm:puresuper-ideal-compiler}
The ideal verifier has completeness at least
$1-\Oh(N^{-1})$, soundness at most $1-\Omega(N^{-1})$, and
completeness--soundness gap $\Omega(N^{-1})$.
\end{theorem}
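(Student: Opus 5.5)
Completeness and soundness are handled separately, and both reduce to the block decomposition \Cref{eq:puresuper-sector-decomposition} of the rejection effect $\rmH$.

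\textbf{Completeness.} On a YES instance, the promise gives a unit vector $\ket u$ satisfying every acceptance window. Merlin sends $\ket{u}^{\otimes N}$. This state lies in $\bigvee^N(\mcH)$, so \Cref{eq:puresuper-symmetry-kernel} shows that the SWAP branch never rejects it. By \Cref{clm:count-test-lift-value}, the count branch rejects with probability exactly $f_t(u)$. By \Cref{clm:puresuper-tensor-separation}, $f_t(u)\le\eta_t\le\delta/16$. The total rejection probability is therefore at most
\[
\vartheta\,\frac{\delta}{16}=\frac{1}{64N},
\]
since $\vartheta=1/(4\delta N)$. So the completeness is at least $1-\tfrac{1}{64N}$. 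The extra slack in \Cref{thm:puresuper-compiler} presumably comes from compiling the circuits.

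\textbf{Soundness.} On a NO instance, I lower-bound the minimum eigenvalue of $\rmH$. Because $\rmH$ is block diagonal, it suffices to bound its minimum separately on $\bigvee^N(\mcH)$ and on the orthogonal complement.

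On the symmetric block, $\rmH_{\mathrm{asym}}$ vanishes, so $\rmH$ restricts to $\vartheta\rmR_t^{[N]}$. Apply \Cref{thm:bosonic-stab-count} with $\mu$ uniform on $[M]$; its hypothesis $N\ge 2e^3t^2$ holds by \Cref{eq:puresuper-basic-parameter-bounds}. It gives
\[
\min_{\Psi}\matrixel{\Psi}{\rmR_t^{[N]}}{\Psi}\ \ge\ \min_u f_t(u)-\frac{e^3t^2}{2N}\ \ge\ \delta(1-\eta_t)-\frac{e^3t^2}{2N}\ \ge\ \frac{7\delta}{8},
\]
using the NO case of \Cref{clm:puresuper-tensor-separation} for the middle step and \Cref{eq:puresuper-basic-parameter-bounds} for the last. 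Multiplying by $\vartheta$, the symmetric block has minimum eigenvalue at least $\tfrac{7}{32N}$.

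On the orthogonal complement, $\rmR_t^{[N]}\succeq 0$, so it can be dropped. By \Cref{thm:random-transposition-gap}, $\rmH_{\mathrm{asym}}\succeq\frac{1}{N-1}(I-\Pi_{\mathrm{sym}}^{(N)})$. Since $\vartheta<1/2$, this block has minimum eigenvalue at least
\[
(1-\vartheta)\frac{1}{N-1}\ \ge\ \frac{1}{2N}.
\]

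Hence every witness state, pure or mixed, is rejected with probability at least $\min\{7/(32N),\,1/(2N)\}=\Omega(1/N)$. Because the constants do not interlock, the gap is $\Omega(1/N)$; for instance $\tfrac{7}{32N}-\tfrac{1}{64N}>0$.

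The heavy lifting is already done by \Cref{thm:bosonic-stab-count}. The step most likely to go wrong is the bookkeeping: checking that the weight $\vartheta$ is small enough that honest rejection stays below the symmetric-block soundness bound, yet large enough that the count branch still contributes order $1/N$. With $\vartheta=1/(4\delta N)$ both requirements hold, because $\delta$ cancels.
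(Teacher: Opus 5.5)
Your proposal is correct and follows essentially the same route as the paper's proof. You get completeness from the honest tensor power, with rejection at most $\vartheta\eta_t\le 1/(64N)$. On the symmetric block you apply the count-test stability theorem to get $\vartheta\cdot 7\delta/8 = 7/(32N)$, and on the complement the random-transposition gap gives $(1-\vartheta)/(N-1) > 1/(2N)$. Together with the absence of cross terms, this yields the global bound $\rmH\succeq \frac{7}{32N}I$ and the gap $13/(64N)$.
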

\begin{proof}
Suppose first that $x\in L_{\mathrm{yes}}$. By
\Cref{def:puresuperqma}, there is a
$\ket{u}\in\bbS(\mcH)$ satisfying every acceptance window. Merlin
sends $\ket{u}^{\otimes N}$. For every pair $a<b$,
\begin{equation}
F_{a,b}\ket{u}^{\otimes N}
=
\ket{u}^{\otimes N},
\notag
\end{equation}
so the random-pair SWAP branch accepts with certainty. On the other
branch, \Cref{clm:count-test-lift-value} gives
\begin{equation}
\matrixel{u^{\otimes N}}{\rmR_t^{[N]}}{u^{\otimes N}}
=
\matrixel{u^{\otimes t}}{\rmR_t}{u^{\otimes t}}
=
f_t(u).
\notag
\end{equation}
By \Cref{clm:puresuper-tensor-separation}, this value is at most
$\eta_t$. Hence, using
\Cref{eq:puresuper-rejection-effect,eq:puresuper-basic-parameter-bounds},
the total rejection probability is at most
\begin{equation}\label{eq:puresuper-ideal-completeness}
\matrixel{u^{\otimes N}}{\rmH}{u^{\otimes N}}
\le
\vartheta\eta_t
\le
\frac1{4\delta N}\frac{\delta}{16}
=
\frac1{64N}.
\end{equation}
The ideal completeness is therefore at least $1-1/(64N)$.

Now suppose that $x\in L_{\mathrm{no}}$. To apply
\Cref{thm:bosonic-stab-count}, take its finite probability space to be
$[M]$ with the uniform measure, and take
$E_\omega=E_i$ and $\ol a_\omega=\ol a_i$ when $\omega=i$.
Then its averaged effect is exactly $\rmR_t$, and
\Cref{clm:puresuper-tensor-separation} gives
\begin{equation}
\min_{\ket{u}\in\bbS(\mcH)} f_t(u)
\ge
\delta(1-\eta_t).
\notag
\end{equation}
The size premise of \Cref{thm:bosonic-stab-count} is
\Cref{eq:puresuper-basic-parameter-bounds}. The conclusion of that theorem
therefore implies
\begin{align}
\min_{\ket{\Psi}\in\bbS(\bigvee^N(\mcH))}
\matrixel{\Psi}{\rmR_t^{[N]}}{\Psi}
&\ge
\min_{\ket{u}\in\bbS(\mcH)}f_t(u)
-
\frac{e^3t^2}{2N}
\notag \\
&\ge
\delta(1-\eta_t)-\frac{e^3t^2}{2N}
\notag \\
&\ge
\frac{7\delta}{8},
\label{eq:puresuper-symmetric-minimum}
\end{align}
where the final step is
\Cref{eq:puresuper-basic-parameter-bounds}. Since
$\rmR_t^{[N]}$ preserves the symmetric subspace, the variational
characterization of its least eigenvalue turns
\Cref{eq:puresuper-symmetric-minimum} into the operator inequality
\begin{equation}\label{eq:puresuper-symmetric-count-bound}
\left.\rmR_t^{[N]}\right|_{\bigvee^N(\mcH)}
\succeq
\frac{7\delta}{8}
I_{\bigvee^N(\mcH)}.
\end{equation}

On the symmetric sector,
\Cref{eq:puresuper-rejection-effect,eq:puresuper-symmetry-kernel,eq:puresuper-symmetric-count-bound}
give
\begin{align}
\left.\rmH\right|_{\bigvee^N(\mcH)}
&=
\vartheta
\left.\rmR_t^{[N]}\right|_{\bigvee^N(\mcH)}
\notag \\
&\succeq
\frac1{4\delta N}\frac{7\delta}{8}
I_{\bigvee^N(\mcH)}
\notag \\
&=
\frac7{32N}I_{\bigvee^N(\mcH)}.
\label{eq:puresuper-symmetric-rejection-bound}
\end{align}
On the orthogonal sector, positivity of $\rmR_t^{[N]}$ and
\Cref{thm:random-transposition-gap} imply
\begin{align}
\left.\rmH\right|_{\bigvee^N(\mcH)^\perp}
&\succeq
\frac{1-\vartheta}{N-1}
I_{\bigvee^N(\mcH)^\perp}
\notag \\
&>
\frac1{2N}
I_{\bigvee^N(\mcH)^\perp}
\notag \\
&\succeq
\frac7{32N}
I_{\bigvee^N(\mcH)^\perp}.
\label{eq:puresuper-orthogonal-rejection-bound}
\end{align}
Here, the strict inequality follows from
\Cref{eq:puresuper-basic-parameter-bounds} and $N\ge2$, while the final
inequality is numerical.

The decomposition in \Cref{eq:puresuper-sector-decomposition} has no
cross terms. Thus,
\Cref{eq:puresuper-symmetric-rejection-bound,eq:puresuper-orthogonal-rejection-bound}
imply the full-space bound
\begin{equation}\label{eq:puresuper-global-rejection-bound}
\rmH\succeq\frac7{32N}I_{\mcH^{\otimes N}}.
\end{equation}
This operator inequality applies to arbitrary pure or mixed witnesses:
for every density operator $\rho$ on $\mcH^{\otimes N}$,
\begin{equation}
\Tr(\rmH\rho)\ge\frac7{32N}.
\notag
\end{equation}
The ideal soundness is therefore at most $1-7/(32N)$. Combining this
bound with \Cref{eq:puresuper-ideal-completeness} gives
\begin{equation}\label{eq:puresuper-ideal-gap}
\paren*{1-\frac1{64N}}
-
\paren*{1-\frac7{32N}}
=
\frac{13}{64N}.
\end{equation}
These explicit estimates imply all three asymptotic assertions in the
theorem.
\end{proof}

\subsection{Uniform Implementation and the Class Collapse}

\begin{proof}[Proof of \Cref{thm:puresuper-compiler}]
Set
\begin{equation}\label{eq:puresuper-sampling-error}
\xi:=\frac1{1024N}.
\end{equation}
The ideal verifier samples a Bernoulli branch and, depending on the
branch, a uniformly random element of one or more of the sets
\begin{equation}
[M],
\qquad
\binom{[N]}2,
\qquad
\binom{[N]}t.
\notag
\end{equation}
All ranks have polynomially many bits. And
\Cref{def:puresuperqma} gives
$\log_2M\le d(n)$, and
\begin{equation}
\log_2\binom N2\le2\log_2N,
\qquad
\log_2\binom Nt\le t\log_2N.
\notag
\end{equation}
The rational branch probability $\vartheta=1/(1024t^2)$ and the target
error $\xi$ also have polynomial bit length. Unordered pairs and
$t$-element subsets admit polynomial-time reversible unranking, so
\Cref{lem:finite-sampling} supplies a uniform implementation of all
these choices with total acceptance-probability error at most $\xi$ on
every witness.

For completeness, we describe the choice-dependent part of the circuit.
In the SWAP branch, reversible pair unranking identifies the two
$q(n)$-qubit registers, after which Arthur performs the usual
controlled-SWAP circuit and uncomputes the routing workspace. In the
count branch, reversible subset unranking identifies the selected
registers in increasing order. The generator from
\Cref{def:puresuper-verifier} is then run reversibly on $(x,i)$ to
produce $(V_{x,i},r_{x,i},s_{x,i})$. The circuit description is padded
to a common polynomial length, evaluated separately on the selected
registers by a program-controlled universal circuit with fresh
ancillas, and then uncomputed. Since
$r_{x,i},s_{x,i},\epsilon$ have polynomial bit length and the observed
count lies in $\set{0,\ldots,t}$, Arthur can decide the strict
inequality in \Cref{eq:puresuper-count-predicate} exactly by clearing
all rational denominators and performing a reversible integer
comparison. Thus no approximation beyond the sampling error in
\Cref{eq:puresuper-sampling-error} is needed.

Let $c_{\mathrm{PS}}$ and $s_{\mathrm{PS}}$ denote the completeness and
soundness of the implemented verifier. On a YES input, the witness from
\Cref{eq:puresuper-ideal-completeness} loses at most $\xi$ in
acceptance probability. On a NO input, every witness gains at most
$\xi$ in acceptance probability relative to the ideal verifier, by
\Cref{eq:puresuper-global-rejection-bound}. Therefore,
\begin{align}
c_{\mathrm{PS}}
&\ge
1-\frac1{64N}-\xi
=
1-\frac{17}{1024N},
\notag \\
s_{\mathrm{PS}}
&\le
1-\frac7{32N}+\xi
=
1-\frac{223}{1024N}.
\notag
\end{align}
Subtracting these estimates and using
\Cref{eq:puresuper-sampling-error} gives
\begin{equation}\label{eq:puresuper-implemented-gap}
c_{\mathrm{PS}}-s_{\mathrm{PS}}
\ge
\frac{13}{64N}-2\xi
=
\frac{103}{512N}.
\end{equation}

It remains to check the resource bounds. By
\Cref{eq:puresuper-asymptotic-resource-parameters}, Merlin's witness
has length
\begin{equation}\label{eq:puresuper-witness-bound}
q(n)N
=
q(n)\,
\Oh\paren*{
c(n)b(n)^4\log_2^2(2c(n))
}.
\end{equation}
This is polynomial in $n$. The random-pair branch uses
a polynomial-size reversible unranking and routing network, once a pair is addressed, its SWAP test uses $q(n)$ controlled exchanges. The count
branch executes $t$ circuits of polynomial size, and the generator for
each sampled $i$ runs in time polynomial in $n+\log M(n)$. The reversible routing,
arithmetic, unranking, and uncomputation procedures are also polynomial
in $N$, $t$, $q(n)$, and the relevant bit lengths. Consequently, both
the generated circuit size and the circuit-generation time are
polynomially bounded.

Finally, \Cref{eq:puresuper-implemented-gap} shows that the reciprocal
raw gap is $\Oh(N)$ and hence polynomially bounded. More explicitly,
\Cref{eq:puresuper-asymptotic-resource-parameters,eq:puresuper-implemented-gap}
give raw gap
$\Omega\paren*{\delta\epsilon^4/
\log_2^2\paren*{2/\delta}}$.
Standard
witness-preserving QMA amplification \cite{MarriottWatrous2005} converts this verifier to one
satisfying the constant thresholds in \Cref{def:qma}, without changing
the witness length. Together with
\Cref{eq:puresuper-asymptotic-resource-parameters,eq:puresuper-witness-bound,eq:puresuper-implemented-gap}, this
proves \Cref{thm:puresuper-compiler} and yields
\begin{equation}\label{eq:puresuper-upper-containment}
\mathsf{PureSuperQMA}(\mathrm{exp})
\subseteq
\mathsf{QMA}.
\end{equation}
\end{proof}

\begin{proposition}\label{prop:qma-in-puresuper}
It holds that
\begin{equation}
\mathsf{QMA}
\subseteq
\mathsf{PureSuperQMA}.
\notag
\end{equation}
\end{proposition}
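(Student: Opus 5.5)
We reduce directly with a single check. Let $L\in\mathsf{QMA}$ with verifier family $(V_x)$ and witness length $q$ as in \Cref{def:qma}, and write $E_x:=E(V_x)$. The pure super-verifier uses $M(n):=1$ (so $M(n)\le d(n)$ for the constant polynomial $d\equiv1$) and the same witness-length function $q$. Its unique check is $V_{x,1}:=V_x$, with window center and radius
\begin{equation}
r_{x,1}:=\frac56,
\qquad
s_{x,1}:=\frac16.
\notag
\end{equation}
This window is the interval $[\tfrac23,1]$. The generator is just the $\mathsf{QMA}$ circuit generator, so it is polynomial-time uniform. The numbers $5/6$ and $1/6$ are constant rationals, hence of polynomial bit length.

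\textbf{Completeness.} If $x\in L_{\mathrm{yes}}$, \Cref{def:qma} gives a unit vector $\ket{\psi}$ with $\matrixel{\psi}{E_x}{\psi}\ge\tfrac23$. Since every acceptance probability is at most $1$, we get $\abs{\matrixel{\psi}{E_x}{\psi}-\tfrac56}\le\tfrac16$. So the only check is satisfied.

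\textbf{Soundness.} Take $b\equiv12$ and $c\equiv1$. If $x\in L_{\mathrm{no}}$, every pure $\ket{\psi}$ satisfies $\matrixel{\psi}{E_x}{\psi}\le\tfrac13$. Hence
\begin{equation}
\abs*{\matrixel{\psi}{E_x}{\psi}-\frac56}\ge\frac12>\frac16+\frac1{12}.
\notag
\end{equation}
The single check is therefore violated by the required margin. Because $i$ is uniform on $[1]$, the fraction of violated checks is $1\ge 1/c(n)$.

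Nothing here is a real obstacle. The only points to check are formal: constant functions count as positive integer-valued polynomials $b,c,d$, and the choices of $r$ and $s$ meet the rational bit-length requirement.
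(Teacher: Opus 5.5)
Your proposal is correct and follows the paper's proof: one check $V_{x,1}:=V_x$, with $M\equiv c\equiv d\equiv 1$, completeness from acceptance $\ge\tfrac23$ landing in the window, and soundness from acceptance $\le\tfrac13$ missing it by a constant margin. The only difference is cosmetic: you center the window at $r=\tfrac56$ with $s=\tfrac16$ and $b\equiv12$, while the paper uses $r=1$, $s=\tfrac13$, $b\equiv4$.
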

\begin{proof}
Let $L\in\mathsf{QMA}$ and let $V=(V_x)$ satisfy
\Cref{def:qma}. Use $V_x$ as the unique pure super-check and set
\begin{equation}
M(n):=1,
\qquad
r_{x,1}:=1,
\qquad
s_{x,1}:=\frac13,
\qquad
b(n):=4,
\qquad
c(n):=1,
\qquad
d(n):=1.
\notag
\end{equation}
This is a uniform one-check family, and its witness length and circuit
size are those of the original QMA verifier.

If $x\in L_{\mathrm{yes}}$, \Cref{def:qma} provides a unit vector
$\ket{\psi}$ whose acceptance probability is at least $2/3$. Since an
acceptance probability is at most one,
\begin{equation}
\abs{\operatorname{acc}_V(x,1,\psi)-r_{x,1}}
=
1-\operatorname{acc}_V(x,1,\psi)
\le
\frac13
=
s_{x,1}.
\notag
\end{equation}
Thus the unique target window is satisfied.

If $x\in L_{\mathrm{no}}$, every unit vector $\ket{\psi}$ has
acceptance probability at most $1/3$, and hence
\begin{align}
\abs{\operatorname{acc}_V(x,1,\psi)-r_{x,1}}
&=
1-\operatorname{acc}_V(x,1,\psi)
\notag \\
&\ge
\frac23
>
s_{x,1}+\frac1{b(n)}.
\notag
\end{align}
The only check is therefore violated, so the violated-check density is
one, which equals $1/c(n)$. All conditions in
\Cref{def:puresuperqma} hold with $M(n)=1\le d(n)$, proving the
proposition.
\end{proof}

\begin{corollary}\label{cor:puresuper-collapse}
\begin{equation}
\mathsf{QMA}
=
\mathsf{PureSuperQMA}
=
\mathsf{PureSuperQMA}(\mathrm{exp}).
\notag
\end{equation}
\end{corollary}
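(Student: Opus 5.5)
The corollary follows by chaining three containments that are already available, with no new analysis. I would prove
\[
\mathsf{QMA}\subseteq\mathsf{PureSuperQMA}\subseteq\mathsf{PureSuperQMA}(\mathrm{exp})\subseteq\mathsf{QMA},
\]
which forces all three classes to coincide.

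The first containment is exactly \Cref{prop:qma-in-puresuper}.

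The middle containment follows directly from \Cref{def:puresuperqma}. Take a problem in $\mathsf{PureSuperQMA}(\poly)$, witnessed by polynomials $b,c,d$ with $M(n)\le d(n)$. Since $d(n)\le 2^{d(n)}$, the same uniform super-verifier and the same $b,c$, together with the same polynomial $d$, meet the $\mathsf{PureSuperQMA}(\mathrm{exp})$ requirement. The completeness and soundness conditions are literally identical. The only point to check is that the uniformity requirement of \Cref{def:puresuper-verifier} does not change: the generator runs in time polynomial in $|x|+\log M(|x|)$ in both cases.

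The last containment is \Cref{eq:puresuper-upper-containment}, established in the proof of \Cref{thm:puresuper-compiler}. That proof compiles any $\mathsf{PureSuperQMA}(\mathrm{exp})$ verifier into a uniform one-witness verifier with the following properties:
\begin{itemize}
\item it uses polynomially many witness qubits;
\item its completeness--soundness gap is at least $103/(512N)$, which is inverse-polynomial;
\item Marriott--Watrous amplification then brings it to the thresholds $2/3$ and $1/3$ of \Cref{def:qma}.
\end{itemize}

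The only step needing any care is the middle inclusion, which is purely definitional. All substantive work is already contained in \Cref{thm:puresuper-compiler}.
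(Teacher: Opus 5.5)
Your proposal is correct and follows the paper's own proof. Both chain \Cref{prop:qma-in-puresuper}, the definitional inclusion $\mathsf{PureSuperQMA}\subseteq\mathsf{PureSuperQMA}(\mathrm{exp})$ via $M(n)\le d(n)\le 2^{d(n)}$, and the containment $\mathsf{PureSuperQMA}(\mathrm{exp})\subseteq\mathsf{QMA}$ established in the proof of \Cref{thm:puresuper-compiler}.
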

\begin{proof}
\Cref{prop:qma-in-puresuper} gives
$\mathsf{QMA}\subseteq\mathsf{PureSuperQMA}$, while
\Cref{eq:puresuper-upper-containment} gives
$\mathsf{PureSuperQMA}(\mathrm{exp})\subseteq\mathsf{QMA}$.
The remaining containment follows from \Cref{def:puresuperqma}:
because $d(n)$ is a positive integer,
$M(n)\le d(n)\le2^{d(n)}$. Hence all three classes are equal.
\end{proof}


\section{Dimension-Free Stability of Product POVMs}\label{sec:product-stability}

Throughout, fix a finite-dimensional Hilbert space $\mcH$ and integers
$m,N,R\in\N_{>0}$ such that $m\le N$. For every $i\in[m]$, let $A_i$ be a finite non-empty set satisfying $|A_i|\le R$, and for every $S\subseteq[m]$, write
\begin{equation}
A_S:=\bigtimes_{i\in S}A_i.
\notag
\end{equation}

\subsection{Product-POVM Effects}

For every $i\in[m]$, fix a POVM
$\set{E_{i,a}:a\in A_i}$ on $\mcH$, and fix a function
$g:A_{[m]}\to[0,1]$. Define the product-POVM effect $\rmP$ on
$\mcH^{\otimes m}$ by
\begin{equation}
\rmP
:=
\sum_{\ol a\in A_{[m]}}
g(\ol a)
\bigotimes_{i=1}^m E_{i,a_i}.
\notag
\end{equation}
Indeed, $\rmP\succeq0$, and
\begin{align}
I_{\mcH^{\otimes m}}-\rmP
&=
\sum_{\ol a\in A_{[m]}}
\paren[\big]{1-g(\ol a)}
\bigotimes_{i=1}^m E_{i,a_i}
\succeq0.
\notag
\end{align}

Fix $\ket{u}\in\bbS(\mcH)$. For every $i\in[m]$ and $a\in A_i$, write
\begin{equation}
p_{u,i,a}:=\matrixel{u}{E_{i,a}}{u},
\qquad
\ket{u_{i,a}^\perp}
:=
\paren[\big]{I_\mcH-\ketbra{u}{u}}E_{i,a}\ket{u}.
\notag
\end{equation}
Then
\begin{equation}
E_{i,a}\ket{u}
=
p_{u,i,a}\ket{u}+\ket{u_{i,a}^\perp},
\qquad
\braket{u}{u_{i,a}^\perp}=0.
\notag
\end{equation}
The POVM normalization gives
\begin{equation}
\sum_{a\in A_i}p_{u,i,a}=1,
\qquad
\sum_{a\in A_i}\ket{u_{i,a}^\perp}=0.
\notag
\end{equation}
Moreover, $E_{i,a}^2\preceq E_{i,a}$ gives
\begin{align}
\sum_{a\in A_i}\norm{u_{i,a}^\perp}_2^2
&=
\sum_{a\in A_i}
\paren[\big]{
\matrixel{u}{E_{i,a}^2}{u}
-p_{u,i,a}^2
}
\notag \\
&\le
1-\sum_{a\in A_i}p_{u,i,a}^2
\le1.
\notag
\end{align}
Thus, Cauchy--Schwarz yields
\begin{equation}
\sum_{a\in A_i}\norm{u_{i,a}^\perp}_2
\le
\sqrt{|A_i|}
\sqrt{\sum_{a\in A_i}\norm{u_{i,a}^\perp}_2^2}
\le
\sqrt R.
\notag
\end{equation}

For every $S\subseteq[m]$ and $\ol a\in A_S$, define
\begin{equation}
h_{u,S}(\ol a)
:=
\sum_{\ol b\in A_{[m]\setminus S}}
g(\ol a,\ol b)
\prod_{i\in[m]\setminus S}p_{u,i,b_i},
\notag
\end{equation}
where $(\ol a,\ol b)$ denotes the tuple in $A_{[m]}$ whose restrictions to $S$ and $[m]\setminus S$ are $\ol a$ and $\ol b$, respectively. The product weights form a probability distribution, so
\begin{equation}
0\le h_{u,S}(\ol a)\le1.
\notag
\end{equation}

\begin{claim}\label{clm:product-povm-action}
For every $\ket{u}\in\bbS(\mcH)$,
\begin{equation}
\rmP\ket{u^{\otimes m}}
=
\sum_{S\subseteq[m]}
\sum_{\ol a\in A_S}
h_{u,S}(\ol a)
\bigotimes_{i=1}^m
\ket{\widetilde u_{S,\ol a}^{(i)}},
\notag
\end{equation}
where
\begin{equation}
\ket{\widetilde u_{S,\ol a}^{(i)}}
:=
\begin{cases}
\ket{u_{i,a_i}^\perp} & \text{if }i\in S,\\
\ket{u} & \text{if }i\notin S.
\end{cases}
\notag
\end{equation}
In particular,
\begin{equation}
h_{u,\varnothing}(\varnothing)
=
\matrixel{u^{\otimes m}}{\rmP}{u^{\otimes m}}.
\notag
\end{equation}
\end{claim}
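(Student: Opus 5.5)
The plan is to expand each tensor factor of $\rmP\ket{u^{\otimes m}}$ along the orthogonal decomposition $E_{i,a}\ket{u}=p_{u,i,a}\ket{u}+\ket{u_{i,a}^\perp}$ and regroup the terms by the set of transverse positions, in the same way as \Cref{clm:count-test-action}. Start from the definition:
\begin{equation}
\rmP\ket{u^{\otimes m}}
=
\sum_{\ol c\in A_{[m]}} g(\ol c)\bigotimes_{i=1}^m E_{i,c_i}\ket{u}
=
\sum_{\ol c\in A_{[m]}} g(\ol c)\bigotimes_{i=1}^m\paren[\big]{p_{u,i,c_i}\ket{u}+\ket{u_{i,c_i}^\perp}}.
\notag
\end{equation}
By multilinearity of the tensor product, I would expand each product into $2^m$ terms. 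Each term is indexed by the set $S\subseteq[m]$ of positions where the transverse vector $\ket{u_{i,c_i}^\perp}$ is chosen. In the term indexed by $S$, the remaining positions $i\notin S$ contribute the scalar $p_{u,i,c_i}$ times $\ket{u}$.

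Next, for fixed $S$, I would split $\ol c=(\ol a,\ol b)$ with $\ol a\in A_S$ and $\ol b\in A_{[m]\setminus S}$. The tensor in such a term depends only on $\ol a$: it is exactly $\bigotimes_i\ket{\widetilde u_{S,\ol a}^{(i)}}$. Its scalar coefficient is $g(\ol a,\ol b)\prod_{i\notin S}p_{u,i,b_i}$. Summing this coefficient over $\ol b$ gives precisely $h_{u,S}(\ol a)$. Exchanging the finite sums over $\ol c$ and $S$ then yields the claimed identity.

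For the final assertion, I would take the inner product of the identity with $\bra{u^{\otimes m}}$. Every term with $S\neq\varnothing$ contains at least one factor $\braket{u}{u_{i,a_i}^\perp}=0$, so it vanishes. The $S=\varnothing$ term has $A_\varnothing=\set{\varnothing}$ and tensor $\ket{u^{\otimes m}}$, so it contributes $h_{u,\varnothing}(\varnothing)\braket{u^{\otimes m}}{u^{\otimes m}}=h_{u,\varnothing}(\varnothing)$, using $\norm{u}_2=1$.

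There is no real obstacle here. The only point needing care is the bookkeeping convention for $S=\varnothing$ and $S=[m]$: empty products equal $1$, and the sum over $A_\varnothing$ is a single term. With that convention, $h_{u,[m]}=g$ and $h_{u,\varnothing}(\varnothing)=\sum_{\ol b}g(\ol b)\prod_i p_{u,i,b_i}$, which is the familiar Bell acceptance probability on a tensor power.
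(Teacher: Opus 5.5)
Your proposal is correct and follows the paper's proof essentially verbatim: expand each factor as $E_{i,c_i}\ket{u}=p_{u,i,c_i}\ket{u}+\ket{u_{i,c_i}^\perp}$, group the terms by the transverse set $S$, and sum out the complementary coordinates to obtain $h_{u,S}$. Your explicit contraction with $\bra{u^{\otimes m}}$, using $\braket{u}{u_{i,a}^\perp}=0$, spells out the step that the paper compresses into ``taking $S=\varnothing$.''
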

\begin{proof}
Expanding each local factor gives
\begin{align}
\rmP\ket{u^{\otimes m}}
&=
\sum_{\ol c\in A_{[m]}}
g(\ol c)
\bigotimes_{i=1}^m
\paren[\big]{
p_{u,i,c_i}\ket{u}+\ket{u_{i,c_i}^\perp}
}
\notag \\
&=
\sum_{S\subseteq[m]}
\sum_{\ol c\in A_{[m]}}
g(\ol c)
\paren*{\prod_{i\in[m]\setminus S}p_{u,i,c_i}}
\bigotimes_{i=1}^m
\ket{\widetilde u_{S,\ol c_S}^{(i)}}
\notag \\
&=
\sum_{S\subseteq[m]}
\sum_{\ol a\in A_S}
h_{u,S}(\ol a)
\bigotimes_{i=1}^m
\ket{\widetilde u_{S,\ol a}^{(i)}}.
\notag
\end{align}
Taking $S=\varnothing$ proves the final assertion, so the claim follows.
\end{proof}

\subsection{The Ordered Lift}

Let $[N]^m_{\ne}$ denote the set of tuples
$\ol s=(s_1,\ldots,s_m)\in[N]^m$ having pairwise distinct entries, and write
\begin{equation}
(N)_m:=N(N-1)\cdots(N-m+1)=|[N]^m_{\ne}|.
\notag
\end{equation}
For each $\ol s\in[N]^m_{\ne}$, choose a permutation
$\sigma_{\ol s}$ of $[N]$ satisfying
$\sigma_{\ol s}(i)=s_i$ for every $i\in[m]$, and define
\begin{equation}
\rmP_{\ol s}
:=
U_{\sigma_{\ol s}}
\paren[\big]{\rmP\otimes I_\mcH^{\otimes(N-m)}}
U_{\sigma_{\ol s}}^\dagger.
\notag
\end{equation}
This definition is independent of the completion of $\sigma_{\ol s}$ outside $[m]$. Define the ordered lift
\begin{equation}
\rmP^{\ang{N}}
:=
\frac1{(N)_m}
\sum_{\ol s\in[N]^m_{\ne}}
\rmP_{\ol s}.
\notag
\end{equation}
The operator $\rmP^{\ang{N}}$ is a permutation-invariant effect.

\begin{claim}\label{clm:product-lift-value}
For every $\ket{u}\in\bbS(\mcH)$,
\begin{equation}
\matrixel{u^{\otimes N}}{\rmP^{\ang{N}}}{u^{\otimes N}}
=
\matrixel{u^{\otimes m}}{\rmP}{u^{\otimes m}}.
\notag
\end{equation}
For every $\ket{\Phi}\in\bigvee^N\!(\mcH)$,
\begin{equation}
\matrixel{\Phi}{\rmP^{\ang{N}}}{u^{\otimes N}}
=
\bra{\Phi}
\paren[\big]{\rmP\otimes I_\mcH^{\otimes(N-m)}}
\ket{u^{\otimes N}}.
\notag
\end{equation}
\end{claim}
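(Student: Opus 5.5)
The plan mirrors the proof of \Cref{clm:count-test-lift-value}, with ordered tuples in place of unordered subsets. For both assertions I will use that each summand $\rmP_{\ol s}$ is a conjugate of $\rmP\otimes I_\mcH^{\otimes(N-m)}$ by a permutation unitary. Before anything else, I check that $\rmP_{\ol s}$ is well defined. Two completions $\sigma,\sigma'$ of $\ol s$ differ by a permutation $\tau=\sigma^{-1}\sigma'$ that fixes $[m]$ pointwise. Hence $U_\tau$ acts only on factors $m+1,\ldots,N$, and it commutes with $\rmP\otimes I_\mcH^{\otimes(N-m)}$.

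For the first assertion, I use $U_\sigma\ket{u^{\otimes N}}=\ket{u^{\otimes N}}$ for every $\sigma\in\mfS_N$, which is immediate from the definition of $U_\sigma$ on product vectors. This gives
\begin{equation}
\matrixel{u^{\otimes N}}{\rmP_{\ol s}}{u^{\otimes N}}
=
\matrixel{u^{\otimes N}}{\rmP\otimes I_\mcH^{\otimes(N-m)}}{u^{\otimes N}}
=
\matrixel{u^{\otimes m}}{\rmP}{u^{\otimes m}}\,\braket{u}{u}^{N-m},
\notag
\end{equation}
which equals $\matrixel{u^{\otimes m}}{\rmP}{u^{\otimes m}}$ because $\norm{u}_2=1$. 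Averaging this over the $(N)_m$ tuples $\ol s$ yields the first assertion.

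For the second assertion, I use that $\ket{\Phi}\in\bigvee^N\!(\mcH)$ satisfies $U_\sigma\ket{\Phi}=\ket{\Phi}$, so also $\bra{\Phi}U_\sigma=\bra{\Phi}$. Taking adjoints of $U_{\sigma^{-1}}\ket{\Phi}=\ket{\Phi}$ gives $\bra{\Phi}U_{\sigma^{-1}}^\dagger=\bra{\Phi}$, and $U_{\sigma^{-1}}^\dagger=U_\sigma$ by unitarity and the homomorphism property. Combining this with the invariance of $\ket{u^{\otimes N}}$, each summand becomes
\begin{equation}
\bra{\Phi}U_{\sigma_{\ol s}}\paren[\big]{\rmP\otimes I_\mcH^{\otimes(N-m)}}U_{\sigma_{\ol s}}^\dagger\ket{u^{\otimes N}}
=\bra{\Phi}\paren[\big]{\rmP\otimes I_\mcH^{\otimes(N-m)}}\ket{u^{\otimes N}}.
\notag
\end{equation}
Averaging over $\ol s$ then gives the claim.

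There is no real obstacle here. The only points needing care are the bookkeeping that $U_\sigma$ is a unitary representation, so that $U_\sigma^\dagger=U_{\sigma^{-1}}$, and the well-definedness of $\rmP_{\ol s}$ checked at the start.
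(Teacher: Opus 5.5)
Your proof is correct and follows the same route as the paper: both $\ket{u^{\otimes N}}$ and $\ket{\Phi}$ are invariant under every tensor-factor permutation, so each summand $\rmP_{\ol s}$ gives the same expectation and the same contraction as the placement on the first $m$ factors, and averaging over the $(N)_m$ tuples proves both identities. Your extra checks, namely that $\rmP_{\ol s}$ is well defined and that $U_\sigma^\dagger=U_{\sigma^{-1}}$, spell out steps the paper leaves implicit.
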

\begin{proof}
The vectors $\ket{u^{\otimes N}}$ and $\ket{\Phi}$ are invariant under every tensor-factor permutation. Consequently, every term in the ordered lift gives the same expectation and the same contraction as the placement on the first $m$ factors. This proves both assertions.
\end{proof}

\subsection{Bosonic Stability of Product-POVM Effects}

\begin{theorem}\label{thm:bosonic-stab-product}
If $N\ge2e^5m^2R$, then
\begin{align}
0
&\le
\max_{\ket{\Psi}\in\bbS(\bigvee^N\!(\mcH))}
\matrixel{\Psi}{\rmP^{\ang{N}}}{\Psi}
-
\max_{\ket{u}\in\bbS(\mcH)}
\matrixel{u^{\otimes m}}{\rmP}{u^{\otimes m}}
\notag \\
&\le
\frac{e^5m^2R}{N}.
\notag
\end{align}
\end{theorem}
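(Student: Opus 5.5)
The plan follows the proof of \Cref{thm:bosonic-stab-count}, with minimization replaced by maximization and the count-test expansion replaced by \Cref{clm:product-povm-action}.

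\textbf{Lower bound.} For every $\ket{u}\in\bbS(\mcH)$, the vector $\ket{u^{\otimes N}}$ is a unit vector in $\bigvee^N(\mcH)$. By \Cref{clm:product-lift-value}, its value under $\rmP^{\ang{N}}$ equals $\matrixel{u^{\otimes m}}{\rmP}{u^{\otimes m}}$. Hence the symmetric maximum is at least the tensor-power maximum.

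\textbf{Eigenvector contraction.} For the upper bound, $\rmP^{\ang{N}}$ is permutation invariant, so it preserves $\bigvee^N(\mcH)$. Pick a maximizer $\ket{\Psi^*}$, which satisfies $\rmP^{\ang{N}}\ket{\Psi^*}=\lambda\ket{\Psi^*}$ with $\lambda$ the symmetric maximum. Apply \Cref{thm:bosonic-argmax} to $\ket{\Psi^*}$ to get $\ket{u}$ with $\alpha=\braket{\Psi^*}{u^{\otimes N}}>0$. By self-adjointness, the second part of \Cref{clm:product-lift-value}, and \Cref{clm:product-povm-action},
\begin{equation}
\alpha\lambda=\sum_{S\subseteq[m]}\sum_{\ol a\in A_S}h_{u,S}(\ol a)\braket{\Psi^*}{\textstyle\bigotimes_{i}\widetilde u^{(i)}_{S,\ol a}}.
\notag
\end{equation}
The $S=\varnothing$ term is $\alpha\matrixel{u^{\otimes m}}{\rmP}{u^{\otimes m}}$. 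The $|S|=1$ terms vanish by \Cref{eq:bosonic-argmax-first-order}, after using permutation invariance of $\ket{\Psi^*}$ to move the transverse factor to the first position.

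\textbf{Higher-order terms.} For $|S|=k\ge2$, we have $k\le m<N$, so \Cref{eq:bosonic-argmax-multilinear} applies. Together with $0\le h_{u,S}\le1$ it bounds the $S$-term by
\begin{equation}
\alpha e^k(ek/N)^{k/2}\sum_{\ol a\in A_S}\prod_{i\in S}\norm{u^\perp_{i,a_i}}_2\le \alpha e^k(ek/N)^{k/2}R^{k/2},
\notag
\end{equation}
where the sum factorizes over $i\in S$ and each factor is at most $\sqrt R$ by the Cauchy--Schwarz budget. Summing over the $\binom mk$ choices of $S$ and dividing by $\alpha$ gives
\begin{equation}
\lambda\le\matrixel{u^{\otimes m}}{\rmP}{u^{\otimes m}}+\sum_{k=2}^m\binom mk e^k\paren*{\frac{ekR}{N}}^{k/2}.
\notag
\end{equation}
Using $\binom mk\le (em/k)^k$, which is the \Cref{lem:elementary-estimates} estimate, the $k$-th term is at most
\begin{equation}
\paren*{\frac{e^{5/2}m\sqrt R}{\sqrt{kN}}}^k\le\rho^k,\qquad \rho:=\frac{e^{5/2}m\sqrt R}{\sqrt{2N}}.
\notag
\end{equation}
The hypothesis $N\ge2e^5m^2R$ gives $\rho\le\frac12$. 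Therefore the tail is at most
\begin{equation}
\frac{\rho^2}{1-\rho}\le2\rho^2=\frac{e^5m^2R}{N}.
\notag
\end{equation}
Finally, bound $\matrixel{u^{\otimes m}}{\rmP}{u^{\otimes m}}$ by the tensor-power maximum to finish.

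\textbf{Main obstacle.} The delicate step is the higher-order bound, specifically controlling the sum over outcome tuples $\ol a\in A_S$ uniformly in $\dim\mcH$. I handle it by pulling $h_{u,S}\le1$ out of the sum so that the remaining sum factorizes into per-register transverse budgets $\sum_a\norm{u^\perp_{i,a}}_2\le\sqrt R$. The constants then have to be tracked so that the geometric series closes exactly at $e^5m^2R/N$.
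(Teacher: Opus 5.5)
Your proposal is correct and follows the paper's proof essentially step for step. It uses the same lower bound from the lifted value on $\ket{u^{\otimes N}}$, the same eigenvector contraction against the argmax tensor power (with the $|S|=1$ terms vanishing by first-order stationarity), and the same higher-order estimate from $0\le h_{u,S}\le 1$, the multilinear argmax bound, and the per-register budget $\sum_a\norm{u_{i,a}^\perp}_2\le\sqrt R$, with the geometric series in $\rho=e^{5/2}m\sqrt R/\sqrt{2N}\le\frac12$ closing at exactly $e^5m^2R/N$.
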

\begin{proof}
The lower bound follows from \Cref{clm:product-lift-value} and the fact that
$\ket{u^{\otimes N}}\in\bbS(\bigvee^N\!(\mcH))$.

For the upper bound, choose a vector
$\ket{\Psi^*}\in\bbS(\bigvee^N\!(\mcH))$ attaining the symmetric maximum and write
\begin{equation}
\rmP^{\ang{N}}\ket{\Psi^*}
=
\Lambda\ket{\Psi^*}.
\notag
\end{equation}
Let $\ket{u}\in\bbS(\mcH)$ maximize
$\abs{\braket{\Psi^*}{u^{\otimes N}}}$, and choose its phase so that
\begin{equation}
\alpha:=\braket{\Psi^*}{u^{\otimes N}}>0.
\notag
\end{equation}
By self-adjointness, \Cref{clm:product-lift-value}, and
\Cref{clm:product-povm-action},
\begin{align}
\alpha\Lambda
&=
\matrixel{\Psi^*}{\rmP^{\ang{N}}}{u^{\otimes N}}
\notag \\
&=
\sum_{S\subseteq[m]}
\sum_{\ol a\in A_S}
h_{u,S}(\ol a)
\bra{\Psi^*}
\paren*{
\bigotimes_{i=1}^m
\ket{\widetilde u_{S,\ol a}^{(i)}}
\otimes
\ket{u^{\otimes(N-m)}}
}.
\label{eq:product-stability-expansion}
\end{align}
The $S=\varnothing$ term of \Cref{eq:product-stability-expansion} is
\begin{equation}
\alpha\matrixel{u^{\otimes m}}{\rmP}{u^{\otimes m}},
\notag
\end{equation}
and every term with $|S|=1$ vanishes by
\Cref{eq:bosonic-argmax-first-order}.

Fix $j\in\set{2,\ldots,m}$ and
$S=\set{i_1<\cdots<i_j}\in\binom{[m]}j$. The assumption
$N\ge2e^5m^2R$ implies $m<N$, so
\Cref{eq:bosonic-argmax-multilinear} and the transverse budget give
\begin{align}
&\frac1\alpha
\abs*{
\sum_{\ol a\in A_S}
h_{u,S}(\ol a)
\braket{\Psi^*}{
u_{i_1,a_{i_1}}^\perp\otimes\cdots\otimes
u_{i_j,a_{i_j}}^\perp\otimes
u^{\otimes(N-j)}
}
}
\notag \\
&\qquad \le
e^j\paren*{\frac{ej}{N}}^{j/2}
\sum_{\ol a\in A_S}
\prod_{r=1}^j
\norm{u_{i_r,a_{i_r}}^\perp}_2
\notag \\
&\qquad =
e^j\paren*{\frac{ej}{N}}^{j/2}
\prod_{r=1}^j
\paren*{
\sum_{a\in A_{i_r}}\norm{u_{i_r,a}^\perp}_2
}
\notag \\
&\qquad \le
e^j\paren*{\frac{ejR}{N}}^{j/2}.
\notag
\end{align}
Taking the upper triangle inequality in
\Cref{eq:product-stability-expansion}, we obtain
\begin{equation}
\Lambda
\le
\matrixel{u^{\otimes m}}{\rmP}{u^{\otimes m}}
+
\sum_{j=2}^m
\binom mj
e^j\paren*{\frac{ejR}{N}}^{j/2}.
\notag
\end{equation}
For every $j\in\set{2,\ldots,m}$,
\Cref{lem:elementary-estimates} gives
\begin{equation}
\binom mj
e^j\paren*{\frac{ejR}{N}}^{j/2}
\le
\paren*{\frac{e^{5/2}m\sqrt R}{\sqrt{jN}}}^j
\le
\paren*{\frac{e^{5/2}m\sqrt R}{\sqrt{2N}}}^j.
\notag
\end{equation}
Define
\begin{equation}
\rho:=\frac{e^{5/2}m\sqrt R}{\sqrt{2N}}.
\notag
\end{equation}
The assumption $N\ge2e^5m^2R$ gives $\rho\le\frac12$, and hence
\begin{equation}
\sum_{j=2}^m\rho^j
\le
\frac{\rho^2}{1-\rho}
\le
2\rho^2
=
\frac{e^5m^2R}{N}.
\notag
\end{equation}
Consequently,
\begin{align}
\Lambda
&\le
\matrixel{u^{\otimes m}}{\rmP}{u^{\otimes m}}
+
\frac{e^5m^2R}{N} \le
\max_{\ket{z}\in\bbS(\mcH)}
\matrixel{z^{\otimes m}}{\rmP}{z^{\otimes m}}
+
\frac{e^5m^2R}{N}.
\notag
\end{align}
This proves the upper bound.
\end{proof}


\section{Simulating \texorpdfstring{$\mathsf{BellPureSymQMA}(\poly)$}{BellPureSymQMA(poly)} in \texorpdfstring{$\mathsf{QMA}$}{QMA}}\label{sec:bellpure-collapse}

Let $L\in\mathsf{BellPureSymQMA}(\poly)$, and let $V$ be a uniform Bell-pure symmetric verifier witnessing this membership. Fix a promised input $x\in\two^*$ of length $n$, and write
\begin{equation}
\mcH:=(\C^2)^{\otimes q(n)},
\qquad
m:=m(n),
\qquad
\mcY:=\set{0,1}^{\ell_{\mathrm{out}}(n)},
\qquad
R:=|\mcY|=2^{\ell_{\mathrm{out}}(n)}.
\notag
\end{equation}

\subsection{The Product-POVM Effect of a Bell Verifier}

For each $j\in[m]$, let $J_j$ append the fresh-zero ancillas used by $U_{x,j}$, and for each $a\in\mcY$, let $\Pi_{j,a}$ project the designated output register onto the computational-basis vector $\ket{a}$. Define the effect on $\mcH$
\begin{equation}
E_{j,a}
:=
J_j^\dagger U_{x,j}^\dagger\Pi_{j,a}U_{x,j}J_j.
\notag
\end{equation}
The orthogonal output projectors satisfy
\begin{equation}
E_{j,a}\succeq0,
\qquad
\sum_{a\in\mcY}E_{j,a}=I_\mcH,
\notag
\end{equation}
so $\set{E_{j,a}:a\in\mcY}$ is a POVM.

For every $\ol a=(a_1,\ldots,a_m)\in\mcY^m$, let
\begin{equation}
g_x(\ol a)
:=
\Pr\brak*{\mcR_x\text{ accepts the ordered tuple }\ol a}.
\notag
\end{equation}
Define the effect $\rmB_x$ on $\mcH^{\otimes m}$ by
\begin{equation}
\rmB_x
:=
\sum_{\ol a\in\mcY^m}
g_x(\ol a)
\bigotimes_{j=1}^mE_{j,a_j}.
\notag
\end{equation}
This is a product-POVM effect of the form considered in
\Cref{sec:product-stability}, and in particular
\begin{equation}
0\preceq\rmB_x\preceq I_{\mcH^{\otimes m}}.
\notag
\end{equation}

\begin{claim}\label{clm:bell-product-effect}
For every $\ket{u}\in\bbS(\mcH)$,
\begin{equation}
\beta_x(u)
=
\matrixel{u^{\otimes m}}{\rmB_x}{u^{\otimes m}}.
\notag
\end{equation}
For every $N\ge m$, the experiment that chooses a uniformly random ordered injection
$\phi:[m]\hookrightarrow[N]$, applies $U_{x,j}$ to register
$\phi(j)$ for every $j\in[m]$, and runs the original referee has accepting effect
$\rmB_x^{\ang{N}}$.
\end{claim}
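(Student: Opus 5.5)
The plan is to prove both assertions directly from the operational description of the Bell verifier, using only the definitions of the effects $E_{j,a}$ and $g_x$ and \Cref{clm:product-lift-value}.

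For the first identity, I fix $\ket{u}\in\bbS(\mcH)$ and compute the outcome distribution of the local stage on $\ket{u}^{\otimes m}$. The circuits $U_{x,j}$ act on disjoint registers with their own fresh ancillas, and each designated output is measured in the computational basis. So the joint probability of the ordered tuple $\ol a\in\mcY^m$ is
\[
\bra{u^{\otimes m}}\bigotimes_{j=1}^m J_j^\dagger U_{x,j}^\dagger\Pi_{j,a_j}U_{x,j}J_j\ket{u^{\otimes m}}
=\prod_{j=1}^m\matrixel{u}{E_{j,a_j}}{u}.
\]
This is the product of local effects by the definition of $E_{j,a}$. The referee then receives the classical tuple $\ol a$, runs on fresh ancillas, and accepts with probability $g_x(\ol a)$. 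By the law of total probability,
\[
\beta_x(u)=\sum_{\ol a}g_x(\ol a)\prod_j\matrixel{u}{E_{j,a_j}}{u}=\matrixel{u^{\otimes m}}{\rmB_x}{u^{\otimes m}}.
\]
To say that the classical postprocessing yields the effect $\sum_{\ol a} g_x(\ol a)\bigotimes_j E_{j,a_j}$ on an arbitrary input, not only on product inputs, I will note that the measured outcomes are classical. Conditioning on them makes the acceptance effect exactly that weighted sum of the outcome effects.

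For the second assertion, I fix a witness state $\rho$ on $\mcH^{\otimes N}$ and condition on the injection $\phi$. Given $\phi$, the same reasoning applies to the registers $\phi(1),\ldots,\phi(m)$, with the identity on the rest. The accepting effect is therefore $\rmB_x$ placed on registers $\phi(1),\ldots,\phi(m)$ in that order. This equals $U_{\sigma}(\rmB_x\otimes I^{\otimes(N-m)})U_\sigma^\dagger=(\rmB_x)_{\ol s}$ for $\ol s=(\phi(1),\ldots,\phi(m))\in[N]^m_{\ne}$, with any completion $\sigma$ such that $\sigma(i)=s_i$. To confirm this I will check the action of $U_\sigma$ on product vectors: by the definition of $U_\sigma$, the factor in position $i$ is moved to position $\sigma(i)$. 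Averaging over the uniform choice of $\phi$, which is a bijection with $[N]^m_{\ne}$ of size $(N)_m$, gives $\frac1{(N)_m}\sum_{\ol s}(\rmB_x)_{\ol s}=\rmB_x^{\ang{N}}$.

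The only delicate point is this index-convention check: that conjugation by $U_{\sigma_{\ol s}}$ places the $j$th tensor factor of $\rmB_x$ on register $s_j$ rather than on $\sigma^{-1}(j)$. The ordered lift must respect the labels $j$ because the $U_{x,j}$ are distinct. Once that is settled, everything else is linearity.
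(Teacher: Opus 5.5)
Your proposal is correct and follows essentially the same route as the paper. The first identity comes from the product outcome distribution on $\ket{u}^{\otimes m}$ together with the law of total probability. The second comes from conditioning on the injection and averaging to obtain the ordered lift.

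The paper's proof also leaves two points implicit that you verify explicitly. You check that the classical post-processing yields the operator $\sum_{\ol a} g_x(\ol a)\bigotimes_{j} E_{j,a_j}$ on arbitrary inputs, not only on product inputs. You also check that conjugation by $U_{\sigma_{\ol s}}$ places the $j$th tensor factor on register $s_j$.
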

\begin{proof}
On $\ket{u}^{\otimes m}$, the local outcomes are independent and satisfy
\begin{equation}
\Pr\brak*{\text{the }j\text{th outcome is }a_j}
=
\matrixel{u}{E_{j,a_j}}{u}.
\notag
\end{equation}
Conditioning on the ordered outcome tuple and applying the law of total probability gives
\begin{align}
\beta_x(u)
&=
\sum_{\ol a\in\mcY^m}
g_x(\ol a)
\prod_{j=1}^m
\matrixel{u}{E_{j,a_j}}{u}
\notag \\
&=
\matrixel{u^{\otimes m}}{\rmB_x}{u^{\otimes m}}.
\notag
\end{align}
For a fixed injection $\phi$, the same experiment places the labeled tensor factors of $\rmB_x$ on the ordered registers
$\phi(1),\ldots,\phi(m)$. Averaging over all injections is exactly the ordered lift from \Cref{sec:product-stability}, proving the second assertion.
\end{proof}

The expression defining $\rmB_x$ is used only to analyze the verifier. The verifier implements it by running the local circuits and the referee and never enumerates the joint outcome space $\mcY^m$.

\subsection{The One-Witness Verifier}

Let $N$ be the least power of two satisfying
\begin{equation}
N\ge4096m^2R.
\notag
\end{equation}
Minimality gives
\begin{equation}
4096m^2R
\le
N
<
8192m^2R.
\notag
\end{equation}
Since $e<3$,
\begin{equation}
2e^5m^2R
<
486m^2R
<
N,
\notag
\end{equation}
and
\begin{equation}
\frac{e^5m^2R}{N}
<
\frac{3^5}{4096}
<
\frac1{16}.
\notag
\end{equation}

Merlin sends one state on $Nq(n)$ qubits, viewed as $N$ registers with Hilbert space $\mcH$. Arthur performs one of the following tests.

\begin{itemize}
\item With probability $1-\frac1N$, Arthur chooses a uniformly random unordered pair of registers, performs the SWAP test, and rejects on the antisymmetric outcome.

\item With probability $\frac1N$, Arthur chooses a uniformly random ordered injection
$\phi:[m]\hookrightarrow[N]$. For each $j\in[m]$, he applies
$U_{x,j}$ to register $\phi(j)$ with fresh ancillas and measures its designated output. He preserves the order of the outcome tuple and runs $\mcR_x$, accepting exactly when the referee accepts.
\end{itemize}

By \Cref{clm:bell-product-effect}, the second branch has accepting effect
$\rmB_x^{\ang{N}}$. Using the random-transposition effect from
\Cref{eq:random-transposition-effect}, the rejection effect of the ideal verifier is
\begin{equation}\label{eq:bell-rejection-effect}
\rmH_{\mathrm{B}}
:=
\paren*{1-\frac1N}\rmH_{\mathrm{asym}}
+
\frac1N\paren[\big]{I-\rmB_x^{\ang{N}}}.
\end{equation}
Both summands in \Cref{eq:bell-rejection-effect} are positive semidefinite and permutation-invariant, so $\rmH_{\mathrm B}$ is block diagonal with respect to
\begin{equation}
\mcH^{\otimes N}
=
\bigvee^N\!(\mcH)
\oplus
\left(\bigvee^N\!(\mcH)\right)^\perp.
\notag
\end{equation}

\begin{theorem}\label{thm:bellpure-ideal-compiler}
The ideal verifier has completeness and soundness satisfying
\begin{equation}
c_{\mathrm{B,ideal}}
\ge
1-\frac1{3N},
\qquad
s_{\mathrm{B,ideal}}
\le
1-\frac{29}{48N}.
\notag
\end{equation}
In particular,
\begin{equation}
c_{\mathrm{B,ideal}}-s_{\mathrm{B,ideal}}
\ge
\frac{13}{48N}.
\notag
\end{equation}
\end{theorem}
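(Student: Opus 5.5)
The plan mirrors the proof of \Cref{thm:puresuper-ideal-compiler}, now applied to the rejection effect $\rmH_{\mathrm B}$ from \Cref{eq:bell-rejection-effect}.

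\textbf{Completeness.} On a YES input, Merlin sends $\ket{u}^{\otimes N}$, where $\ket{u}$ is a witness with $\beta_x(u)\ge 2/3$. Every $F_{a,b}$ fixes $\ket{u}^{\otimes N}$, so the SWAP branch never rejects. By \Cref{clm:bell-product-effect} and \Cref{clm:product-lift-value},
\begin{equation}
\matrixel{u^{\otimes N}}{\rmB_x^{\ang{N}}}{u^{\otimes N}}=\beta_x(u)\ge\tfrac23 .
\notag
\end{equation}
Hence the total rejection is at most $\frac1N\cdot\frac13$, which gives $c_{\mathrm{B,ideal}}\ge 1-\frac1{3N}$.

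\textbf{Soundness.} Both summands of $\rmH_{\mathrm B}$ are permutation invariant, so $\rmH_{\mathrm B}$ is block diagonal with respect to $\bigvee^N(\mcH)\oplus\bigvee^N(\mcH)^\perp$. It therefore suffices to prove $\rmH_{\mathrm B}\succeq\frac{29}{48N}I$ on each sector separately; no cross terms arise, and the bound then extends to mixed witnesses by linearity.

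On the symmetric sector, $\rmH_{\mathrm{asym}}$ vanishes, so $\rmH_{\mathrm B}$ restricts to $\frac1N(I-\rmB_x^{\ang{N}})$. Apply \Cref{thm:bosonic-stab-product} with $\rmP=\rmB_x$, $A_j=\mcY$, and $|A_j|=R$. Its premise $N\ge 2e^5m^2R$ was verified when $N$ was chosen, and the same computation gives loss $e^5m^2R/N<\frac1{16}$. On a NO input every $\ket{u}$ satisfies $\beta_x(u)\le\frac13$, so
\begin{equation}
\max_{\Psi}\matrixel{\Psi}{\rmB_x^{\ang{N}}}{\Psi}\le\frac13+\frac1{16}=\frac{19}{48}.
\notag
\end{equation}
Since $\rmB_x^{\ang N}$ preserves the symmetric subspace, the variational characterization gives $\rmB_x^{\ang N}|_{\vee}\preceq\frac{19}{48}I$. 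Therefore $\rmH_{\mathrm B}|_{\vee}\succeq\frac1N\cdot\frac{29}{48}I$.

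On the orthogonal sector, use $I-\rmB_x^{\ang N}\succeq0$ together with \Cref{thm:random-transposition-gap}:
\begin{equation}
\rmH_{\mathrm B}\succeq\paren*{1-\tfrac1N}\frac1{N-1}I=\frac1N I\succeq\frac{29}{48N}I.
\notag
\end{equation}
Combining the two sectors yields $s_{\mathrm{B,ideal}}\le 1-\frac{29}{48N}$.

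\textbf{Gap.} Subtracting the two bounds gives $\frac{29}{48N}-\frac{16}{48N}=\frac{13}{48N}$.

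The only substantive step is the symmetric-sector bound. It rests entirely on \Cref{thm:bosonic-stab-product}, so the one thing to check carefully is that its hypotheses hold: the Bell effect must have exactly the product-POVM form, with the referee's acceptance probability $g_x$ taking values in $[0,1]$ and with outcome alphabets of size $R$. Everything else is bookkeeping.
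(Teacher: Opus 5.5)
Your proposal is correct and follows the paper's proof essentially step for step. For completeness you use the same tensor-power witness. For soundness you use the same block-diagonal argument: the product-POVM stability theorem with loss below $1/16$ bounds the Bell lift by $19/48$ on the symmetric sector, and the random-transposition gap gives $1/N$ on its complement. The arithmetic for the gap $13/(48N)$ also matches.
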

\begin{proof}
Suppose first that $x\in L_{\mathrm{yes}}$, and choose
$\ket{u}\in\bbS(\mcH)$ satisfying $\beta_x(u)\ge2/3$. Merlin sends $\ket{u}^{\otimes N}$. Every SWAP test accepts, while
\Cref{clm:bell-product-effect} shows that the Bell branch rejects with probability at most $1/3$. Thus, the total rejection probability is at most
\begin{equation}
\frac1N\paren*{1-\frac23}
=
\frac1{3N},
\notag
\end{equation}
which proves the completeness bound.

Now suppose that $x\in L_{\mathrm{no}}$. By
\Cref{def:bellpuresymqma} and \Cref{clm:bell-product-effect},
\begin{equation}
\max_{\ket{u}\in\bbS(\mcH)}
\matrixel{u^{\otimes m}}{\rmB_x}{u^{\otimes m}}
\le
\frac13.
\notag
\end{equation}
Applying \Cref{thm:bosonic-stab-product} gives
\begin{align}
\left.\rmB_x^{\ang{N}}\right|_{\bigvee^N(\mcH)}
&\preceq
\paren*{
\frac13+\frac{e^5m^2R}{N}
}
I_{\bigvee^N(\mcH)}
\notag \\
&\preceq
\frac{19}{48}I_{\bigvee^N(\mcH)}.
\notag
\end{align}
Equivalently,
\begin{equation}
\left.
\paren[\big]{I-\rmB_x^{\ang{N}}}
\right|_{\bigvee^N(\mcH)}
\succeq
\frac{29}{48}I_{\bigvee^N(\mcH)}.
\notag
\end{equation}
Because $\rmH_{\mathrm{asym}}$ vanishes on the symmetric subspace,
\begin{equation}
\left.\rmH_{\mathrm B}\right|_{\bigvee^N(\mcH)}
\succeq
\frac{29}{48N}I_{\bigvee^N(\mcH)}.
\notag
\end{equation}
On the orthogonal complement, positivity of
$I-\rmB_x^{\ang{N}}$ and
\Cref{thm:random-transposition-gap} give
\begin{align}
\left.\rmH_{\mathrm B}\right|_{\bigvee^N(\mcH)^\perp}
&\succeq
\paren*{1-\frac1N}\frac1{N-1}
I_{\bigvee^N(\mcH)^\perp}
\notag \\
&=
\frac1N I_{\bigvee^N(\mcH)^\perp}
\succeq
\frac{29}{48N}I_{\bigvee^N(\mcH)^\perp}.
\notag
\end{align}
There are no cross terms between the two sectors. Therefore,
$\rmH_{\mathrm B}\succeq\frac{29}{48N}I$ on the full witness space, which proves the soundness bound. Subtracting the two thresholds proves the final assertion.
\end{proof}

\subsection{Uniformity and the Collapse}

Set
\begin{equation}
\xi_{\mathrm B}:=\frac1{1024N}.
\notag
\end{equation}
The outer branch probability $1/N$ is exactly dyadic. The remaining random choices range over sets of sizes
\begin{equation}
\binom N2
\qquad\text{and}\qquad
(N)_m.
\notag
\end{equation}
Their binary lengths are polynomial because
\begin{equation}
\log_2 (N)_m\le m\log_2N.
\notag
\end{equation}
Ordered injections can be unranked in polynomial time. By
\Cref{lem:finite-sampling}, there is a polynomial-time uniform implementation whose acceptance probability differs from that of the ideal verifier by at most $\xi_{\mathrm B}$ on every witness.

The implemented completeness and soundness therefore satisfy
\begin{equation}
c_{\mathrm B}
\ge
1-\frac1{3N}-\xi_{\mathrm B}
=
1-\frac{1027}{3072N}
\notag
\end{equation}
and
\begin{equation}
s_{\mathrm B}
\le
1-\frac{29}{48N}+\xi_{\mathrm B}
=
1-\frac{1853}{3072N}.
\notag
\end{equation}
Consequently,
\begin{equation}
c_{\mathrm B}-s_{\mathrm B}
\ge
\frac{13}{48N}-2\xi_{\mathrm B}
=
\frac{413}{1536N}.
\notag
\end{equation}

The output bound in \Cref{def:bellpuresymqma} gives
\begin{equation}
R=2^{\ell_{\mathrm{out}}(n)}
\le
(n+2)^C.
\notag
\end{equation}
Together with $N<8192m(n)^2R$, this shows that the witness length $q(n)N$, the circuit size, and the reciprocal of the raw gap are polynomially bounded. Standard amplification gives a verifier satisfying \Cref{def:qma}. This proves
\Cref{thm:bellpure-compiler} and the containment
\begin{equation}
\mathsf{BellPureSymQMA}(\poly)
\subseteq
\mathsf{QMA}.
\notag
\end{equation}

\begin{proposition}\label{prop:qma-in-bellpure}
\begin{equation}
\mathsf{QMA}
\subseteq
\mathsf{BellPureSymQMA}(\poly).
\notag
\end{equation}
\end{proposition}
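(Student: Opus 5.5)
The plan is to embed a $\mathsf{QMA}$ verifier as a Bell-pure symmetric verifier using a single copy ($m=1$). The only obstruction is the short-output requirement $\ell_{\mathrm{out}}\le C\log_2 n$. A $\mathsf{QMA}$ verifier $V_x$ can be viewed as a local circuit whose designated output register is its single accept qubit, so $\ell_{\mathrm{out}}=1$. Fix $L\in\mathsf{QMA}$ with verifier $(V_x)$ as in \Cref{def:qma}. Set $m(n):=1$, $q$ equal to the witness length of $V$, $\ell_{\mathrm{out}}(n):=1$, and $C:=1$. We need $1\le \log_2 n$, which holds for $n\ge2$; finitely many short inputs can be hard-coded, or $C$ enlarged, so that $\ell_{\mathrm{out}}(n)\le C\log_2(n)$ for all $n$ considered. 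Take $U_{x,1}:=V_x$, with the designated output register being the qubit measured by $\Pi_{\mathrm{acc}}$, and let $\mcR_x$ be the trivial referee that copies its input bit to its output qubit.

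Next I would verify that $\beta_x(\psi)$ equals the original acceptance probability. By \Cref{clm:bell-product-effect}, with $m=1$ we have $\beta_x(\psi)=\matrixel{\psi}{\rmB_x}{\psi}$ with $\rmB_x=g_x(1)E_{1,1}+g_x(0)E_{1,0}=E_{1,1}$. Here $E_{1,1}=J^\dagger V_x^\dagger\Pi_{\mathrm{acc}}V_xJ=E_x$. So $\beta_x(\psi)=\matrixel{\psi}{E_x}{\psi}$ for every unit $\ket\psi$. The completeness condition (existential, per the remark after \Cref{def:bellpuresymqma}) and the soundness condition (universal over pure states) then transfer verbatim from \Cref{def:qma}. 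Soundness over pure states is weaker than soundness over all density operators, so nothing is lost.

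Finally I would check uniformity. $U_{x,1}$ and $\mcR_x$ are generated from $x$ in polynomial time because $V_x$ is. The functions $m,q,\ell_{\mathrm{out}}$ are polynomial-time computable and polynomially bounded.

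The only mildly delicate point is the bookkeeping of the output-length constraint for very small $n$, where $\log_2 n<1$. I would handle it by noting that constant-size inputs can be decided by a verifier with deterministic output, since a padding convention is allowed by the footnote. Alternatively, $\ell_{\mathrm{out}}\le C\log_2(n)$ only matters asymptotically. No analytic step is required.
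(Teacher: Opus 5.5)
Your construction ($m=1$, $\ell_{\mathrm{out}}=1$, $C=1$, $V_x$ as the single local circuit with its accept bit as the designated output, and a referee that copies that bit) is exactly the paper's proof. Your computation $\rmB_x=E_{1,1}=E_x$ correctly gives $\beta_x(\psi)=\matrixel{\psi}{E_x}{\psi}$.

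On the small-$n$ point, only your last option (reading the bound asymptotically) actually works. Enlarging $C$ or hard-coding short inputs cannot satisfy the literal bound at $n=1$, since $C\log_2 1=0<1\le\ell_{\mathrm{out}}(1)$. The paper makes the same move implicitly by reading the bound as $C\log_2(n+2)$, consistent with its later use of $R\le(n+2)^C$.
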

\begin{proof}
Let $L\in\mathsf{QMA}$ and let $V=(V_x)$ satisfy
\Cref{def:qma}. Use the same witness-length function, set
\begin{equation}
m(n):=1,
\qquad
\ell_{\mathrm{out}}(n):=1,
\qquad
C:=1,
\notag
\end{equation}
and use $V_x$ as the unique local circuit with its accept/reject bit as the designated output. Let the referee output that bit. On a YES input, the existential condition follows from QMA completeness. On a NO input, QMA soundness holds for every unit vector. Since
\begin{equation}
1\le\log_2(n+2),
\notag
\end{equation}
this is a valid Bell-pure symmetric verifier. This proves the proposition.
\end{proof}

Combining the upper containment with \Cref{prop:qma-in-bellpure} gives
\begin{equation}
\mathsf{BellPureSymQMA}(\poly)
=
\mathsf{QMA}.
\notag
\end{equation}
Together with the equality proved in \Cref{sec:puresuper-collapse}, this completes the proof of \Cref{thm:main-collapses}.


\section*{AI Usage}

The initial high-level question and framing was human. The technical development and execution of this idea, including substantial parts of the proof development, were carried out with assistance from ChatGPT 5.6 Pro. We also used ChatGPT 5.6 Pro for editorial assistance in preparing the manuscript. The authors take full responsibility for the results and have independently verified all claims, proofs, and references. Substantial human effort was also devoted to the exposition, organization, and presentation of the results.

\section*{Acknowledgments}
The authors would like to thank Kunal Marwaha for suggesting the direction that manifested in this manuscript.

IL was funded by the European Union (ERC, EACTP, 101142020) and the Danish-Israeli Study Foundation in Memory of Josef and Regine Nachemson.
Any views, opinions, findings, and conclusions or recommendations expressed in this material are those of the author(s) and do not necessarily reflect the views of the European Union or the European Research Council Executive Agency. Neither the European Union nor any granting authority can be held responsible for them.

\printbibliography

@book{KitaevShenVyalyi2002,
  author    = {Kitaev, Alexei Yu. and Shen, Alexander H. and Vyalyi, Mikhail N.},
  title     = {Classical and Quantum Computation},
  series    = {Graduate Studies in Mathematics},
  volume    = {47},
  publisher = {American Mathematical Society},
  address   = {Providence, Rhode Island},
  year      = {2002},
  doi       = {10.1090/gsm/047},
  url       = {https://doi.org/10.1090/gsm/047}
}

@inproceedings{AharonovRegev2003,
  author    = {Aharonov, Dorit and Regev, Oded},
  title     = {A Lattice Problem in Quantum {NP}},
  booktitle = {Proceedings of the 44th Annual {IEEE} Symposium on Foundations of Computer Science},
  pages     = {210--219},
  publisher = {IEEE Computer Society},
  year      = {2003},
  doi       = {10.1109/SFCS.2003.1238195},
  url       = {https://doi.org/10.1109/SFCS.2003.1238195}
}

@article{MarriottWatrous2005,
  author  = {Marriott, Chris and Watrous, John},
  title   = {Quantum {Arthur--Merlin} Games},
  journal = {Computational Complexity},
  volume  = {14},
  number  = {2},
  pages   = {122--152},
  year    = {2005},
  doi     = {10.1007/s00037-005-0194-x},
  url     = {https://doi.org/10.1007/s00037-005-0194-x}
}

@inproceedings{KammingaRudolph2026,
  author    = {Kamminga, Jonas and Rudolph, Dorian},
  title     = {The Pure-State Consistency of Local Density Matrices Problem: In {PSPACE} and Complete for a Class Between {QMA} and {QMA(2)}},
  booktitle = {17th Innovations in Theoretical Computer Science Conference ({ITCS} 2026)},
  editor    = {Saraf, Shubhangi},
  series    = {Leibniz International Proceedings in Informatics},
  volume    = {362},
  pages     = {83:1--83:23},
  publisher = {Schloss Dagstuhl--Leibniz-Zentrum f{\"u}r Informatik},
  year      = {2026},
  doi       = {10.4230/LIPIcs.ITCS.2026.83},
  url       = {https://doi.org/10.4230/LIPIcs.ITCS.2026.83},
  note      = {Full version available as arXiv:2411.03096}
}

@article{Coleman1963,
  author  = {Coleman, A. J.},
  title   = {Structure of Fermion Density Matrices},
  journal = {Reviews of Modern Physics},
  volume  = {35},
  number  = {3},
  pages   = {668--686},
  year    = {1963},
  doi     = {10.1103/RevModPhys.35.668},
  url     = {https://doi.org/10.1103/RevModPhys.35.668}
}

@article{Klyachko2006,
  author  = {Klyachko, Alexander A.},
  title   = {Quantum Marginal Problem and {$N$}-Representability},
  journal = {Journal of Physics: Conference Series},
  volume  = {36},
  pages   = {72--86},
  year    = {2006},
  doi     = {10.1088/1742-6596/36/1/014},
  url     = {https://doi.org/10.1088/1742-6596/36/1/014}
}

@inproceedings{Liu2006,
  author    = {Liu, Yi-Kai},
  title     = {Consistency of Local Density Matrices Is {QMA}-Complete},
  booktitle = {Approximation, Randomization, and Combinatorial Optimization. Algorithms and Techniques},
  series    = {Lecture Notes in Computer Science},
  volume    = {4110},
  pages     = {438--449},
  publisher = {Springer},
  year      = {2006},
  doi       = {10.1007/11830924_40},
  url       = {https://doi.org/10.1007/11830924_40}
}

@article{BroadbentGrilo2022,
  author  = {Broadbent, Anne and Grilo, Alex Bredariol},
  title   = {{QMA}-Hardness of Consistency of Local Density Matrices with Applications to Quantum Zero-Knowledge},
  journal = {SIAM Journal on Computing},
  volume  = {51},
  number  = {4},
  pages   = {1400--1450},
  year    = {2022},
  doi     = {10.1137/21M140729X},
  url     = {https://doi.org/10.1137/21M140729X}
}

@article{LiuChristandlVerstraete2007,
  author  = {Liu, Yi-Kai and Christandl, Matthias and Verstraete, Frank},
  title   = {Quantum Computational Complexity of the {$N$}-Representability Problem: {QMA} Complete},
  journal = {Physical Review Letters},
  volume  = {98},
  number  = {11},
  pages   = {110503},
  year    = {2007},
  doi     = {10.1103/PhysRevLett.98.110503},
  url     = {https://doi.org/10.1103/PhysRevLett.98.110503}
}

@article{YuEtAl2021,
  author  = {Yu, Xiao-Dong and Simnacher, Timo and Wyderka, Nikolai and Nguyen, H. Chau and G{\"u}hne, Otfried},
  title   = {A Complete Hierarchy for the Pure State Marginal Problem in Quantum Mechanics},
  journal = {Nature Communications},
  volume  = {12},
  pages   = {1012},
  year    = {2021},
  doi     = {10.1038/s41467-020-20799-5},
  url     = {https://doi.org/10.1038/s41467-020-20799-5}
}

@inproceedings{KobayashiMatsumotoYamakami2003,
  author    = {Kobayashi, Hirotada and Matsumoto, Keiji and Yamakami, Tomoyuki},
  title     = {Quantum {Merlin--Arthur} Proof Systems: Are Multiple Merlins More Helpful to Arthur?},
  booktitle = {Algorithms and Computation},
  series    = {Lecture Notes in Computer Science},
  volume    = {2906},
  pages     = {189--198},
  publisher = {Springer},
  year      = {2003},
  doi       = {10.1007/978-3-540-24587-2_21},
  url       = {https://doi.org/10.1007/978-3-540-24587-2_21}
}

@article{Banach1938,
  author  = {Banach, Stefan},
  title   = {{\"U}ber homogene Polynome in {$(L^2)$}},
  journal = {Studia Mathematica},
  volume  = {7},
  number  = {1},
  pages   = {36--44},
  year    = {1938},
  doi     = {10.4064/sm-7-1-36-44},
  url     = {https://doi.org/10.4064/sm-7-1-36-44}
}

@article{HubenerEtAl2009,
  author  = {H{\"u}bener, Robert and Kleinmann, Matthias and Wei, Tzu-Chieh and Gonz{\'a}lez-Guill{\'e}n, Carlos and G{\"u}hne, Otfried},
  title   = {Geometric Measure of Entanglement for Symmetric States},
  journal = {Physical Review A},
  volume  = {80},
  number  = {3},
  pages   = {032324},
  year    = {2009},
  doi     = {10.1103/PhysRevA.80.032324},
  url     = {https://doi.org/10.1103/PhysRevA.80.032324}
}

@article{WeiGoldbart2003,
  author  = {Wei, Tzu-Chieh and Goldbart, Paul M.},
  title   = {Geometric Measure of Entanglement and Applications to Bipartite and Multipartite Quantum States},
  journal = {Physical Review A},
  volume  = {68},
  number  = {4},
  pages   = {042307},
  year    = {2003},
  doi     = {10.1103/PhysRevA.68.042307},
  url     = {https://doi.org/10.1103/PhysRevA.68.042307}
}

@article{ChristandlEtAl2007,
  author  = {Christandl, Matthias and K{\"o}nig, Robert and Mitchison, Graeme and Renner, Renato},
  title   = {One-and-a-Half Quantum {de Finetti} Theorems},
  journal = {Communications in Mathematical Physics},
  volume  = {273},
  number  = {2},
  pages   = {473--498},
  year    = {2007},
  doi     = {10.1007/s00220-007-0189-3},
  url     = {https://doi.org/10.1007/s00220-007-0189-3}
}

@misc{JWXdeFinetti,
  author        = {Jeronimo, Granha Fernando and Wu, Pei and Xu, Haochen},
  title         = {Optimal Quantum {de Finetti} Theorems via Argmax Rounding},
  year          = {2026},
  eprint        = {2608.02590},
  archivePrefix = {arXiv},
  primaryClass  = {quant-ph},
  doi           = {10.48550/arXiv.2608.02590},
  url           = {https://arxiv.org/abs/2608.02590}
}

@misc{JWXSoS,
  author        = {Jeronimo, Granha Fernando and Wu, Pei and Xu, Haochen},
  title         = {An Argmax Principle for Sum-of-Squares Relaxations on the Sphere},
  year          = {2026},
  eprint        = {2608.02594},
  archivePrefix = {arXiv},
  primaryClass  = {cs.CC},
  note          = {Submitted to SODA},
  doi           = {10.48550/arXiv.2608.02594},
  url           = {https://arxiv.org/abs/2608.02594}
}

@article{BrandaoHarrow2017,
  author  = {Brand{\~a}o, Fernando G. S. L. and Harrow, Aram W.},
  title   = {Quantum {de Finetti} Theorems Under Local Measurements with Applications},
  journal = {Communications in Mathematical Physics},
  volume  = {353},
  number  = {2},
  pages   = {469--506},
  year    = {2017},
  doi     = {10.1007/s00220-017-2880-3},
  url     = {https://doi.org/10.1007/s00220-017-2880-3}
}

@article{AaronsonEtAl2009,
  author  = {Aaronson, Scott and Beigi, Salman and Drucker, Andrew and Fefferman, Bill and Shor, Peter},
  title   = {The Power of Unentanglement},
  journal = {Theory of Computing},
  volume  = {5},
  number  = {1},
  pages   = {1--42},
  year    = {2009},
  doi     = {10.4086/toc.2009.v005a001},
  url     = {https://doi.org/10.4086/toc.2009.v005a001}
}

@article{GharibianSikoraUpadhyay2013,
  author  = {Gharibian, Sevag and Sikora, Jamie and Upadhyay, Sarvagya},
  title   = {{QMA} Variants with Polynomially Many Provers},
  journal = {Quantum Information \& Computation},
  volume  = {13},
  number  = {1--2},
  pages   = {135--157},
  year    = {2013},
  doi     = {10.26421/QIC13.1-2-8},
  url     = {https://doi.org/10.26421/QIC13.1-2-8}
}

@article{BuhrmanEtAl2001,
  author  = {Buhrman, Harry and Cleve, Richard and Watrous, John and de Wolf, Ronald},
  title   = {Quantum Fingerprinting},
  journal = {Physical Review Letters},
  volume  = {87},
  number  = {16},
  pages   = {167902},
  year    = {2001},
  doi     = {10.1103/PhysRevLett.87.167902},
  url     = {https://doi.org/10.1103/PhysRevLett.87.167902}
}

@article{Hoeffding1963,
  author  = {Hoeffding, Wassily},
  title   = {Probability Inequalities for Sums of Bounded Random Variables},
  journal = {Journal of the American Statistical Association},
  volume  = {58},
  number  = {301},
  pages   = {13--30},
  year    = {1963},
  doi     = {10.1080/01621459.1963.10500830},
  url     = {https://doi.org/10.1080/01621459.1963.10500830}
}

@article{Farouki2012,
  author  = {Farouki, Rida T.},
  title   = {The {Bernstein} Polynomial Basis: A Centennial Retrospective},
  journal = {Computer Aided Geometric Design},
  volume  = {29},
  number  = {6},
  pages   = {379--419},
  year    = {2012},
  doi     = {10.1016/j.cagd.2012.03.001},
  url     = {https://doi.org/10.1016/j.cagd.2012.03.001}
}

@article{BochnakSiciak1971,
  author  = {Bochnak, Jacek and Siciak, J{\'o}zef},
  title   = {Polynomials and Multilinear Mappings in Topological Vector Spaces},
  journal = {Studia Mathematica},
  volume  = {39},
  number  = {1},
  pages   = {59--76},
  year    = {1971},
  doi     = {10.4064/sm-39-1-59-76},
  url     = {https://doi.org/10.4064/sm-39-1-59-76}
}

@article{BlierTapp2012,
  author  = {Blier, Hugue and Tapp, Alain},
  title   = {A Quantum Characterization of {NP}},
  journal = {Computational Complexity},
  volume  = {21},
  number  = {3},
  pages   = {499--510},
  year    = {2012},
  doi     = {10.1007/s00037-011-0016-2},
  url     = {https://doi.org/10.1007/s00037-011-0016-2}
}

@article{HarrowMontanaro2013,
  author  = {Harrow, Aram W. and Montanaro, Ashley},
  title   = {Testing Product States, Quantum {Merlin--Arthur} Games and Tensor Optimisation},
  journal = {Journal of the ACM},
  volume  = {60},
  number  = {1},
  pages   = {3:1--3:43},
  year    = {2013},
  doi     = {10.1145/2432622.2432625},
  url     = {https://doi.org/10.1145/2432622.2432625}
}

@misc{ChenDrucker2010,
  author        = {Chen, Jing and Drucker, Andrew},
  title         = {Short Multi-Prover Quantum Proofs for {SAT} without Entangled Measurements},
  year          = {2010},
  eprint        = {1011.0716},
  archivePrefix = {arXiv},
  primaryClass  = {quant-ph},
  doi           = {10.48550/arXiv.1011.0716},
  url           = {https://arxiv.org/abs/1011.0716}
}

\appendix

\section{Analytic Tools}

\subsection{A Bernoulli Concentration Bound}

The following is the Bernoulli specialization of Hoeffding's inequality \cite{Hoeffding1963}.

\begin{lemma}\label{lem:hoeffding}
Let $Y_1,\ldots,Y_t\in\set{0,1}$ be independent random variables having a common mean $p$. For every $a>0$,
\begin{equation}
\Pr\brak*{
\abs*{\frac1t\sum_{h=1}^tY_h-p}\ge a
}
\le
2e^{-2ta^2}.
\notag
\end{equation}
\end{lemma}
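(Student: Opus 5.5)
We prove the Bernoulli Hoeffding bound by the standard Chernoff exponential-moment method, applied to each tail separately and then combined with a union bound. Write $S:=\sum_{h=1}^t (Y_h-p)$, so the event in question is $\abs{S}\ge ta$. It suffices to show $\Pr[S\ge ta]\le e^{-2ta^2}$. The lower tail $\Pr[S\le -ta]$ follows by applying the same bound to the variables $1-Y_h$, which are independent Bernoulli with common mean $1-p$. Summing the two tails then gives the factor $2$.

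For the upper tail, fix $\lambda>0$ and apply Markov's inequality to $e^{\lambda S}$:
\begin{equation}
\Pr[S\ge ta]\le e^{-\lambda ta}\,\EE\brak*{e^{\lambda S}}=e^{-\lambda ta}\prod_{h=1}^t\EE\brak*{e^{\lambda(Y_h-p)}}.
\notag
\end{equation}
The factorization uses independence. The key step is Hoeffding's lemma for a single centered Bernoulli variable:
\begin{equation}
\EE\brak*{e^{\lambda(Y-p)}}=pe^{\lambda(1-p)}+(1-p)e^{-\lambda p}\le e^{\lambda^2/8}.
\notag
\end{equation}
To prove this, set $L(\lambda):=-\lambda p+\ln\paren*{1-p+pe^\lambda}$. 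Then $L(0)=0$ and $L'(0)=0$. Moreover $L''(\lambda)=q(1-q)\le 1/4$, where $q:=pe^\lambda/(1-p+pe^\lambda)\in[0,1]$. Taylor's theorem with remainder then gives $L(\lambda)\le\lambda^2/8$.

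Combining these estimates yields $\Pr[S\ge ta]\le\exp\paren*{-\lambda ta+t\lambda^2/8}$. Choosing $\lambda=4a$ gives $e^{-2ta^2}$. If $p\in\set{0,1}$, then $S=0$ almost surely and the bound is trivial, so the computation of $L''$ only needs care when $0<p<1$. Even there the formula holds uniformly, so no real case split is needed.

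The only step requiring care is the one-variable cumulant bound $L''\le 1/4$. It is a direct computation, since $L''$ is the variance of a Bernoulli variable with parameter $q$. Everything else in the argument is routine.
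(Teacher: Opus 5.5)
Your proposal is correct and matches the paper's proof essentially step for step. The paper likewise bounds the log-moment generating function $\phi$ of $Y-p$ using $\phi(0)=\phi'(0)=0$ and $\phi''\le 1/4$, where $\phi''$ is the variance of the exponentially tilted Bernoulli variable (your explicit $q(1-q)$ is the same fact). It then applies Markov's inequality with independence, takes $\lambda=4a$, handles the lower tail through the variables $1-Y_h$, and adds the two tails.
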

\begin{proof}
First, let $Y\in\set{0,1}$ have mean $p$, and define
\begin{equation}
\phi(\lambda)
:=
\ln\EE\brak*{e^{\lambda(Y-p)}}.
\notag
\end{equation}
Under the exponentially tilted distribution, $Y$ remains Bernoulli. Consequently, $\phi''(\lambda)$ is the variance of a Bernoulli random variable and satisfies
\begin{equation}
0\le\phi''(\lambda)\le\frac14.
\notag
\end{equation}
Because $\phi(0)=\phi'(0)=0$, it follows that
\begin{equation}
\EE\brak*{e^{\lambda(Y-p)}}
\le
e^{\lambda^2/8}
\notag
\end{equation}
for every $\lambda\in\R$.

For $\lambda>0$, Markov's inequality and independence give
\begin{align}
\Pr\brak*{\frac1t\sum_{h=1}^tY_h-p\ge a}
&\le
e^{-\lambda ta}
\prod_{h=1}^t
\EE\brak*{e^{\lambda(Y_h-p)}}
\notag \\
&\le
\exp\paren*{-\lambda ta+\frac{t\lambda^2}{8}}.
\notag
\end{align}
Taking $\lambda=4a$ gives $e^{-2ta^2}$. Applying the same argument to the Bernoulli variables $1-Y_h$ gives the lower-tail bound, and adding the two tail probabilities proves the lemma.
\end{proof}

\subsection{Bernstein Polynomials}

We record the derivative identity for the Bernstein basis
\cite{Farouki2012}. For a vector $\ol a=(a_0,\ldots,a_t)$, define its forward difference by
\begin{equation}
(\Delta\ol a)_\ell:=a_{\ell+1}-a_\ell,
\notag
\end{equation}
and let $\Delta^j$ denote the $j$-fold iterate.

For $j\in\set{0,\ldots,t}$, write
\[
(t)_j:=t(t-1)\cdots(t-j+1)=\frac{t!}{(t-j)!},
\qquad
(t)_0:=1,
\]
for the falling factorial.

\begin{lemma}\label{lem:bernstein-derivatives}
Let $t\in\N_{>0}$, let $\ol a=(a_0,\ldots,a_t)\in[0,1]^{t+1}$, and define
\begin{equation}
Q_{\ol a}(p)
:=
\sum_{\ell=0}^t
a_\ell\binom t\ell p^\ell(1-p)^{t-\ell}.
\notag
\end{equation}
For every $j\in\set{0,\ldots,t}$,
\begin{equation}
Q_{\ol a}^{(j)}(p)
=
(t)_j
\sum_{\ell=0}^{t-j}
(\Delta^j\ol a)_\ell
\binom{t-j}{\ell}
p^\ell(1-p)^{t-j-\ell}.
\notag
\end{equation}
Moreover, for every $j\in\set{1,\ldots,t}$ and $p\in[0,1]$,
\begin{equation}
\abs{Q_{\ol a}^{(j)}(p)}
\le
2^{j-1}(t)_j.
\notag
\end{equation}
\end{lemma}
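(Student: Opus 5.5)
The derivative identity comes first, by induction on $j$; the sup bound then follows from it. The case $j=0$ is the definition of $Q_{\ol a}$. For the inductive step, I will differentiate a single Bernstein basis polynomial $B_{\ell}^{s}(p):=\binom{s}{\ell}p^\ell(1-p)^{s-\ell}$. This gives the standard relation
\begin{equation}
\frac{\mathrm d}{\mathrm dp}B_\ell^{s}(p)=s\paren*{B_{\ell-1}^{s-1}(p)-B_\ell^{s-1}(p)},
\notag
\end{equation}
with the convention $B_{-1}^{s-1}=B_{s}^{s-1}=0$. It follows from $\ell\binom s\ell=s\binom{s-1}{\ell-1}$ and $(s-\ell)\binom s\ell=s\binom{s-1}{\ell}$.

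Now suppose $Q_{\ol a}^{(j)}=(t)_j\sum_{\ell=0}^{t-j}b_\ell B_\ell^{t-j}$ with $\ol b=\Delta^j\ol a$. Differentiating once more and applying the relation with $s=t-j$ gives
\begin{equation}
Q_{\ol a}^{(j+1)}=(t)_j(t-j)\sum_{\ell=0}^{t-j-1}(b_{\ell+1}-b_\ell)B_\ell^{t-j-1}.
\notag
\end{equation}
To get this form, I reindex the shifted sum and note that the boundary terms vanish. Since $(t)_j(t-j)=(t)_{j+1}$ and $b_{\ell+1}-b_\ell=(\Delta^{j+1}\ol a)_\ell$, this is exactly the claimed formula at level $j+1$.

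For the sup bound, the Bernstein basis $\set{B_\ell^{t-j}(p)}_\ell$ is nonnegative and sums to $(p+1-p)^{t-j}=1$ on $[0,1]$. So $Q_{\ol a}^{(j)}(p)/(t)_j$ is a convex combination of the entries of $\Delta^j\ol a$. It therefore suffices to show $\abs{(\Delta^j\ol a)_\ell}\le 2^{j-1}$ for $j\ge1$.

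Expanding the iterated difference gives
\begin{equation}
(\Delta^j\ol a)_\ell=\sum_{k=0}^j(-1)^{j-k}\binom jk a_{\ell+k}.
\notag
\end{equation}
Split the indices $k$ by the sign $(-1)^{j-k}$. For $j\ge1$, the positive and the negative coefficients each have total mass $2^{j-1}$, since the even and odd binomial sums are equal. The entries satisfy $a_{\ell+k}\in[0,1]$, so the positive part lies in $[0,2^{j-1}]$ and so does the negative part. Their difference therefore lies in $[-2^{j-1},2^{j-1}]$, which is the required bound.

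The only step needing care is the index bookkeeping at the boundary in the induction, specifically checking that the terms with $\ell=0$ and $\ell=t-j$ telescope correctly. This is routine, and I do not expect any real obstacle in this lemma.
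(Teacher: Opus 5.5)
Your proposal is correct and follows essentially the same route as the paper's proof. Both use the basis relation $\frac{\mathrm d}{\mathrm dp}B_\ell^{s}=s\paren*{B_{\ell-1}^{s-1}-B_\ell^{s-1}}$ iterated to get the derivative identity, and both bound $\abs{(\Delta^j\ol a)_\ell}\le2^{j-1}$ by splitting the alternating binomial sum into two parts of mass $2^{j-1}$ and then using that the Bernstein basis is a partition of unity on $[0,1]$; you simply write out the induction and boundary reindexing that the paper compresses into ``iteration.''
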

\begin{proof}
Let
\begin{equation}
b_{t,\ell}(p):=\binom t\ell p^\ell(1-p)^{t-\ell},
\notag
\end{equation}
where a basis polynomial having an index outside
$\set{0,\ldots,t}$ is interpreted as zero. Direct differentiation gives
\begin{equation}
\frac{\mathrm d}{\mathrm dp}b_{t,\ell}(p)
=
t\paren[\big]{b_{t-1,\ell-1}(p)-b_{t-1,\ell}(p)}.
\notag
\end{equation}
Multiplying by $a_\ell$, summing, and reindexing yields
\begin{equation}
Q_{\ol a}'(p)
=
t\sum_{\ell=0}^{t-1}
(\Delta\ol a)_\ell b_{t-1,\ell}(p).
\notag
\end{equation}
Iteration proves the derivative identity.

The explicit finite-difference formula is
\begin{equation}
(\Delta^j\ol a)_\ell
=
\sum_{k=0}^j
(-1)^{j-k}\binom jk a_{\ell+k}.
\notag
\end{equation}
For $j\ge1$, the positive and negative binomial coefficients in this alternating sum each have total mass $2^{j-1}$. Since every
$a_{\ell+k}\in[0,1]$,
\begin{equation}
\abs{(\Delta^j\ol a)_\ell}
\le
2^{j-1}.
\notag
\end{equation}
The Bernstein basis polynomials are non-negative and sum to one on $[0,1]$, which proves the derivative bound.
\end{proof}

\subsection{Polarization and Elementary Estimates}

The following lemma is basically the Fourier form of the classical polarization principle that recovers multilinear map from homogenous polynomial \cite{BochnakSiciak1971}.

\begin{lemma}\label{lem:torus-polarization}
Let $\mcV$ be a finite-dimensional complex vector space, let
$k\in\N_{>0}$, and let $B:\mcV^k\to\C$ be a symmetric multilinear form. For
$v_1,\ldots,v_k\in\mcV$ and
$\ol\theta=(\theta_1,\ldots,\theta_k)\in[0,2\pi]^k$, define
\begin{equation}
w_{\ol\theta}
:=
\sum_{r=1}^k e^{i\theta_r}v_r.
\notag
\end{equation}
Then
\begin{equation}
B(v_1,\ldots,v_k)
=
\frac1{k!(2\pi)^k}
\int_{[0,2\pi]^k}
e^{-i(\theta_1+\cdots+\theta_k)}
B(w_{\ol\theta},\ldots,w_{\ol\theta})
\,\mathrm d\ol\theta.
\notag
\end{equation}
In particular, if $B(w,\ldots,w)=0$ for every $w\in\mcV$, then $B=0$.
\end{lemma}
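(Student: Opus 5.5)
Expanding $B$ multilinearly at the point $w_{\ol\theta}$ and then integrating over the torus, so that orthogonality of characters keeps only the terms using each $v_r$ exactly once, gives the identity directly. By multilinearity,
\begin{equation}
B(w_{\ol\theta},\ldots,w_{\ol\theta})
=
\sum_{(r_1,\ldots,r_k)\in[k]^k}
e^{i(\theta_{r_1}+\cdots+\theta_{r_k})}
B(v_{r_1},\ldots,v_{r_k}).
\notag
\end{equation}
Group the terms by the multiplicity vector $(c_1,\ldots,c_k)$, where $c_s$ counts how many indices equal $s$. The phase of such a term is $e^{i(c_1\theta_1+\cdots+c_k\theta_k)}$.

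Multiply by $e^{-i(\theta_1+\cdots+\theta_k)}$ and integrate over $[0,2\pi]^k$. The integral factors coordinatewise, and $\frac1{2\pi}\int_0^{2\pi}e^{i(c_s-1)\theta_s}\,\mathrm d\theta_s$ equals $1$ if $c_s=1$ and $0$ otherwise. Since $\sum_s c_s=k$, the surviving tuples are exactly those with every $c_s=1$, that is, the permutations $(r_1,\ldots,r_k)=(\sigma(1),\ldots,\sigma(k))$ with $\sigma\in\mfS_k$.

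By symmetry of $B$, each of these $k!$ terms equals $B(v_1,\ldots,v_k)$. So the normalized integral equals $k!\,B(v_1,\ldots,v_k)$, and dividing by $k!$ gives the stated formula. For the final assertion, if the diagonal restriction vanishes identically, the integrand is identically zero, so $B(v_1,\ldots,v_k)=0$ for all arguments.

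The only step needing care is the multiplicity bookkeeping, specifically that the vanishing of the off-diagonal Fourier modes kills every tuple with a repeated index. The integral is finite-dimensional and the integrand is a trigonometric polynomial, so interchanging the finite sum and the integral needs no justification.
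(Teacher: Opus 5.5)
Your proposal is correct and follows essentially the same route as the paper's proof. Both expand $B(w_{\ol\theta},\ldots,w_{\ol\theta})$ multilinearly and use orthogonality of the characters to retain exactly the $k!$ permutation tuples. Symmetry then makes each surviving term equal to $B(v_1,\ldots,v_k)$, and the vanishing consequence follows immediately.
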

\begin{proof}
By multilinearity,
\begin{align}
&\int_{[0,2\pi]^k}
e^{-i(\theta_1+\cdots+\theta_k)}
B(w_{\ol\theta},\ldots,w_{\ol\theta})
\,\mathrm d\ol\theta
\notag \\
&\quad =
\sum_{\ol r\in[k]^k}
B(v_{r_1},\ldots,v_{r_k})
\int_{[0,2\pi]^k}
\exp\paren*{
-i\sum_{h=1}^k\theta_h
+i\sum_{h=1}^k\theta_{r_h}
}
\,\mathrm d\ol\theta.
\notag
\end{align}
The integral vanishes unless every index in $[k]$ appears exactly once in $\ol r$. The $k!$ surviving tuples are the permutations of
$(1,\ldots,k)$, every surviving integral equals $(2\pi)^k$, and symmetry makes every surviving multilinear value equal to
$B(v_1,\ldots,v_k)$. Dividing by $k!(2\pi)^k$ proves the identity and its consequence.
\end{proof}

\begin{samepage}
\begin{lemma}\label{lem:elementary-estimates}
For integers $1\le j\le t$,
\begin{equation}
j!\ge\paren*{\frac je}^j
\qquad\text{and}\qquad
\binom tj\le\paren*{\frac{et}{j}}^j.
\notag
\end{equation}
For every $\rho\in[0,\frac12]$,
\begin{equation}
\sum_{j=2}^{\infty}\rho^j
=
\frac{\rho^2}{1-\rho}
\le
2\rho^2.
\notag
\end{equation}
\end{lemma}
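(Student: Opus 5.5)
The plan is to prove the three elementary facts separately, each by a standard one-line argument.

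For the factorial bound $j!\ge(j/e)^j$, I will use the exponential series. Since every term of $e^j=\sum_{r\ge0}j^r/r!$ is nonnegative, the single term with $r=j$ gives $e^j\ge j^j/j!$. Rearranging yields the claim for every integer $j\ge1$.

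For the binomial bound, I will start from the identity
\[
\binom tj=\frac{(t)_j}{j!}.
\]
The falling factorial satisfies $(t)_j\le t^j$, and the factorial bound just proved gives $1/j!\le(e/j)^j$. Multiplying these two inequalities gives $\binom tj\le t^j(e/j)^j=(et/j)^j$, valid for $1\le j\le t$.

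For the geometric estimate, fix $\rho\in[0,\tfrac12]$. The series $\sum_{j\ge2}\rho^j$ converges because $\rho<1$, and its sum is $\rho^2/(1-\rho)$. Since $1-\rho\ge\tfrac12$, we get $\rho^2/(1-\rho)\le2\rho^2$.

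None of these steps presents a real obstacle. The only point needing care is that the factorial bound is used with the correct direction inside the binomial estimate, namely as an upper bound on $1/j!$.
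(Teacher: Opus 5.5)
Your proposal is correct and follows the paper's proof for the binomial estimate, which also goes through $\binom tj\le t^j/j!$ and the factorial bound, and for the geometric series. The only difference is the factorial bound: you take the single term $j^j/j!$ of the series for $e^j$, whereas the paper compares $\sum_{r=1}^j\ln r$ with $\int_1^j\ln x\,\mathrm{d}x=j\ln j-j+1$, which gives the marginally stronger $j!\ge e\,(j/e)^j$; either version suffices for every use of the lemma in the paper.
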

\end{samepage}
\begin{proof}
We have
\begin{equation}
\ln(j!)
=
\sum_{r=1}^j\ln r
\ge
\int_1^j\ln x\,\mathrm dx
=
j\ln j-j+1
\ge
j\ln j-j.
\notag
\end{equation}
Exponentiating proves the factorial bound, and
\begin{equation}
\binom tj
\le
\frac{t^j}{j!}
\le
\paren*{\frac{et}{j}}^j.
\notag
\end{equation}
The remaining assertion is the sum of a geometric series.
\end{proof}

\section{The Random-Transposition Test}

For $1\le a<b\le N$, let $F_{a,b}$ swap tensor factors $a$ and $b$. Recall that the rejection effect of the uniformly random-pair SWAP test is
\begin{equation}
\rmH_{\mathrm{asym}}
=
\frac1{\binom N2}
\sum_{1\le a<b\le N}
\frac{I-F_{a,b}}2.
\notag
\end{equation}

\begin{theorem}\label{thm:random-transposition-gap}
For every finite-dimensional Hilbert space $\mcH$ and every
$N\in\N_{\ge2}$,
\begin{equation}
\rmH_{\mathrm{asym}}
\succeq
\frac1{N-1}
\paren[\big]{I-\Pi_{\mathrm{sym}}^{(N)}}.
\notag
\end{equation}
In particular,
\begin{equation}
\ker(\rmH_{\mathrm{asym}})
=
\bigvee^N\!(\mcH).
\notag
\end{equation}
\end{theorem}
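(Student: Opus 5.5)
Rewrite the random-pair SWAP effect in terms of the class sum of transpositions and then use Schur--Weyl duality. Let $\mathcal T=\sum_{1\le a<b\le N}F_{a,b}$. Then
\begin{equation}
\rmH_{\mathrm{asym}}=\frac12\paren*{I-\frac{1}{\binom N2}\mathcal T}.
\notag
\end{equation}
The element $\sum_{a<b}(a\,b)$ lies in the center of the group algebra $\C[\mfS_N]$, since it is the sum over a conjugacy class. By Schur--Weyl duality, $\mcH^{\otimes N}\cong\bigoplus_{\lambda}\mathcal Q_\lambda\otimes\mathcal P_\lambda$, where $\lambda$ ranges over partitions of $N$ with at most $\dim\mcH$ rows. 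On each isotypic component, $\mathcal T$ acts as the scalar $\chi_\lambda((1\,2))\binom N2/\dim\mathcal P_\lambda$, which is the content sum
\begin{equation}
c(\lambda)=\sum_{(i,j)\in\lambda}(j-i).
\notag
\end{equation}
This is the standard Frobenius/Jucys--Murphy formula: $\mathcal T$ equals the sum of the Jucys--Murphy elements, and these act diagonally on Young's basis with the contents as eigenvalues. I will quote this rather than rederive it.

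The symmetric subspace is exactly the isotypic component of $\lambda=(N)$, where every $F_{a,b}$ acts as the identity and $c((N))=\binom N2$. Hence $\rmH_{\mathrm{asym}}$ vanishes there. Since $\rmH_{\mathrm{asym}}$ is a function of the central element $\mathcal T$, it commutes with $\Pi_{\mathrm{sym}}^{(N)}$. It therefore suffices to show that for every $\lambda\ne(N)$,
\begin{equation}
\frac12\paren*{1-\frac{c(\lambda)}{\binom N2}}\ge\frac1{N-1}.
\notag
\end{equation}
Multiplying out, this is equivalent to $c(\lambda)\le\binom N2-N=\binom{N-1}2-1$.

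The main combinatorial step is to show that among partitions $\lambda\ne(N)$ of $N$ the content sum is maximized by $\lambda=(N-1,1)$. For that shape,
\begin{equation}
c((N-1,1))=\binom{N-1}{2}+(0-1)=\binom{N-1}2-1,
\notag
\end{equation}
which is exactly the required value. To prove maximality, I will use the fact that moving a box from a lower row to the end of a higher row (keeping a valid partition) strictly increases the content, because the moved box's column index increases and its row index decreases. This is dominance monotonicity of $c$. Every $\lambda\ne(N)$ is dominated by $(N-1,1)$, and I will chain such moves from $\lambda$ up to $(N-1,1)$ while staying below $(N)$. This chaining is the one step needing care. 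Alternatively, I can argue directly: if $\lambda_1\le N-1$, then every box outside the first row has content at most $0$ shifted by its row index, which yields the bound after summation. I expect to present the direct argument, since it is short.

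The kernel statement then follows. The displayed inequality shows $\ker(\rmH_{\mathrm{asym}})\subseteq\bigvee^N(\mcH)$, and the reverse inclusion holds because each $F_{a,b}$ fixes symmetric vectors. For $N=2$ no representation theory is needed: the claim is the identity $(I-F)/2=I-\Pi_{\mathrm{sym}}^{(2)}$.
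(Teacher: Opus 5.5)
Your proposal is correct and is essentially the paper's proof. It uses the same rewriting $\rmH_{\mathrm{asym}}=\frac12\bigl(I-\rmC_N/\binom N2\bigr)$, the Jucys--Murphy content scalar on each isotypic component, the identification of $\bigvee^N(\mcH)$ with the $\lambda=(N)$ component, and the box-moving bound $c_\lambda\le c_{(N-1,1)}=\binom N2-N$. The paper settles your chaining concern by always moving the last box of the lowest non-empty row $r\ge2$ to the end of the first row, which gains $\lambda_1-\lambda_r+r>0$, until only one box remains below the first row.

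Keep that argument rather than the ``direct'' one you sketch, which is too loose as worded: a box below the first row can have positive content, e.g.\ box $(2,3)$ of $(3,3)$. If you want a direct proof, set $k:=N-\lambda_1\ge1$ and use $c_\lambda=\sum_r\bigl[\binom{\lambda_r}2-(r-1)\lambda_r\bigr]\le\binom{N-k}2+\binom k2-k=\binom N2-N-(k-1)(N-k)$.
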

\begin{proof}
Define the central transposition sum
\begin{equation}
\rmC_N
:=
\sum_{1\le a<b\le N}F_{a,b}.
\notag
\end{equation}
The irreducible complex representations of $\mfS_N$ are indexed by partitions $\lambda\vdash N$. In the Young basis, the Jucys--Murphy element
\begin{equation}
\rmJ_k:=\sum_{a<k}(a,k)
\notag
\end{equation}
acts diagonally with eigenvalue equal to the content $s-r$ of the box $(r,s)$ containing $k$. Since the sum of all transpositions is
$\sum_{k=1}^N\rmJ_k$, it acts on the irreducible representation indexed by $\lambda$ as the scalar
\begin{align}
c_\lambda
&:=
\sum_{(r,s)\in\lambda}(s-r)
\notag \\
&=
\frac12
\sum_{r\ge1}
\lambda_r\paren[\big]{\lambda_r-2r+1}.
\notag
\end{align}

For the one-row partition $(N)$,
\begin{equation}
c_{(N)}=\binom N2.
\notag
\end{equation}
This is the trivial representation, whose isotypic subspace in the tensor-permutation representation is precisely
$\bigvee^N\!(\mcH)$.

Now let $\lambda\ne(N)$. If $\lambda$ has at least two boxes below its first row, move the final box of the lowest non-empty row $r\ge2$ to the end of the first row. This preserves the partition inequalities and increases the content sum by
\begin{equation}
\lambda_1-(\lambda_r-r)
=
\lambda_1-\lambda_r+r
>
0.
\notag
\end{equation}
Repeating this operation leaves the partition $(N-1,1)$. Therefore,
\begin{align}
c_\lambda
&\le
c_{(N-1,1)}
\notag \\
&=
\frac{(N-1)(N-2)}2-1
=
\binom N2-N.
\notag
\end{align}

On the isotypic component indexed by $\lambda$,
\begin{equation}
\rmH_{\mathrm{asym}}
=
\frac12
\paren*{
I-\frac{\rmC_N}{\binom N2}
}.
\notag
\end{equation}
It vanishes on the trivial component. On every non-trivial component, its eigenvalue is at least
\begin{equation}
\frac12
\paren*{
1-\frac{\binom N2-N}{\binom N2}
}
=
\frac1{N-1}.
\notag
\end{equation}
This proves the operator inequality and the assertion about the kernel.
\end{proof}

\section{Finite Sampling}

For probability distributions $\mu,\nu$ on a finite set $\Omega$, write
\begin{equation}
d_{\mathrm{TV}}(\mu,\nu)
:=
\frac12\sum_{\omega\in\Omega}
\abs{\mu(\omega)-\nu(\omega)}.
\notag
\end{equation}

\begin{lemma}\label{lem:finite-sampling}
Let $D:\N\to\N_{>0}$ be polynomially bounded and polynomial-time computable, and let an ideal polynomial-time verifier on inputs of length $n$ make at most $D(n)$ classical random choices. Conditionally on the preceding choices, suppose that every choice $X_h$ satisfies
\begin{equation}
X_h
\sim
\operatorname{Ber}(p_h),
\quad
p_h\in\Q\cap[0,1],
\qquad\text{or}\qquad
X_h
\sim
\operatorname{Unif}(\Omega_h).
\notag
\end{equation}
Suppose that $p_h$ and $|\Omega_h|$ are computable in polynomial time from the input and preceding choices and have polynomial bit length, and that a bijection
$\operatorname{unrank}_h:[|\Omega_h|]\to\Omega_h$ is evaluable in polynomial time. Suppose also that every choice-dependent quantum circuit has a polynomial-size description over $\mcG$ generated in polynomial time. Then, for every polynomial-time computable rational $0<\xi\le1$ of polynomial bit length, there is a polynomial-time uniform circuit whose acceptance probability differs from the ideal value by at most $\xi$ on every witness.

Unordered pairs, $t$-element subsets of $[N]$, and ordered injections $[m]\hookrightarrow[N]$ satisfy the unranking hypothesis for the parameters in \Cref{sec:puresuper-collapse,sec:bellpure-collapse}.
\end{lemma}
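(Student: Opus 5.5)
The plan is to build the circuit by replacing every classical random choice of the ideal verifier with a coherent sampling subroutine driven by fresh uniform random bits. Each subroutine uses only Hadamards, which are exact, followed by reversible classical arithmetic, and is allowed to abort with a small failure probability. The implemented verifier then equals a mixture. With probability $1-\zeta$ every choice is sampled exactly from its ideal conditional law. With probability at most $\zeta$ some choice fails, and in that case we adopt any fixed default (say, accept). The acceptance probability on any witness $\rho$ is then $(1-\zeta)\,\mathrm{acc}_{\mathrm{ideal}}(\rho)+\zeta\cdot(\text{something in }[0,1])$, so it differs from the ideal value by at most $\zeta$, uniformly over witnesses. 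This holds because both branches are linear in $\rho$ and the failure event is independent of the witness. It therefore suffices to make the total failure probability at most $\xi$.

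\textbf{Step 1: a single uniform choice.} To sample $X_h\sim\operatorname{Unif}(\Omega_h)$ with $K:=|\Omega_h|$, let $L:=\lceil\log_2 K\rceil$ and draw $L$ uniform bits, giving an integer $Z\in\{0,\dots,2^L-1\}$. If $Z<K$, output $\operatorname{unrank}_h(Z+1)$; otherwise retry. After $r$ independent attempts the failure probability is at most $2^{-r}$, since each attempt succeeds with probability at least $1/2$. Conditional on success the output is exactly uniform.

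\textbf{Step 2: a single Bernoulli choice.} To sample $X_h\sim\operatorname{Ber}(p_h)$, write $p_h=P/Q$ in lowest terms. Draw a uniform integer in $[Q]$ by Step 1 and set $X_h=1$ iff it is at most $P$. This is again exact conditional on success.

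\textbf{Step 3: combining the choices.} Take $r:=\lceil\log_2(D(n)/\xi)\rceil+1$ attempts per choice. A union bound over the at most $D(n)$ choices gives total failure probability at most $D(n)2^{-r}\le\xi$. The parameter $r$ is polynomial because $\xi$ has polynomial bit length.

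\textbf{Step 4: the choice-dependent quantum part.} Each choice can be treated as a classical value that has been measured, or equivalently kept in a register that controls the later circuit. Arthur runs the polynomial-time generator on the sampled choices to obtain a description of the circuit over $\mcG$. He pads it to a common polynomial length and applies it with a universal program-controlled circuit built from controlled versions of the gates in $\mcG$. Because the random bits are measured, or never interfere again, the output is exactly the mixture described above.

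\textbf{Step 5: the unranking hypothesis.} Pairs are handled by the colexicographic bijection $\{a<b\}\mapsto\binom{b-1}{2}+a$. For $t$-subsets, use the combinatorial number system: greedily find the largest $c_t$ with $\binom{c_t}{t}\le Z$, subtract, and recurse. This takes polynomial time because the binomials have $\le t\log_2 N$ bits. Ordered injections are handled by the mixed-radix factorial representation, with digits $d_j\in[N-j+1]$ selecting the $d_j$-th unused element.

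The only delicate point is the claim that the error bound is witness-independent. This requires the failure event to be generated by the verifier's own random bits before any measurement of the witness, and Steps 1 to 3 ensure exactly that. All remaining steps are routine bookkeeping.
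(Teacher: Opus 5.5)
Your proof is correct in substance and uses the same overall accounting as the paper: each choice contributes error at most $\xi/D(n)$, and these are summed over the at most $D(n)$ choices. The difference is the sampling primitive.

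\textbf{Comparison with the paper.} The paper never aborts. It draws $\lceil\log_2 K\rceil+\kappa$ bits and reduces modulo $K$, and it replaces $p_h$ by its dyadic truncation. Each step is then only $2^{-\kappa}$-close in total variation, and a hybrid argument bounds the transcript distance by $D(n)2^{-\kappa}\le\xi$. You instead use bounded-retry rejection sampling with a default outcome on failure. This is exact conditional on success at each step and treats $p_h=P/Q$ without truncation. The cost is $r$ times more random bits per choice and explicit failure-handling logic. Your unranking (colex combinatorial number system, factorial base) and the paper's block-comparison unranking are interchangeable.

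\textbf{An inaccurate intermediate claim.} The implemented acceptance probability is not in general of the form $(1-\zeta)\,\mathrm{acc}_{\mathrm{ideal}}(\rho)+\zeta\cdot c$ for a fixed $\zeta$. The failure probability of step $h$ depends on $|\Omega_h|$ or on the denominator $Q$, and these depend on the preceding choices. Conditioning on global success therefore reweights the earlier choices, so the conditional transcript law is not the ideal one. The paper's own application shows this. In the pure-super verifier, the SWAP branch runs one rejection sampler over $\binom N2$, while the count branch runs two, over $[M]$ and $\binom Nt$. Conditioned on no failure, the branch probability is therefore in general no longer exactly $\vartheta$.

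\textbf{The fix.} Use a coupling instead of conditioning. Run the ideal process and let the implemented one copy its samples until the first failure. The two transcripts agree outside the failure event. Each step fails with probability at most $2^{-r}$ given any history, so by a union bound the failure event has probability at most $D(n)2^{-r}$. Hence for every witness the acceptance probabilities differ by at most $D(n)2^{-r}\le\xi$. This is just the paper's hybrid argument with your primitive in place of theirs, treating failure as an extra outcome at per-step distance at most $2^{-r}$. With this one-line change the proof goes through.
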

\begin{proof}
Fix the input, write $D_n:=D(n)$, and set
\begin{equation}
\kappa
:=
\ceil*{\log_2\paren*{\frac{D_n}{\xi}}}.
\notag
\end{equation}
For a uniform choice from a set of size $K$, let
$k:=\ceil*{\log_2K}+\kappa$, sample $Z$ uniformly from
$\set{0,\ldots,2^k-1}$, and use $1+(Z\bmod K)$ as the rank. Every rank has either $\floor*{2^k/K}$ or
$\ceil*{2^k/K}$ preimages, so
\begin{equation}
\begin{split}
d_{\mathrm{TV}}\paren*{
\operatorname{Law}\paren*{1+(Z\bmod K)},
\operatorname{Unif}([K])
}
&\le
\frac{K}{2^{k+1}}
\le
2^{-\kappa},
\\
\widetilde p
&:=
2^{-\kappa}\floor*{2^\kappa p},
\qquad
d_{\mathrm{TV}}\paren*{
\operatorname{Ber}(p),
\operatorname{Ber}(\widetilde p)
}
\le
2^{-\kappa}.
\end{split}
\notag
\end{equation}
Here, the second line gives the dyadic approximation for a Bernoulli parameter $p$. Let $\mu$ and $\nu$ be the ideal and implemented transcript distributions. A hybrid argument gives
\begin{equation}
\begin{split}
d_{\mathrm{TV}}(\mu,\nu)
&\le
D_n2^{-\kappa}
\le
\xi,
\\
\abs{
\Pr_\mu\brak{\mathrm{accept}}
-
\Pr_\nu\brak{\mathrm{accept}}
}
&=
\abs{\EE_\mu[f]-\EE_\nu[f]}
\le
d_{\mathrm{TV}}(\mu,\nu)
\le
\xi,
\end{split}
\notag
\end{equation}
where $f$ is the conditional acceptance probability of a fixed witness given the transcript. The sampling arithmetic has polynomial-size reversible implementations. Choice-dependent circuit descriptions are computed reversibly, padded, evaluated by an exact program-controlled circuit over the controlled extension of $\mcG$ specified before \Cref{def:qma}, and uncomputed. Retaining the sampled strings through the accepting measurement keeps distinct transcripts orthogonal.

It remains to verify the unranking assertion. If $a$ is the current subset candidate and $s$ elements remain to be chosen, and if a value has just been fixed in position $j$ of an ordered injection, the corresponding block sizes are
\begin{equation}
\binom{N-a}{s-1}
\qquad\text{and}\qquad
(N-j)_{m-j}
\notag
\end{equation}
respectively. Comparing the rank with consecutive blocks constructs a subset in at most $N$ stages and an ordered injection in $m$ stages. Moreover,
\begin{equation}
\log_2\binom Nt
\le
N,
\qquad
\log_2(N)_m
\le
m\log_2N.
\notag
\end{equation}
Thus, all relevant integers have polynomial bit length. Unordered pairs are the case $t=2$, which proves the assertion and completes the proof.
\end{proof}

\newpage

\end{document}